\setlist{noitemsep}
\setlist{nolistsep}
\definecolor{dg}{cmyk}{0.60,0,0.88,0.27}
\newaliascnt{conjecture}{theorem}			
\newtheorem{conjecture}[conjecture]{Conjecture}    
\newaliascnt{fact}{theorem}				
\newtheorem{fact}[fact]{Fact}  				
\newaliascnt{observation}{theorem}				
\newtheorem{observation}[observation]{Observation}  				
\newaliascnt{myalgorithm}{theorem}				
\newtheorem{myalgorithm}[myalgorithm]{Algorithm}  				
\numberwithin{equation}{section}
\let\c@equation\c@theorem
\providecommand{\exampleautorefname}{Example}
\renewcommand{\sectionautorefname}{Section}
\renewcommand{\subsectionautorefname}{Subsection}
\let\orgautoref\autoref   	
\providecommand{\Autoref}[1]{%
	\def\figureautorefname{Figure}%
	\def\subfigureautorefname{Figure}%
    	\def\exampleautorefname{Example}%
    	\def\sectionautorefname{Section}%
    	\def\subsectionautorefname{Section}%
    	\def\algorithmautorefname{Algorithm}%
    	\def\resultautorefname{Result}%
	\def\factautorefname{Fact}%
    	\orgautoref{#1}%
}
\renewcommand{\autoref}[1]{
    \def\figureautorefname{Fig.}
    \def\subfigureautorefname{Fig.}
    \def\sectionautorefname{Section}
    \def\subsectionautorefname{Section}
    \def\subsubsectionautorefname{Section}
    \def\exampleautorefname{Example}
    \def\resultautorefname{Result}
    \def\factautorefname{Fact}
    \orgautoref{#1}
}
\renewcommand{\autoref}[1]{%
    \def\figureautorefname{Fig.}%
    \def\subfigureautorefname{Fig.}%
    \def\sectionautorefname{Section}%
    \def\subsectionautorefname{Section}%
    \def\subsubsectionautorefname{Section}%
    \def\exampleautorefname{Example}%
    \def\resultautorefname{Result}%
    \def\factautorefname{Fact}%
    \orgautoref{#1}%
}
\newcommand{\specificref}[2]{\hyperref[#2]{#1~\ref*{#2}}}
\newcommand{\sat}{\mbox{{\rm\sc SAT}}}
\newcommand{\set}[1]{\{#1\}}
\renewcommand{\vec}[1]{\boldsymbol{\mathbf{#1}}}
\newcommand{\datarule}{{\,:\!\!-\,}}			
\newcommand{\true}{\texttt{true}\xspace}		
\newcommand{\false}{\texttt{false}\xspace}
\newcommand{\smallsection}[1]{\vspace{5mm}\noindent\textbf{#1.}}	
\newcommand{\introparagraph}[1]{\textbf{#1.}}
\newcommand{\rewrite}{\leadsto}
\newcommand{\rewriteStar}{\stackrel{\star}{\leadsto}}
\newcommand{\enSymb}{\textup{n}}
\newcommand{\exSymb}{\textup{x}}
\newcommand{\en}[1]{{#1}^{\enSymb}}	
\newcommand{\ex}[1]{{#1}^{\exSymb}}
\newcommand{\var}{\textup{\texttt{var}}}      	
\newcommand{\rsp}{\mathtt{RSP}} 
\newcommand{\res}{\mathtt{RES}} 
\newcommand{\rats}{q_{\textrm{rats}}}  
\newcommand{\raxx}{q_{\textrm{r$^x$at$^x$s}}}  
\newcommand{\brats}{q_{\textrm{brats}}}  
\newcommand{\brx}{q_{\textrm{br$^x$ats}}}
\newcommand{\brxxx}{q_{\textrm{br$^x$at$^x$s$^x$}}}
\newcommand{\ptime}{$\mathsf{PTIME}$} 
\newcommand{\NP}{\textup{\textsf{NP}}\xspace}
\newcommand{\PTIME}{\textup{\textsf{PTIME}}\xspace} 
\newcommand{\source}{\mathtt{DP_{source}}} 
\newcommand{\view}{\mathtt{DP_{view}}}
\newcommand{\Tri}{\textup{\textrm{T}}}
\renewcommand{\vec}[1]{\boldsymbol{\mathbf{#1}}}
\renewcommand{\phi}{\varphi}      
\renewcommand{\angle}[1]{ \langle #1 \rangle }
\newcommand{\ov}{\overline}
\newcommand{\fd}{\varphi}
\newcommand{\setfd}{\Phi}
\newcommand{\bigset}[2]{\bigl\{ #1 \,\bigm|\, #2 \bigr\} }
\newcommand{\qLra}{\quad\Leftrightarrow\quad}
\newcommand{\sland}{\;\land\;}
\newcommand{\abs}[1]{ \vert #1 \vert }
\newcommand{\dom}{\textrm{dom}}
\newcounter{myeqn}
\newcommand{\myequation}[2]{\par\refstepcounter{theorem}
\setcounter{myeqn}{\value{theorem}}\vspace{4mm}\label{#1}
\noindent\makebox[\textwidth]{\hfill${\displaystyle #2 }$\hfill
{\mbox{\rm (\ref{#1})}}}
\par\smallskip\vspace{2mm}}
\begin{document}

\title{A Characterization of the Complexity of Resilience and Responsibility for Self-join-free Conjunctive Queries}

\author{
Cibele Freire
\affil{University of Massachusetts, Amherst}
Wolfgang Gatterbauer
\affil{Carnegie Mellon University}
Neil Immerman
\affil{University of Massachusetts, Amherst}
Alexandra Meliou
\affil{University of Massachusetts, Amherst}
}

\begin{abstract}	

Several research thrusts in the area of data management have focused on
understanding how changes in the data affect the output of a view or standing
query. Example applications are explaining query results, propagating updates
through views, and anonymizing datasets. These applications usually rely on
understanding how interventions in a database impact the output of a
query.
An important aspect of this analysis is the problem of \emph{deleting a
minimum number of tuples from the input tables} to make a given Boolean query
false. 
We refer to this problem as ``\emph{the resilience of a query}'' and show its  connections to the well-studied problems of deletion propagation and causal responsibility.
In this paper, we study the complexity of resilience for self-join-free
conjunctive queries, and also make several contributions to previous known results
for the problems of \emph{deletion propagation with source side-effects} and 
\emph{causal responsibility}:
(1)~We define the notion of resilience and provide a complete dichotomy for the class of
\emph{self-join-free conjunctive queries with arbitrary functional
dependencies}; this dichotomy also extends and generalizes previous
tractability results on deletion propagation with source side-effects.
(2)~We formalize the connection between resilience and causal responsibility,
and show that resilience has a larger class of tractable queries than
responsibility. (3)~We identify a mistake in a 
previous dichotomy for the
problem of causal responsibility and offer a revised characterization based on new, simpler, and more intuitive notions.
(4)~Finally, we
extend the dichotomy for causal responsibility in two ways: (a)~we treat cases where the input tables
contain functional dependencies, and (b)~we compute responsibility for a set of tuples specified via wildcards.
\end{abstract}

\maketitle
\section{Introduction}\label{sec:intro}

As data continues to grow in volume, the results of relational queries become
harder to understand, interpret, and debug through manual inspection. Data
management research has recognized this fundamental need to derive
\emph{explanations for query results} and 
explanations for 
surprising observations.
Existing work has defined explanations as predicates in a
query~\cite{DBLP:journals/pvldb/0002M13,SudeepaSuciu14,DBLP:conf/sigmod/ChapmanJ09}, or as modifications to the input
data~\cite{MeliouGMS11,DBLP:journals/pvldb/HuangCDN08,DBLP:journals/pvldb/HerschelHT09}. 
In the
latter category, the metric of \emph{causal responsibility}, first introduced by \citeN{ChocklerH04}, quantifies the contribution of an input tuple to a
particular output. One can then derive explanations by 
\emph{ranking input tuples} using their responsibilities: tuples with high degree of
responsibility are better explanations for a particular query result than tuples with low responsibility~\cite{MeliouGMS11}.

\begin{figure}
\setlength{\arrayrulewidth}{0.6pt}

\newcommand{\cellLeft}[1]{\multicolumn{1}{|>{\columncolor{black!10}[.95\tabcolsep][.95\tabcolsep]}l|}{#1}}
\newcommand{\cellRight}[1]{\multicolumn{1}{>{\columncolor{black!10}[.95\tabcolsep][.95\tabcolsep]}l|}{#1}}

\centering
\subfloat[\mbox{\small{Source-side effects: $ \min |\Gamma|$}}]{\makebox[1.1\width][c]{\includegraphics[scale=0.38]{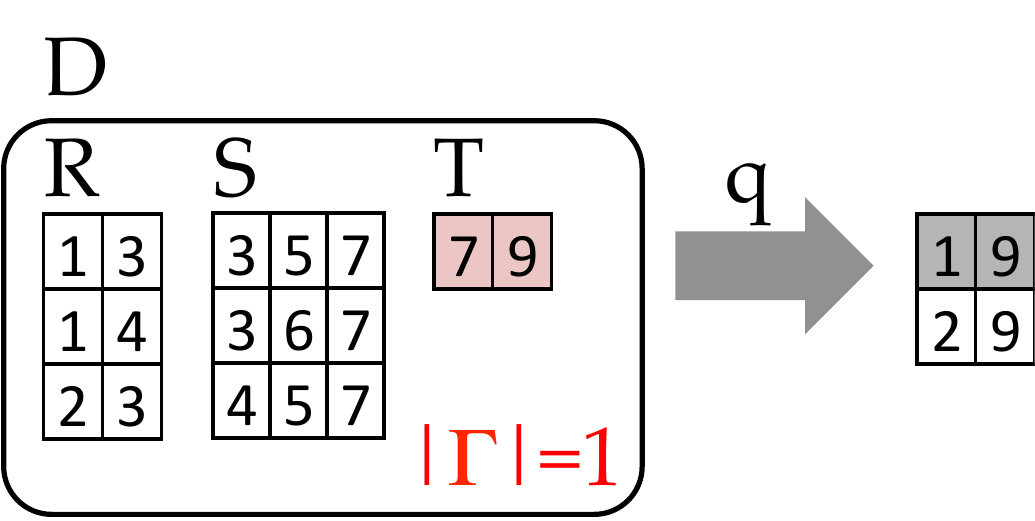}
\label{Fig_SourceSideEffect}}}
\hspace{2cm}
\subfloat[\small{Resilience: $\min |\Gamma|$}]{\makebox[1.1\width][c]{\includegraphics[scale=0.38]{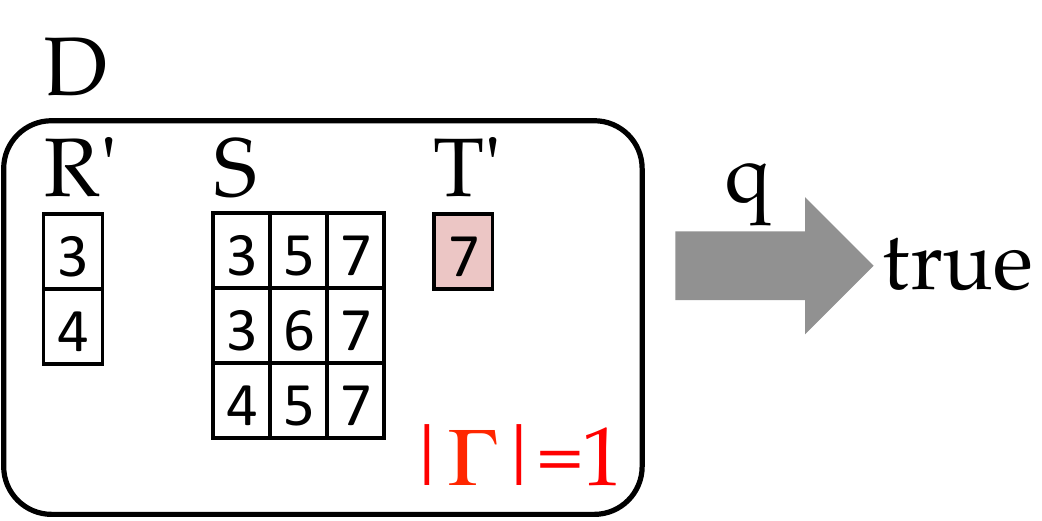}
\label{Fig_Resilience}}}

\subfloat[\small{Responsibility: $\min |\Gamma|$}]{\makebox[1.1\width][c]{\includegraphics[scale=0.38]{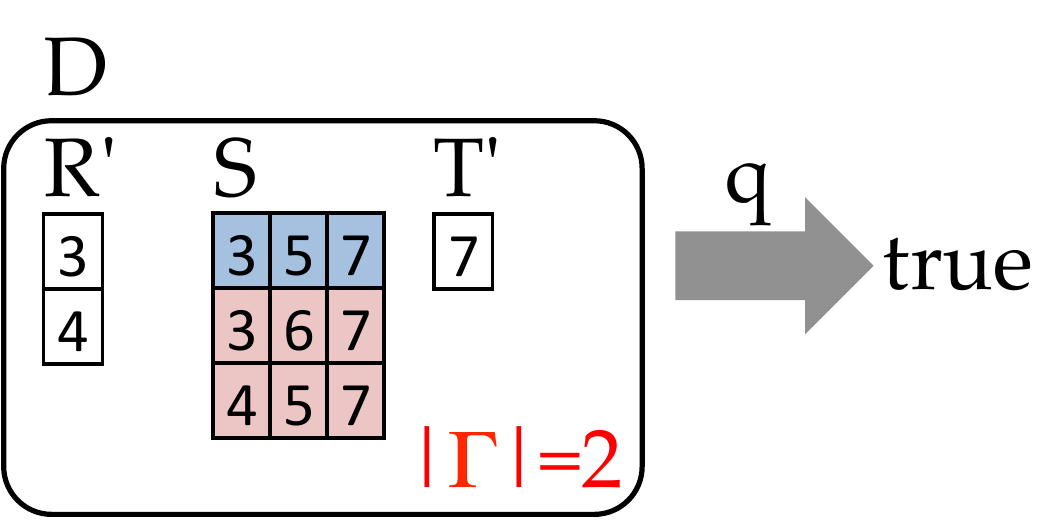}
\label{Fig_CausalityResponsibility}}}
\hspace{2cm}
\subfloat[\small{View-side effects: $\min |\Delta|$}]{\makebox[1.1\width][c]{\includegraphics[scale=0.38]{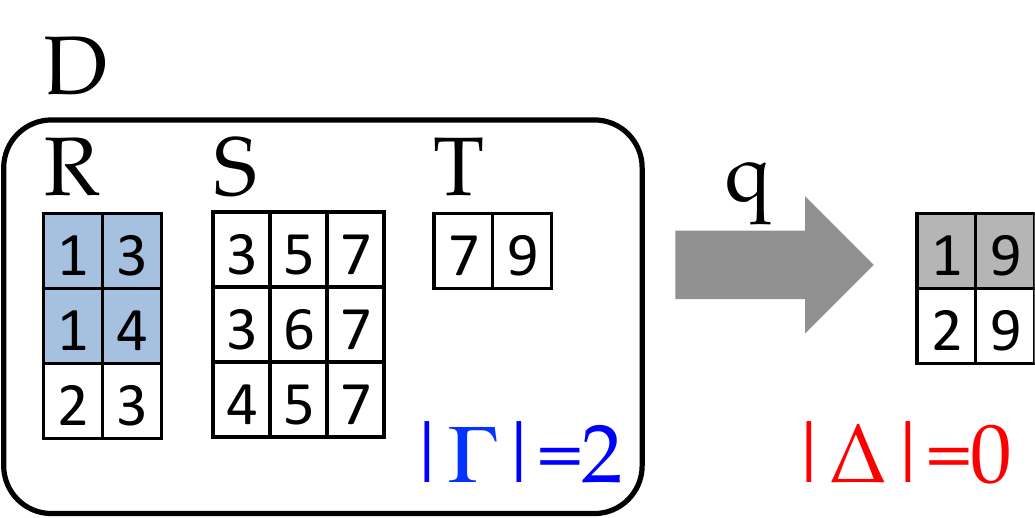}
\label{Fig_ViewSideEffect}}}

\subfloat[\small{Source-side effect problem: prior and our dichotomy results}]{
\setlength{\tabcolsep}{1.1mm}
\renewcommand{\arraystretch}{1.2}
\small
\centering
\vspace{3mm}
\begin{tabular}{ | m{62mm} | c | c | c | c | c|}
    \hline
    \emph{SJ}: Queries with selections and joins
        & \ptime 
        & \multirow{2}{*}{\cite{Buneman:2002}} \\
    \cline{1-2}
    \emph{PJ}: Queries with projections and joins
        & \NP-complete      
        &           \\  
    \hline
    \hline
    \emph{``Key-preserving'' SPJ} queries                       
        & \ptime    
        & \multirow{2}{*}{\cite{Cong12}} \\
    \cline{1-2}
    All other SPJ queries   
        & \NP-complete  
        &       \\
    \hline
    \hline
    \cellLeft{\emph{``Triad-free''} SPJ queries}
        & \cellRight{\ptime}     
        & \cellRight{}  \\
	\hhline{|-|-|>{\arrayrulecolor{black!10}}->{\arrayrulecolor{black}}|}	
	\cellLeft{All other SPJ queries}
        & \cellRight{\NP-complete}    
        & \cellRight{\multirow{-2}{*}{this paper}}       \\
    \hline
    \hline
    \cellLeft{\emph{``FD-induced triad-free''} SPJ queries}
        & \cellRight{\ptime}     
        & \cellRight{}  \\
	\hhline{|-|-|>{\arrayrulecolor{black!10}}->{\arrayrulecolor{black}}|}
    \cellLeft{All other SPJ queries}
        & \cellRight{\NP-complete}
        & \cellRight{\multirow{-2}{*}{this paper}}       \\
    \hline  
\end{tabular}
}

\subfloat[\small{View-side effect problem: prior dichotomy results}]{
\setlength{\tabcolsep}{1.1mm}
\renewcommand{\arraystretch}{1.2}
\small
\centering
\vspace{3mm}
\begin{tabular}{ | m{62mm} | c | c | c | c | c|}
    \hline
    \emph{SJ}: Queries with selections and joins
        & \ptime 
        & \multirow{2}{*}{\cite{Buneman:2002}} \\
    \cline{1-2}
    \emph{PJ}: Queries with projections and joins
        & \NP-complete      
        &           \\  
    \hline
    \hline
    \emph{``Key-preserving'' SPJ} queries                       
        & \ptime    
        & \multirow{2}{*}{\cite{Cong12}} \\
    \cline{1-2}
	All other SPJ queries   
        & \NP-complete  
        &       \\
    \hline
    \hline
    \emph{``Head-dominated'' SPJ} queries                       
        & \ptime    
        & \multirow{2}{*}{\cite{KimelfeldVW12}} \\
    \cline{1-2}
    All other SPJ queries   
        & \NP-complete  
        &       \\
    \hline  
    \hline
    \emph{``Functional head-dominated'' SPJ} queries
        & \ptime    
        & \multirow{2}{*}{\cite{Kimelfeld12}} \\
    \cline{1-2}
    All other SPJ queries   
        & \NP-complete  
        &       \\
    \hline  
\end{tabular}
}

\caption{This paper contains dichotomy results for (a) deletion propagation
with source-side effects, (b) resilience, and (c) responsibility for
causality. Besides others, they imply a complete dichotomy for the source
side-effect problem for the class of \emph{self-join-free conjunctive queries
in the presence of functional dependencies} (e). Thus, this part of our work
is similar in scope to \protect\cite{KimelfeldVW12} and
\protect\cite{Kimelfeld12} for the problem of view-side effects (f). We derive
these results by analyzing a simpler concept: the resilience of Boolean
queries. In addition (not shown in the figure), we provide a correction to a
prior dichotomy result for causal responsibility and then extend it in two
ways: responsibility for tables with functional dependencies and
responsibility for tuples with wildcards, e.g., $S(*,5,7)$. 
}
\label{Fig_SourceViewSideEffectProblems}
\end{figure}

A seemingly unrelated notion, the concept of \emph{deletion propagation with source side-effects}~\cite{Buneman:2002}, seeks a minimum set of tuples in the input tables that should be deleted from the database in order to delete a particular tuple from a query. 
Query results that have a larger set of tuples that need to be deleted are more reliable or more ``robust''  to changes in the input database than others.
This measure of relative importance can provide another type of explanation and allows us to  \emph{rank the output tuples} by their relative robustness.

In this paper, we take a step back and re-examine how particular
interventions (\emph{tuple deletions in the input of a query}) impact
its output. Specifically, we study how ``resilient'' a Boolean query is with
respect to such interventions. \emph{Resilience} identifies the smallest number of
tuples to delete from the input to make the query false. We will show that characterizing the complexity of this problem also allows us to study the complexities of 
\emph{both} deletion propagation with source side-effects and causal responsibility with minor modifications.

\smallsection{Deletion propagation and existing results}
Databases allow users to interact with data through views, which are often
conjunctive queries. Views can be used to simplify complex queries, enforce
access control policies, and preserve data independence for external
applications.
Of particular interest is how deletions in the input data affect the view
(which is a trivial problem), but also how deletions in the view could be
achieved by appropriately chosen deletions in the input data (which is far
less trivial).
Concretely, the problem of \emph{deletion propagation}~\cite{Buneman:2002,Dayal82} seeks a set $\Gamma$ of tuples in the input tables that should be
deleted from the database in order to delete a particular tuple from the view.
Intuitively, this deletion should be achieved with \emph{minimal
side-effects}, 
where side-effects are defined with either of two objectives: 
(a) deletion propagation with \emph{source
side-effects} ($\source$) seeks a minimum set of input tuples $\Gamma$ in order to delete a given
output tuple; whereas (b) deletion propagation with \emph{view side-effects} ($\view$)
seeks a set of input tuples $\Gamma$ that results in a minimum number of output tuple
deletions in the view, other than the tuple of interest~\cite{Buneman:2002}.

\begin{example}[Source \& View side effects]\label{ex:introSourceView}
Consider the query 
\[q(x,u) \datarule R(x,y), S(y,z,w), T(w,u)\]
defining a
view over the database $R,S,T$ shown below. To delete tuple $v_1$
from the resulting view with minimum source side-effects, one only needs to
remove tuple $t_1$ from the database. Therefore, the optimal solution to
$\source$ is $\Gamma = \{t_1\}$ with $|\Gamma|=1$ (see~\autoref{Fig_SourceSideEffect}).

However, the deletion of
$t_1$ also removes $v_2$, which is a view side-effect: $\Delta=\{v_2\}$
with $|\Delta|=1$. The optimal solution to $\view$, which minimizes the
side-effects on the view (set $\Delta$) is the set of input tuples
$\Gamma=\{r_1, r_2\}$: deleting these two tuples removes only $v_1$ from the
view but not $v_2$, and thus has no view-side effects, i.e., $\Delta =
\emptyset$ with $|\Delta| = 0$ (see~\autoref{Fig_ViewSideEffect}).

\begin{center}
    \small
    \renewcommand\tabcolsep{3pt}
    \renewcommand{\arraystretch}{1.2}
    	\centering
    	\begin{tabular}{r|c|c|  c  r|c|c|c|  c  r|c|c|  p{5mm}  r|c|c|	c c}
    		\multicolumn{1}{c}{}
    		&\multicolumn{2}{l}{$R$} & \multicolumn{2}{l}{\hspace{5mm} } 
    		&\multicolumn{3}{l}{$S$} & \multicolumn{2}{l}{\hspace{5mm} } 
    		&\multicolumn{2}{l}{$T$} & \multicolumn{2}{l}{\hspace{12mm} } 
    		&\multicolumn{2}{l}{$q$} \\
    		\cline{2-3}\cline{6-8}\cline{11-12}\cline{15-16}
    		&$X$ &$Y$ 
    			&\multicolumn{2}{c|}{} &$Y$ &$Z$ &$W$ 
    			&\multicolumn{2}{c|}{} &$W$ &$U$ 
    			&\multicolumn{2}{c|}{} &$X$ &$U$\\
    		\cline{2-3}\cline{6-8}\cline{11-12}\cline{15-16}
    		$r_1$	&1	&3   &&$s_1$	&3	&5	&7	&&$t_1$	&7	&9	&&$v_1$&	1	&9\\
    		\cline{11-12}
    		$r_2$	&1	&4   &&$s_2$	&3	&6	&7	&\multicolumn{5}{c}{ }		&$v_2$	&2	&9\\
    		\cline{15-16}	
    		$r_3$	&2	&3   &&$s_3$	&4	&5	&7	\\	
    		\cline{2-3}\cline{6-8}	
    	\end{tabular}
\end{center}

\end{example}

\emph{Known complexity results.}
\citeN{Buneman:2002} showed that both
variants are in general \NP-complete for conjunctive queries containing
projections and joins (PJ), whereas they are in \ptime\ for queries containing
only selections and joins (SJ). Later, \citeN{Cong12} identified a
class of PJ queries, called ``key-preserving,'' for which both problem
variants can be solved in \ptime. According to these two results, the
query from \specificref{Example}{ex:introSourceView} falls into the general class of
\NP-complete queries.

In addition, \citeN{KimelfeldVW12} provided a more refined
dichotomy result for the problem of minimal view side-effects for
self-join-free conjunctive queries (CQs). This dichotomy leads to more
polynomial time cases, as it characterizes the complexity based on a property
of the query structure (using the property of ``\emph{head domination}''),
rather than high-level database operators (e.g., projections and joins). For
example, the query of \specificref{Example}{ex:introSourceView} is not head-dominated,
which means that $\view$ is indeed \NP-complete for that query. Later work has
also extended the dichotomy result to self-join-free CQs with functional
dependencies (FDs)~\cite{Kimelfeld12}.

\smallsection{Causal responsibility and existing results}
The problem of causal
responsibility~\cite{MeliouGMS11} seeks, for a
given query \emph{and a specified input tuple}, a minimum set of other input
tuples $\Gamma$ that, if deleted would make the tuple of interest
``counterfactual,'' i.e., the query would be true with that tuple present,
or false if the tuple was also deleted.
Both problems of resilience and of causal responsibility rely on the notion of minimal interventions in the input database and are thus closely related.
However, we will show that resilience is easier (has lower
complexity) than responsibility, and provide extensive discussion of the
connections among all these related problems.

\begin{example}[Resilience \& Causal responsibility]\label{ex:introResilience}
Consider again the query from \specificref{Example}{ex:introSourceView} and the output
tuple $v_1=(1,9)$. Applying the substitution $[(x,u) / (1,9)]$, i.e., substituting the variables $x$ and $u$ with 1 and 9,
respectively, we get a query $q(1,9) \datarule R(1,y),$ $S(y,z,w), T(w,9)$.
The solution to $\source$ for $q$ and tuple $v_1$ is then equivalent to the
solution of the resilience problem over the Boolean query $q' \datarule R'(y),
S(y,z,w), T'(w)$ over the database $R',S,T'$ with 
$R'(y) \datarule R(1,y)$ and 
$T'(w) \datarule T(w,9)$ shown below. The answer to
the resilience problem for $q'$ is $\Gamma = \{t_1'\}$ with $|\Gamma|=1$:
deleting tuple $t_1'$ makes the query false (also see~\autoref{Fig_Resilience}).
\begin{center}
    \small
    \renewcommand\tabcolsep{3pt}
    \renewcommand{\arraystretch}{1.2}
    	\begin{tabular}{r|c|  c  r|c|c|c|  c  r|c|  p{5mm}  r|c|c|	c c}
    		\multicolumn{1}{c}{}
    		&\multicolumn{1}{l}{$R'$} & \multicolumn{2}{l}{\hspace{5mm} } 
    		&\multicolumn{3}{l}{$S$} & \multicolumn{2}{l}{\hspace{5mm} } 
    		&\multicolumn{1}{l}{$T'$} & \multicolumn{2}{l}{\hspace{20mm} } \\
    		\cline{2-2}\cline{5-7}\cline{10-10}
    		&$Y$ 
    			&\multicolumn{2}{c|}{} &$Y$ &$Z$ &$W$ 
    			&\multicolumn{2}{c|}{} &$W$ \\ 
    		\cline{2-2}\cline{5-7}\cline{10-10}
    		$r_1'$	&3   &&$s_1$	&3	&5	&7	&&$t_1'$	&7	\\
    		\cline{10-10}
    		$r_2'$	&4   &&$s_2$	&3	&6	&7	\\
    		\cline{2-2}	
    			\multicolumn{3}{c}{}  &$s_3$	&4	&5	&7	\\	
    		\cline{5-7}	
    	\end{tabular}
\end{center}

The causal responsibility problem requires a tuple in the lineage of the query
as additional input. For example, the responsibility of tuple $s_1$ in query
$q'$ corresponds to the contingency set $\Gamma = \{s_2,s_3\}$ with $|\Gamma|
= 2$. Deleting these two tuples makes $s_1$ a counterfactual cause for $q'$,
i.e., the query is true if $s_1$ is present or false, otherwise (also see~\autoref{Fig_CausalityResponsibility}).
\end{example}

\emph{Known complexity results.}
\citeN{MeliouGMS11} showed that causality of a given tuple can be
computed in polynomial time for any conjunctive query. Further, that work
presented a dichotomy result for computing causal responsibility for
self-join-free conjunctive queries, based on a characterization of a query
property called \emph{weak linearity}. 
However, in this work, we identify an error in the existing dichotomy which
classified certain hard queries into the polynomial class of queries. In
particular, we found that the existing notion of ``domination'' is not
sufficient to characterize the dichotomy and we provide here a refinement of
domination called ``full domination'' that together with a new concept of
``triads'' solves this issue.

\smallsection{Contributions of our work}
In this paper, we study the problem of minimal interventions with respect to a new notion called
\emph{resilience} of a Boolean query, which is a minimum number of input
tuples that need to be deleted in order to make the query false. A method that
provides a solution to resilience can immediately also provide an answer to
the \emph{deletion propagation with source-side effects problem} by defining a new Boolean query and database, replacing all head variables in the view with constants of the
output tuple. We define our results in terms of ``resilience'' since the
notion of resilience has obvious analogies to universally known minimal set
cover problems.  
At the same time, our complexity results on resilience also allow us to study the problem of \emph{causal responsibility}.
We thus state our contributions
with respect to both deletion propagation and causal responsibility.

\smallskip
\introparagraph{(1) Contributions to deletion propagation}
Our results on resilience imply a refinement for the complexity of
\emph{minimum source side-effects} by defining a novel, yet simple and
intuitive property of the query structure called ``\emph{triads}.''
For the class of self-join-free conjunctive queries, we show that resilience
is \NP-complete if the query contains this structure, and \ptime\ otherwise (\Autoref{sec:resilience}).
Determining whether a query contains a triad can be done very efficiently, in
polynomial time with respect to query complexity.
This implies that $\source$ can always be
solved in \ptime\ for the query of \specificref{Example}{ex:introSourceView}. These results
are analogous to the results of \citeN{KimelfeldVW12} for the view-side effect
problem. In addition, our dichotomy criterion also allows the specification of
``forbidden'' tables (called \emph{exogenous} tables) that do not allow
deletions. This is an extension to the traditional definition of the deletion
propagation problem and affects the complexity of queries in non-obvious ways
(defining a table as exogenous can make both easy queries hard, and hard
queries easy).

Our work also provides a \emph{complete dichotomy result for the class of
self-join-free CQs with Functional Dependencies} (\Autoref{sec:FDs}). These
results are analogous to the results of \citeN{Kimelfeld12} for the view-side
effect problem. At a high-level, we define rewrite steps that are induced by
the functional dependencies, and check the resulting query for the presence of
triads.

In particular, our dichotomy result on the resilience of a Boolean
conjunctive query provides \emph{new tractable solutions to the otherwise hard
minimum hypergraph vertex cover problem}. Our \ptime\ classes for resilience
define families of hypergraphs for which minimum vertex cover is also always
in \ptime. As such, resilience provides an intuitive definition that can
draw analogies to problems even outside the database community.  However, these implications are outside the scope of this paper.

\smallskip
\introparagraph{(2) Contributions to causal responsibility}
We show that responsibility is a more fine-grained notion than resilience,
resulting in higher complexity. In particular, we show query $\rats$ in \autoref{fig:ratsHypergraph} for which resilience is in \ptime\ (\specificref{Cor.}{res(rats) easy}), whereas responsibility is \NP-complete (\specificref{Prop.}{responsibility of rats is hard}).
The benefit of responsibility is
that it allows us to \emph{rank input tuples} based on their impact to a
query, thus making it applicable to settings where this ranking is
important, such as providing explanations and data compression (by compressing
data with small contributions to an output). In \Autoref{sec:outlook}, we
discuss ways to use resilience in these applications, and thus benefit from
its reduced complexity compared to responsibility.

In addition, we found that responsibility is a more subtle concept than we
previously thought. In particular, we identified an error in the existing
dichotomy for responsibility~\cite{MeliouGMS11} which
classified certain hard queries into the polynomial class of queries.
In particular, we found that the existing notion of ``domination'' is not sufficient to characterize the dichotomy. In
\Autoref{responsibility sec}, we provide a refinement of domination called ``full domination'' that helps use solve this issue. 
In addition, our new results provide two significant extensions to the
previous dichotomy: 
(a)~We generalize the notion of responsibility from simple tuples to tuples with wildcards.
(b)~We show that through a process of query
rewrites, our dichotomy results continue to hold in the presence of functional dependencies
over the input relations.

\smallskip
\introparagraph{Outline}  
\Autoref{sec:background} defines all notions mentioned here more formally and
discusses the connections of resilience with deletion propagation and causal
responsibility. \specificref{Sections}{sec:resilience} and~\ref{sec:FDs} contain our
two main technical contributions for the problem of resilience, while
\Autoref{responsibility sec} corrects the dichotomy of responsibility and
extends it to the case of tuples with wildcards and functional dependencies.
\Autoref{sec:relatedWork} reviews additional related work, and
\Autoref{sec:outlook} discusses implications, open problems, and future
directions.

\section{Formal setup and connections}\label{sec:background}
This section introduces our notation, defines
resilience, and formalizes the connections between the problems of resilience, deletion propagation, and causal responsibility. 

\smallskip
\introparagraph{General notations}
We use boldface (e.g., $\vec x = (x_1, \ldots, x_k)$) to denote
tuples or ordered sets.
A \emph{self-join-free conjunctive query} (sj-free CQ) is a first-order formula $q(\vec y) = \exists
\vec x\,(A_1 \wedge \ldots \wedge A_m)$ where 
the variables $\vec x = (x_1, \ldots, x_k)$ 
are called {\em existential variables},
$\vec y = (y_1, \ldots, y_c)$ are called the \emph{head variables} (or free variables),
and each atom $A_i$ represents a relation $R_i(\vec z_i)$ where $\vec z_i \subseteq \vec x \cup \vec y$.\footnote{We assume w.l.o.g.\ that
  $\vec z_i$ is a tuple of only variables without constants. This is so, because for any constant in
  the query, we can first apply a selection on each table and then consider the modified query with
  a column removed (see the transformation from resilience to source side-effects for
  details). }

The term ``self-join-free'' means that 
no relation symbol occurs more than once.
We write $\var(A_j)$ for the set of variables occurring in atom $A_j$.
The database instance is then the union of all tuples in the relations $D = \bigcup_i R_i$.
As usual, we abbreviate the query in Datalog notation by 
$q(\vec y) \datarule A_1, \ldots, A_m$.
For tuple $\vec t $, we write $D \models q[\vec t/\vec y]$ to denote
that $\vec t$ is in the query result of the non-Boolean query $q(\vec y)$ over database $D$.
The set of query results over database $D$ is denoted by $q(\vec y)^D$.

Unless otherwise stated, a \emph{query} in this paper denotes
a sj-free \emph{Boolean} conjunctive query  $q$ (i.e., $\vec y = \emptyset$). 
Because we only have sj-free CQ we do not have two atoms referring to the same relation,
so we may refer to atoms and relations interchangeably.
We write $D \models q$ to denote that the query $q$ evaluates to \true over the database
instance $D$, and $D \not\models q$ to denote that $q$ evaluates to \false. 
We call a valuation of all existential variables that is permitted by $D$ and that makes $q$ \true, a \emph{witness} $\vec w$.\footnote{Notice that our notion of witness slightly differs from the one commonly seen in  provenance literature where a ``witness'' refers to a subset of the input database records that is sufficient to ensure that a given output tuple appears in the result of a query \cite{DBLP:journals/ftdb/CheneyCT09}.} 
The set of witnesses of $D\models \exists \vec x\,(A_1 \wedge \ldots \wedge A_m)$ is the set
$\bigset{\vec w}{D \models (A_1 \wedge \ldots \wedge A_m)[\vec w /\vec x]}$.

A database instance may contain some ``forbidden'' tuples that may not be deleted. Since we are interested in the
data complexity of resilience, we specify \emph{at the query level} which tables contain tuples that
may or may not be deleted.  Those atoms from which tuples may not be deleted are called \emph{exogenous}\footnote{In other words, tuples in these atoms provide context and
  are outside the scope of possible ``interventions'' in the spirit of
  causality~\cite{HalpernPearl:Cause2005}.} and
we write these atoms or relations with a superscript ``$\exSymb$''. 
The other atoms, whose tuples may be deleted,
are called \emph{endogenous}.
We may occasionally attach the superscript ``$\enSymb$'' 
to an atom to emphasize that it is endogenous. Moreover, we can refer to a database
as a partition of its tables into its exogenous and endogenous parts, $D = D^\exSymb \cup D^\enSymb$.

\subsection{Query resilience}  
In this paper, we focus on determining the \emph{resilience} of a query with regard to changes in $D^\enSymb$.
Given $D\models q$, our motivating question is: 
what is the minimum number of tuples to remove in order to make the query false? 
\begin{definition}[Resilience]\label{def: resilience}
Given a query $q$ and database $D$, we say that $(D, k) \in \res(q)$ if and
only if $D \models q$ and there exists some $\Gamma \subseteq D^\enSymb$ such that $D -
\Gamma \not\models q$ and $|\Gamma| \leq k$.
\end{definition}
In other words, $(D,k) \in \res(q)$ means that there is a set of $k$ or fewer tuples in
the endogenous tables of $D$, the removal of which makes the query false. Observe that since $q$ is computable in \ptime, $\res(q)\in\NP$.  We will see that there is a dichotomy for
all sj-free conjunctive queries:  for all such queries $q$, either $\res(q)\in \PTIME$ or $\res(q)$ is \NP-complete (\autoref{resilience dichotomy thm}).
We are naturally interested in the optimization version of this decision problem:
given $q$ and $D$, find the \emph{minimum} $k$ so that $(D,k) \in \res(q)$. A
larger $k$ implies that the query is more ``\emph{resilient}'' and requires
the deletion of more tuples to change the query output. 

In this paper, we focus on Boolean queries, however we can also define the resilience problem for 
non-Boolean queries as follows:

\begin{definition}[Resilience for non-Boolean queries]\label{def: resilience non-boolean}
Given non-Boolean query $q(\vec y)$ and database $D$, we say that $(D, k) \in \res(q(\vec y))$ if and
only if $q(\vec y)^D \neq \emptyset$ and there exists some $\Gamma \subseteq D^\enSymb$ such that $q(\vec y)^{D - \Gamma} = \emptyset$ and $|\Gamma| \leq k$.
\end{definition}

\noindent
It is clear from the definition that we are interested in eliminating all the output tuples from the
query result, and it is easy to see that $\res(q(\vec y)) \equiv \res(q')$, where $q'$ is obtained
by removing all variables $\vec y$ from the head of $q$, turning them 
 into existential variables. 

We can refine this definition to include a target tuple $t$, i.e., instead of deleting \emph{all} output tuples from the query result, we would like 
to delete only \emph{one} output tuple $t$. 
As we saw 	in the introduction, this is the exact definition of the deletion propagation problem. The next subsection will make the correspondence between resilience and deletion propagation with source side-effects precise.

\subsection{Deletion propagation: source side-effects}

Deletion propagation in view updates generally refers to non-Boolean queries
$q(\vec y) \datarule {A}_1, \ldots, {A}_m$.
We next define the
problem~\cite{Buneman:2002,Dayal82} formally in our notation:

\begin{definition}[Source side-effects]
    Given a query $q(\vec y)$, database $D$, and an output tuple $t$, 
we say that $(D, t, k)\in\source(q(\vec y))$ if and
    only if $t \in q(\vec y)^D$ and there exists some $\Gamma \subseteq D$ such that
    $t \not \in q(\vec y)^{D - \Gamma}$ and $|\Gamma| \leq k$.
\end{definition}

It is easy to see that there is a homomorphism between resilience and the
source-side effect variant of deletion propagation. We have illustrated this
correspondence in \specificref{Example}{ex:introResilience} 
and next describe this transformation more formally. 

Given a conjunctive query 
$q(\vec y) \datarule {A}_1, \ldots, {A}_m$ and a tuple $t = \vec c$ in the output $q(\vec y)^D$. 
We first obtain a Boolean query $q'$ by deleting the head variables in $q(\vec y)$. Then we modify the database by applying a filter (selection):
for each relation $R_i(\vec z_i)$ we define a new relation $R_i'(\vec x_i) \datarule R_i(\theta_t(\vec z_i))$ with $\vec x_i$ being the existential variables that occur in $R_i$,
and where the substitution $\theta_{t}: \vec y \rightarrow \vec c$ replaces the former head variables with the corresponding constants from $t$ and keep the existential variables as they are.
For example, $R'(y) \datarule R(1,y)$ in \specificref{Example}{ex:introResilience} 
(see \autoref{Fig_SourceSideEffect} and \autoref{Fig_Resilience}).
This will lead to a new database $D' = \bigcup_i R_i'$
and a new Boolean query $q' \datarule {A}_1', \ldots, {A}_m'$, where $A_i' = R_i'(\vec x_i)$ if $A_i = R_i(\vec z_i)$,
for which the following holds:\footnote{An informal way to describe this transformation of $D$ at the query level is to first only keep tuples in the lineage of $t$ and to then delete all columns in atoms that contain constants from $\vec c$).}

\begin{corollary}[Resilience \& Source side-effects]
	Given a query $q(\vec y)$, database $D$, and output tuple $t \in q(\vec y)^D$, let $q'$ and $D'$ be the new Boolean query and new database instance obtained by the above transformation.
	Then: $(D, t, k)\in\source(q(\vec y)) \Leftrightarrow (D',k) \in \res(q')$.
\end{corollary}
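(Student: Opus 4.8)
The plan is to prove both directions by directly unwinding \specificref{Definition}{def: resilience} and the definition of $\source$, the work being carried by a natural bijection between the \emph{witnesses} of $q(\vec y)$ that produce $t$ over $D$ and the witnesses of the Boolean query $q'$ over $D'$, which in turn induces a size-preserving correspondence between the ``relevant'' contingency sets on the two sides.

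First I would make the tuple-level correspondence implicit in the construction precise. For each atom $A_i = R_i(\vec z_i)$, the rule $R_i'(\vec x_i) \datarule R_i(\theta_t(\vec z_i))$ retains exactly those tuples of $R_i$ whose entries on the positions occupied by head variables equal the corresponding constants of $t$ (and agree on any repeated positions of $\vec z_i$), and projects each such tuple onto the columns indexed by the existential variables $\vec x_i$. Call a tuple of $D$ \emph{relevant} if it participates in some witness of $q(\vec y)$ producing $t$, i.e., it lies in the lineage of $t$. Every relevant tuple survives the selection, and the map $\tau \mapsto \tau'$ is a bijection from the relevant tuples of $D$ onto the tuples of $D'$ that participate in some witness of $q'$: it is injective because a relevant tuple has its head positions pinned to the constants of $t$ and is therefore determined by its projection onto $\vec x_i$, and it is surjective once the witness bijection below is established. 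Non-relevant tuples of $D$, and tuples of $D'$ lying in no witness of $q'$, are \emph{useless}: deleting them never affects whether $t$ is in the output of $q(\vec y)$, nor whether $q'$ holds over $D'$.

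Next I would establish the witness bijection. A valuation $\vec w$ of $\vec x$ is a witness of $q(\vec y)$ producing $t$ over $D$ precisely when $R_i(\theta_t(\vec z_i))[\vec w/\vec x] \in D$ for every $i$; by the construction of $D'$ this is equivalent to $R_i'(\vec x_i)[\vec w/\vec x] \in D'$ for every $i$, that is, to $\vec w$ being a witness of $q'$ over $D'$. So the two witness sets coincide as sets of valuations, and under this identification a witness uses a relevant tuple $\tau$ of $D$ iff the corresponding witness uses $\tau' \in D'$. In particular $t \in q(\vec y)^D \Leftrightarrow D' \models q'$, which matches the preconditions of the two problems.

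Given this, both implications are immediate. For ``$\Rightarrow$'': if $\Gamma \subseteq D$ with $|\Gamma|\le k$ and $t \notin q(\vec y)^{D-\Gamma}$, replace $\Gamma$ by its subset of relevant tuples --- this does not increase the size, and it still destroys every witness of $t$, since any surviving witness would use only relevant (hence non-deleted) tuples and would therefore already have survived $\Gamma$ --- and let $\Gamma'$ be its image in $D'$; the witness correspondence gives $D'-\Gamma' \not\models q'$ with $|\Gamma'|\le |\Gamma|\le k$, so $(D',k)\in\res(q')$. For ``$\Leftarrow$'': discard the useless tuples from a given $\Gamma'\subseteq D'$ with $|\Gamma'|\le k$, pull the remainder back along the bijection to a set $\Gamma$ of relevant tuples of $D$, and conclude $t\notin q(\vec y)^{D-\Gamma}$ with $|\Gamma|\le k$. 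The only real obstacle here is the bookkeeping around relevance: one must observe that it suffices to consider contingency sets consisting only of lineage tuples --- so that the tuple map is a genuine size-preserving bijection rather than a mere surjection onto $D'$ --- and one must be slightly careful when an atom repeats a variable, since then the selection additionally enforces a column equality and the tuple map is injective only after restricting to relevant tuples. Everything else is a mechanical unfolding of the definitions.
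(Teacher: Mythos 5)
Your proof is correct, and it is exactly the argument the paper intends: the paper states this corollary without an explicit proof, treating it as immediate from the construction (its only hint is the footnote about keeping the lineage tuples of $t$ and dropping the constant columns), and your witness bijection together with the restriction of contingency sets to relevant/useful tuples is precisely the formalization of that correspondence. The care you take with repeated variables in an atom and with the injectivity of the tuple map on lineage tuples is the right bookkeeping and introduces no gap.
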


Notice that the same transformation can be used to treat \emph{constants in a CQ} 
when considering source side-effects. 
Thus, by solving the complexity of resilience, we immediately also solve the problem of deletion propagation with source side-effects. 
We prefer to present our results using the notion
of resilience, as there are several applications beyond view updates that
relate to these problems. Examples include robustness of network connectivity
(identifying sets of nodes and edges that could disconnect a network),
deriving explanations for query results (finding the lineage tuples that have
most impact to an output), and problems related to set cover. We proceed to
discuss existing results on the complexity of deletion propagation with source
side-effects, and explain how our results on the complexity of resilience
extend this prior work.

\citeN{Buneman:2002} define a dichotomy for the hardness of
$\source(q)$ based only on the operations that occur in $q$, namely, selection, projection, join, union. 
Specifically, they show that
$\source(q(\vec y))$ is \NP-complete for PJ and JU queries (i.e., queries
involving projections and joins, or queries involving joins and unions), while
it is \ptime\ for SJ and SPU queries (i.e., queries involving selections and
joins, or queries involving selections, projections, and unions only). 
Later, \citeN{Cong12} 
showed that $\source(q(\vec y))$ is in \ptime\ for a SPJ query 
if all primary keys of the involved relations
appear in the head variables $\vec y$ (a condition called ``key preservation'').
Notice that the concept of key preservation does
not apply to the problem of resilience, as keys are never preserved in Boolean
queries.

In this paper, we identify a larger class of SPJ queries for which the problem
of resilience --- and thus $\source(q(\vec y))$ --- is in \ptime, thus extending
all prior results. 
In \Autoref{sec:resilience}, 
we provide a dichotomy result based on identifying a
specific and very intuitive structure in a query, called a \emph{triad}:
queries that contain a triad are \NP-complete, whereas those that do not are
in \ptime. 
Our results refine the prior work in the sense that prior results
characterize the dichotomy at the level of operators used in the query (e.g., joins,
projections), while our result 
identifies all polynomial cases based on ($i$) the actual
query and ($ii$) additional schema knowledge of forbidden, ``exogenous'' tables. 
In \Autoref{sec:FDs}, we extend our results to even include ($iii$) functional dependencies.

\subsection{Deletion propagation: view side-effects}
The problem of deletion propagation with view side-effects has a different objective than resilience: it attempts to minimize the changes in the view rather than the source.

\begin{definition}[View side-effects]\label{def:view}
    \looseness -1
    Given a query $q(\vec y)$, a database $D$, and a tuple $t$ in the
    view, we say that $(D, t, k)\in\view(q(\vec y))$ if and only if
    $t \in q(\vec y)^D$ and there exists some 
	$\Gamma \subseteq D$ such that 
	$t \not\in q(\vec y)^{D - \Gamma}$, and $|\Delta| \leq k$, where $\Delta = (q(\vec y)^D - (q(\vec y)^{D-\Gamma} \cup \set{t}))$.
	In other words, $\Delta$ is the set of tuples other than $t$ that were eliminated from the view.
\end{definition}

The dichotomy results from \citeN{Buneman:2002} extend to the
case of $\view(q)$, and the same is true for key
preservation~\cite{Cong12}. Later, \citeN{KimelfeldVW12}
refined the dichotomy for the view side-effect problem by providing a characterization that uses the query structure:
$\view(q(\vec y))$ is \ptime\ for queries that are \emph{head dominated}, and
\NP-complete otherwise. 
Head domination checks for the components of the query that are connected by the
existential variables, where all head variables contained in the atoms of that component appear in a
single atom in the query.
Our work in this paper offers a similar refinement for
the dichotomy of $\source(q(\vec y))$ from the characterization at the
operator level to the characterization at the level of query structure, plus knowledge of exogenous (``forbidden'') tables.

\smallskip
\introparagraph{Functional dependencies}
\citeN{Kimelfeld12} augmented the dichotomy on $\view(q)$ for
cases where functional dependencies (FDs) hold over the data instance $D$. The
tractability condition for this case checks whether the query has
\emph{functional head domination}, which is an extension of the notion of head
domination. We provide similar extensions in this paper for the problem of
$\source(q(\vec y))$: our dichotomy for the case of FDs checks for triads
after the query is structurally manipulated through a process we call
\emph{induced re\-writes}, which is basically a chase of FDs.

\smallskip
\introparagraph{Multi-tuple deletion}
\citeN{Cong12} also studied a variant of deletion propagation that
aims to remove a group of tuples from the view. Their results classify all
conjunctive queries as \NP-complete, but recently, 
\citeN{Kimelfeld:2013} provided a trichotomy for the class of sj-free CQs that
extends the notion of head domination, classifying queries into \ptime,
$k$-approximable in \ptime, and \NP-complete.

\subsection{Causal responsibility}\label{sec: responsibility}
A tuple $t$ is a \emph{counterfactual cause} for a query if by removing it the query changes from \true to \false. A tuple $t$ is an \emph{actual cause} if there exists a set $\Gamma$, 
called the \emph{contingency set}, removing of which makes $t$ a counterfactual cause. 
Determining actual causality is \NP-complete for general formulas~\cite{DBLP:journals/ai/EiterL02}, but there are families of tractable cases~\cite{EiterL06}.  Specifically, causality is \PTIME for all conjunctive queries~\cite{MeliouGMS11}.
Responsibility measures the \emph{degree} of causal contribution of a
particular tuple $t$ to the output of a query as a function of the size of a minimum contingency set:
$\rho=\frac{1}{1+\min{\Gamma}}$. 
These definitions stem from the work of \citeN{HalpernPearl:Cause2005}, and \citeN{ChocklerH04}, and were adapted to queries in previous work~\cite{MeliouGMS11}.
Even though responsibility ($\rho$) was originally
defined as inversely proportional to the size of the
contingency set $\Gamma$, here we alter this definition slightly to draw
parallels to the problem of resilience.

\begin{definition}[Responsibility]\label{def: responsibility}
Given query $q$, we say that $(D, t, k) \in \rsp(q)$ if and only if $D \models
q$ and there is $\Gamma \subseteq D^\enSymb$ such that $D - \Gamma \models
q$ and $|\Gamma| \leq k$ but $D - (\Gamma \cup \{ t\}) \not\models q$.
\end{definition}

In contrast to resilience, the problem of responsibility is defined for \emph{a
particular tuple} $t$ in $D$, and instead of finding a $\Gamma$ that will leave no witnesses 
for $D - \Gamma \models q$, we want to preserve only witnesses that involve $t$, so that
there is no witness left for $D - (\Gamma \cup \{ t\}) \models q$.
This difference, while subtle, is significant,
and can lead to different results. In \specificref{Example}{ex:introResilience}, the
resilience of query $q'$ has size 1 and contains tuple $t_1$. However, the
solution to the responsibility problem depends on the chosen tuple: the
contingency set of $s_1$ has size 2, and this size can be made arbitrarily
bigger by adding more tuples in $S$ with attribute $W=7$. Furthermore, we show that the problems differ in terms of their complexity.

For completeness, we briefly recall the notions of \emph{reduction}  and \emph{equivalence}
 in complexity theory: 

\begin{definition}[Reduction ($\leq$) and Equivalence ($\equiv$)]\label{reduction def}
For two decision problems, $S,T\subseteq \set{0,1}^*$,  we say that $S$ is \emph{reducible} to $T$
($S\leq T$) if there is an \emph{easy to compute} \emph{reduction} $f: \set{0,1}^*\rightarrow \set{0,1}^*$
such that 
\[ \forall w \in \set{0,1}^* \big( w \in S  \Leftrightarrow f(w) \in T \big)\; .\]

The idea is that the complexity of $S$ is less than or equal to the complexity of $T$ because any
membership question for $S$ (i.e., whether $w\in S$) can be easily translated into an equivalent question
for $T$, (i.e., whether $f(w) \in T$).  ``Easy to compute'' can be taken as 
expressible in first-order logic\footnote{All reductions in this paper are first-order, i.e., when
  we write $S\leq T$ we mean $S\leq_{\textrm{fo}}T$.  First-order reductions are natural for the
  relational database setting and they are more restrictive than logspace reductions, which in turn
  are more restrictive than polynomial-time reductions
 ($S\leq_{\textrm{fo}}T \Rightarrow  S\leq_{\textrm{log}}T \Rightarrow S\leq_{\textrm{p}}T$)
  \cite{Neil-book}. 
}. 
We say that two problems have \emph{equivalent} complexity ($S\equiv T$) iff they are inter-reducible, i.e., $S\leq T$ and
$T\leq S$.
\end{definition}

The problem of calculating resilience can always be reduced to the problem of calculating responsibility.

\begin{lemma}[$\res \leq \rsp$]\label{lem:res_rsp}
For any query $q$, $\res(q) \leq \rsp(q)$, i.e., there is a reduction
from $\res(q)$ to $\rsp(q)$.  Thus, if $\res(q)$ is hard (i.e., \NP-complete) then so is $\rsp(q)$.
Equivalently, if $\rsp(q)$ is easy (i.e., $\PTIME$) then so is $\res(q)$.
\end{lemma}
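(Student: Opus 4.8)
The plan is to exhibit a first-order reduction $f$ that, given an instance $(D,k)$ of $\res(q)$, produces an instance $(D', t, k)$ of $\rsp(q)$ such that $(D,k)\in\res(q)$ iff $(D',t,k)\in\rsp(q)$. The natural idea is to introduce a fresh ``dummy'' witness that passes through a distinguished tuple $t$, so that making $t$ counterfactual (after deleting a contingency set $\Gamma$) forces $\Gamma$ to kill all the \emph{original} witnesses of $D\models q$ — which is exactly resilience. Concretely, pick any endogenous atom $A_i = R_i(\vec z_i)$ of $q$, and add to the database fresh constants (a single new value $\star$ for each variable of $q$, say) so that there is exactly one new witness $\vec w^\star$ whose atoms are all ``new'' tuples built from $\star$; let $t$ be the new tuple inserted into $R_i$. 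All the other new tuples (in atoms $A_j$, $j\ne i$) should be made \emph{exogenous}-flavored in effect — the cleanest way is to add them to the exogenous part if those relations have one, but since $q$'s exogenous/endogenous partition is fixed at the query level and we cannot change it per instance, I would instead make the new witness ``fragile only at $t$'' by choosing $\star$-values that appear in no original tuple and arranging that $t$ is the unique new endogenous tuple participating in $\vec w^\star$. If $A_i$ is the only endogenous atom this is automatic; in general one adds, for each endogenous atom $A_j$, the $\star$-tuple, but then the new witness can be destroyed by deleting any of them, so one must be more careful.

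A robust way to force this is to use \emph{many} fresh values on the ``other'' atoms and a \emph{single} fresh tuple $t$ on $A_i$ that is shared. Precisely: for each assignment of the existential variables, think of the atoms sharing the variables of $A_i$; create a gadget where there are $|D^\enSymb|+1$ (or just ``many'') parallel copies of the new witness, all sharing the same tuple $t\in R_i$ but pairwise disjoint on all other atoms, so that the only single tuple whose deletion kills every new witness is $t$ itself. Then: if $\Gamma\subseteq D'^\enSymb$ has $|\Gamma|\le k$ and $D'-\Gamma\models q$ but $D'-(\Gamma\cup\{t\})\not\models q$, the last condition says the only surviving witnesses in $D'-\Gamma$ all use $t$; since $t$ appears in no original witness, $\Gamma$ must already destroy all original witnesses, and $\Gamma$ restricted to $D$ is a resilience set of size $\le k$ (new tuples in $\Gamma$ only help less, so we may assume $\Gamma\subseteq D$). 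Conversely, a resilience set $\Gamma\subseteq D^\enSymb$ of size $\le k$ works verbatim for $D'$: $D'-\Gamma\models q$ via the $\star$-witness through $t$, and $D'-(\Gamma\cup\{t\})\not\models q$ because all original witnesses are gone and every $\star$-witness needs $t$. One also checks $D'\models q$ (it inherits $D\models q$), so the instance is well-formed.

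The two steps I would carry out are: (1) define $f$ explicitly — specify the fresh values, the gadget on the endogenous atoms, the chosen $t$, and verify $f$ is first-order computable (it is: a bounded amount of new data described uniformly in the schema, plus a copy of $D$); (2) prove the ``iff'' in both directions as sketched, with the key lemma being that in any witness-destroying contingency set for $(D',t,k)$ we may assume the new tuples other than $t$ are never deleted and $t$ is never deleted, reducing the bookkeeping to $D$. The ``thus'' sentence then follows formally from \autoref{reduction def}: $\res(q)\le\rsp(q)$ means hardness transfers upward and tractability transfers downward.

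I expect the main obstacle to be the handling of queries with \emph{several} endogenous atoms: a single $\star$-witness is too fragile (deleting any one of its tuples, not just $t$, would make $t$ trivially counterfactual or even break $D'\models q$ reasoning), so the gadget must guarantee that $t$ is the \emph{unique minimal hitting tuple} of the set of new witnesses. Getting that gadget to be (a) first-order describable, (b) not accidentally creating new \emph{original}-style witnesses or interfering with $q$'s connectivity, and (c) compatible with the fixed exogenous/endogenous labeling of the schema, is the delicate part; the parallel-copies-sharing-$t$ construction above is my intended fix, but its correctness proof — especially ruling out cheaper contingency sets that exploit the gadget — is where the real work lies.
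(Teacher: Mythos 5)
Your core idea is exactly the paper's: append one fresh witness whose values are disjoint from $\dom(D)$, designate its tuple in a single atom as the distinguished tuple $t$, and keep the same budget $k$. The difference is that the paper stops at the ``fragile'' single $\star$-witness that you dismiss, and that version is already correct; the parallel-copies gadget is unnecessary. The worry that ``deleting any one of its tuples, not just $t$, would make $t$ trivially counterfactual'' does not survive inspection. Suppose $\Gamma$ contains a fresh tuple other than $t$ (or $t$ itself), so the unique fresh witness is destroyed in $D'-\Gamma$. Then either some original witness survives in $D'-\Gamma$, in which case it also survives in $D'-(\Gamma\cup\{t\})$ because original witnesses never contain $t$, so $t$ is not counterfactual; or no witness survives, in which case $D'-\Gamma\not\models q$ and $\Gamma$ is again invalid. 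Hence no valid contingency set for $(D',t)$ ever touches the fresh tuples, every valid one must destroy all original witnesses (which only tuples of $D$ can do, so its size is at least the resilience of $(q,D)$), and conversely any resilience set $\Gamma\subseteq D^\enSymb$ is a valid contingency set for $t$ because the $\star$-witness through $t$ survives $\Gamma$. This is precisely the two-line argument you already give in your second paragraph; it requires no ``unique minimal hitting tuple'' property, so the part you flag as ``where the real work lies'' is not needed. Your multi-copy gadget would also work if executed with per-copy fresh values (every new witness, even a cross-copy one, still passes through $t$ and cannot mix with $D$), but it only adds bookkeeping without buying anything. One minor point in your favor: you choose $t$ from an endogenous atom, which is the natural convention; the paper simply takes the fresh tuple of $A_1$ and elsewhere permits computing responsibility of exogenous tuples, so either choice is fine.
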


\begin{proof}
Let $q\datarule \exists x_1, \ldots, x_s \, A_1(\vec{z_1})\land\cdots\land A_r(\vec{z_r})$.
The reduction from $\res(q)$ to $\rsp(q)$ is as follows: given $(D,k)$, we map it to
$(D',\vec{t_0},k)$ where $D'$ consists of the database $D$ together with unique new values $a_1,
\ldots a_s$ and the new tuples $A_1(\vec{z_1}[\vec a/\vec x]), \ldots, A_r(\vec{z_r}[\vec a/\vec x])$.
In other words, we enter a completely new witness $\vec a$ for $q$ that has no values in common with
the domain of $D$.  Let $\vec{t_0}=A_1(\vec{z_1}[\vec a/\vec x])$, i.e., the tuple of 
these new values from atom $A_1$.  It follows that the size of the minimal contingency set for $q$
in $D$ is the 
same as the size of the minimal contingency set for $q$ and $\vec{t_0}$ in $D'$.  Thus, as desired,
$(D,k) \in \res(q) \Leftrightarrow (D',\vec{t_0},k) \in \rsp(q)$.
\end{proof}

Later we will see a query, $\rats$, for which $\res(\rats)\in\PTIME$ (\specificref{Cor.\@}{res(rats) easy}) but
$\rsp(\rats)$ is \NP-complete (\specificref{Prop.}{responsibility of rats is hard}). Thus (assuming
$\mathsf{P}\ne\mathsf{NP}$), $\rsp(q)$ is sometimes strictly harder than $\res(q)$.

\section{Complexity of resilience}\label{sec:resilience}
In this section we study the data complexity of resilience.  We prove that the  complexity of
resilience of a query $q$ can be exactly characterized
via a natural property of its \emph{dual hypergraph} $\mathcal{H}(q)$
(\specificref{Definition}{def:dualHypergraph}).
In \Autoref{sec: hard part}, we begin by showing that the resilience problem for two basic
queries, the triangle query ($q_\triangle$) and the tripod query ($q_\Tri$) are both \NP-complete.  
We then generalize these queries to a feature of hypergraphs that we call a \emph{triad} (\specificref{Definition}{def: triad}), which is a set of 3 atoms that are connected in a special way in $\mathcal{H}(q)$. 
We then prove that if $\mathcal{H}(q)$ contains a triad, then $\res(q)$ is \NP-complete, i.e., determining resilience is hard.  
Conversely, we show in \Autoref{sec: easy part} that if $\mathcal{H}(q)$ does not contain any triad, then $\res(q)\in$ \ptime.
We prove this by showing how to transform a triad-free sj-free CQ into a linear query $q'$ of equivalent complexity.  
The resilience of linear queries can be computed efficiently in polynomial time using 
a reduction to network flow as shown in previous  work~\cite{MeliouGMS11}.  
The desired dichotomy theorem for the resilience of sj-free CQ thus follows (\Autoref{resilience dichotomy thm}).

\subsection{Triads make resilience hard}\label{sec: hard part}
We will define triples of atoms called \emph{triads} and then prove that
if the dual hypergraph of a query $q$ contains a triad, then the resilience problem $\res(q)$ is
\NP-complete.  

We first define the (dual) hypergraph $\mathcal{H}(q)$ of query $q$. 
The hypergraph of a query $q$ is usually
defined with its vertices being the variables of $q$ and the hyperedges being the atoms \cite{abiteboul-hull-vianu}.
In this paper we use only the dual hypergraph:

\begin{definition}[Dual Hypergraph $\mathcal{H}(q)$]\label{def:dualHypergraph}
Let $q \datarule A_1,$ $\ldots, A_m$ be an sj-free CQ. 
Its \emph{dual hypergraph}  $\mathcal{H}(q)$
has vertex set $V=\{A_1,\ldots,A_m\}$.  Each variable $x_i \in \var(q)$ determines the hyperedge
consisting of all those atoms in which $x_i$ occurs: $\;e_i=\{A_j\,|\,x_i\in \var(A_j)\}$. 
\end{definition}

For example, \autoref{Fig_HardQueries} shows the dual hypergraphs of four important queries defined in
\specificref{Example}{four main queries ex}.
In this paper we only consider dual hypergraphs, so we use
the shorter term ``hypergraph'' from now on. In fact we will think of a query
and its hypergraph as one and the same thing.  Furthermore, when we discuss \emph{vertices},
\emph{edges} and \emph{paths}, we are referring to those objects in the hypergraph of the query
under consideration.  Thus, a \emph{vertex} is an
\emph{atom}, an  \emph{edge} is a \emph{variable}, and a \emph{path} is an alternating sequence of
vertices and edges, $A_1,x_1,A_2, x_2, \ldots, A_{n-1},x_{n-1},A_n$, such that for all $i$, 
$x_i \in \var(A_i)\cap \var(A_{i+1})$, i.e., the hyperedge $x_i$ joins vertices $A_i$ and
$A_{i+1}$.  
We explicitly list the hyperedges in the path, because more than one hyperedge may join
the same pair of vertices.
Furthermore, since disconnected
components of a query have no effect on each other, 
each of several disconnected components can be considered independently.
We will thus assume throughout that \emph{all queries
  are connected.} 
Similarly, without loss of generality, we assume no query contains two atoms with exactly the same set of 
variables.\footnote{If two atoms $A,B$ appear in $q$ with the identical
set of variables, we can replace $A$ by $A\cap B$ and delete $B$.}

\begin{example}[Important queries]\label{four main queries ex}
Before we precisely define what a triad is, we identify two hard queries,
$q_\triangle,q_\Tri$ and two related queries, $\rats,\brats$ (see
\autoref{Fig_HardQueries} for drawings of their hypergraphs).
\[\begin{array}{rcl@{\quad}l}
q_\triangle 	&\datarule& R(x,y),S(y,z),T(z,x)			&\textrm{(Triangle)}\\
\rats 			& \datarule& A(x),R(x,y),S(y,z),T(z,x) 		&\textrm{(Rats)}\\
\brats 			& \datarule& A(x),R(x,y),B(y),S(y,z),T(z,x) &\textrm{(Brats)}\\
q_\Tri	 		&\datarule& A(x),B(y),C(z), W(x,y,z)		&\textrm{(Tripod)}
\end{array}\]
\end{example}

We now prove that $q_\triangle$ and $q_\Tri$ are both hard, i.e., their resilience problems are \NP-complete. 
This will lead us to the definition of a triad: the hypergraph property that implies
hardness.   Later we will see that $\brats$ is easy for both
resilience and responsibility. \emph{However, counter to our initial intuition, $\rats$ is easy for resilience but hard for responsibility}.

\begin{figure}
 \definecolor{dg}{cmyk}{0.60,0,0.88,0.27}
\phantom{a}\hfill
\subfloat[\small{Triangle query $q_\triangle$}]{
\centering
	\begin{tikzpicture}[ scale=.13]
	\draw[color=red] (-8,6.93) node {{\color{black} $x$}};  
	\draw[color=dg] (8,6.93) node {{\color{black}  $y$}};  
	\draw[color=blue] (0,-3.5) node {{\color{black}  $z$}};  
	\draw (0,13.86) node {{\color{red} $R$}};  
	\draw (-8,0) node {{\color{blue} $T$}};  
	\draw (8,0) node {{\color{dg} $S$}};  
	\draw[rotate=-30] (-6.93,4) ellipse (2 and 12);  
	\draw[rotate=30] (6.93,4) ellipse (2 and 12);  
	\draw[rotate=90] (0,0) ellipse (2 and 12);  
	\end{tikzpicture}
\label{fig:triangleHypergraph}
}
\hfill
\subfloat[\small{Rats query $\rats$}]{
\centering
	\begin{tikzpicture}[ scale=.13]
	\draw (-8,6.93) node {$x$};  
	\draw (8,6.93) node {$y$};  
	\draw (0,-3.5) node {$z$};  
	\draw (-4.5,6.93) node {{\color{red} $A$}};  
	\draw (0,13.86) node {{\color{red} $R$}};  
	\draw (-8,0) node {{\color{blue} $T$}};  
	\draw (8,0) node {{\color{dg} $S$}};  
	\draw[rotate=-30] (-6.93,4) ellipse (2 and 12);  
	\draw[rotate=30] (6.93,4) ellipse (2 and 12);  
	\draw[rotate=90] (0,0) ellipse (2 and 12);  
	\end{tikzpicture}
\label{fig:ratsHypergraph}	
}
\hfill\phantom{a}

\phantom{a}\hfill
\subfloat[\small{Brats query $\brats$}]{
 \centering
	\begin{tikzpicture}[ scale=.13]
	\draw (-8,6.93) node {$x$};  
	\draw (8,6.93) node {$y$};  
	\draw (0,-3.5) node {$z$};  
	\draw (-4.5,6.93) node {{\color{red} $A$}};  
	\draw (4.5,6.93) node {{\color{dg} $B$}};  
	\draw (0,13.86) node {{\color{red} $R$}};  
	\draw (-8,0) node {{\color{blue} $T$}};  
	\draw (8,0) node {{\color{dg} $S$}};  
	\draw[rotate=-30] (-6.93,4) ellipse (2 and 12);  
	\draw[rotate=30] (6.93,4) ellipse (2 and 12);  
	\draw[rotate=90] (0,0) ellipse (2 and 12);  
	\end{tikzpicture}
\label{fig:bratsHypergraph}
}
\hfill
\subfloat[\small{Tripod query $q_\Tri$}]{
\vspace*{4ex}
 \centering
	\begin{tikzpicture}[ scale=.13]
	\draw (-8,6.93) node {$x$};  
	\draw (8,6.93) node {$y$};  
	\draw (9,10.5) node {$z$};  
	\draw (-4.5,6.93) node {{\color{red} $A$}};  
	\draw (10,14) node {{\color{blue} $C$}};  
	\draw (4.5,6.93) node {{\color{dg} $B$}};  
	\draw (0,13.8) node {{\color{black} $W$}};  
	\draw[rotate=-30] (-6.93,4) ellipse (2 and 12);  
	\draw[rotate=30] (6.93,4) ellipse (2 and 12);  
	\draw[rotate=90] (14,-6) ellipse (2 and 12);  
	\end{tikzpicture}
\vspace*{-1ex}
\label{fig:tripodHypergraph} 
}
\hfill\phantom{a}
\caption{\specificref{Example}{four main queries ex}: The hypergraphs of queries 
$q_\triangle$, $\rats$, $\brats$, $q_\Tri$. $\set{R,
S,T}$ is a triad of $q_\triangle$; \set{A,B,C} is a triad of $q_\Tri$.
}\label{Fig_HardQueries}
\end{figure}

\begin{proposition}[Triangle $q_\triangle$ is hard]\label{thm: hardness of triangle}
$\res(q_\triangle)$ and $\rsp(q_\triangle)$ are \NP-complete.
\end{proposition}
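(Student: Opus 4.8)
The plan is to establish membership in \NP{} and then \NP-hardness; combined, these give \NP-completeness of both problems. Membership is routine: we already observed that $\res(q)\in\NP$ for every query, so in particular $\res(q_\triangle)\in\NP$ (a deletion set $\Gamma$ is a polynomial-size certificate, and $D-\Gamma\not\models q_\triangle$ is checkable in \PTIME); likewise, for $\rsp(q_\triangle)$ a contingency set $\Gamma$ is a certificate, and both $D-\Gamma\models q_\triangle$ and $D-(\Gamma\cup\{t\})\not\models q_\triangle$ can be tested in \PTIME{} since $q_\triangle$ is a fixed conjunctive query. By \specificref{Lemma}{lem:res_rsp} we have $\res(q_\triangle)\leq\rsp(q_\triangle)$, so it suffices to prove that $\res(q_\triangle)$ is \NP-hard; the hardness of $\rsp(q_\triangle)$ then follows immediately, and together with membership we obtain \NP-completeness of both.

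For the \NP-hardness of $\res(q_\triangle)$ I would give a reduction from a standard \NP-complete problem such as $3$SAT. Viewing the relations $R(x,y)$, $S(y,z)$, $T(z,x)$ as the three bipartite parts of a tripartite graph, a witness of $q_\triangle$ is exactly a (tripartite) triangle, so $\res(q_\triangle)$ asks for a minimum set of edges whose deletion destroys all triangles. Given a $3$CNF formula $\psi$ over variables $x_1,\dots,x_n$ with clauses $C_1,\dots,C_m$, I would construct such a graph with (i)~for each variable $x_i$ a \emph{variable gadget} containing two distinguished ``literal'' edges $t_i^{+}$ and $t_i^{-}$ together with auxiliary triangles forcing any small deletion set to contain at least one of $t_i^{+},t_i^{-}$ and making any other way of destroying the gadget strictly suboptimal; and (ii)~for each clause $C_j$ and each literal occurrence in it, a triangle destroyed precisely by deleting the corresponding literal edge. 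Setting the budget $k$ equal to the total cost forced by the $n$ variable gadgets, a satisfying assignment of $\psi$ yields a deletion set of size $k$ (delete $t_i^{+}$ if $x_i$ is true, else $t_i^{-}$; each clause triangle dies because every clause has a satisfied literal); conversely, a deletion set of size $\leq k$ must, by the forcing in~(i), consist of exactly one literal edge per variable, and then by~(ii) the induced literal set is consistent and hits every clause, i.e., it is a satisfying assignment. The forward direction is a routine check and the converse is the budget-tightness argument.

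The crux, and the step I expect to be the main obstacle, is making the reduction \emph{tight}. Because $q_\triangle$ has no exogenous relations, no tuple can be declared undeletable, so the ``forcing'' in the variable gadget --- that its triangles cannot be destroyed more cheaply than by deleting a literal edge --- has to be engineered purely from the combinatorial layout of constants; moreover the triangle hypergraph is highly constrained (any two edges lie in at most one common triangle), which limits the gadgets available. One must simultaneously ensure that (a)~no clause triangle or gadget triangle can be destroyed in an unintended, cheaper way, (b)~the freshly introduced constants create no spurious triangles that could be broken ``for free'' or provide shortcuts, and (c)~the global budget is exactly tight, so that a minimum deletion set is forced to look like a truth assignment, which is what lets one read off consistency. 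Once the gadget meets these three requirements, the equivalence $(D,k)\in\res(q_\triangle)\iff\psi$ is satisfiable follows, and hence so does \NP-completeness of $\res(q_\triangle)$ and $\rsp(q_\triangle)$.
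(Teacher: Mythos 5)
Your overall strategy coincides with the paper's: \NP membership is routine, the hardness of $\rsp(q_\triangle)$ is inherited from $\res(q_\triangle)\leq\rsp(q_\triangle)$ (\specificref{Lemma}{lem:res_rsp}), and the substance is a reduction from $3\sat$ to $\res(q_\triangle)$ in which witnesses are triangles of a tripartite graph, variable gadgets force a binary choice, each clause contributes a triangle that is destroyed exactly when the clause is satisfied, and the budget is tight. The gap is that you stop precisely where the mathematical content lies: you say the variable gadget ``has to be engineered'' so that (a) there is no cheaper unintended way to break it, (b) no spurious triangles arise, and (c) the budget is tight, but you do not exhibit such a gadget, and the sketch you do give --- a single pair of literal edges $t_i^{+},t_i^{-}$ per variable, with clause triangles built from the literal edges themselves --- does not survive your own requirements. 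If each literal exists as only one edge, then every clause using that literal must have its clause triangle pass through that very edge, so its endpoints must be identified with vertices from the other two variables' gadgets once per such clause; these identifications pile up on the same few vertices and create exactly the spurious triangles feared in (b). Similarly, a gadget breakable by a single deletion (either $t_i^{+}$ or $t_i^{-}$) is hard to reconcile with (a), since the third edge of any auxiliary triangle through $t_i^{\pm}$ is an equally cheap, assignment-free way to destroy it. (As a side remark, item (ii) as literally written --- one triangle per clause \emph{per literal occurrence}, destroyed precisely by that literal's edge --- would encode a conjunction rather than a disjunction; your forward argument shows you intend one triangle per clause containing all three literal edges, which is what is actually needed.)

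The paper's proof supplies the missing device: each literal is replicated $6m$ times by making the gadget $G_i$ a long alternating cycle of $12m$ edges ($2m$ six-node segments) whose $12m$ RGB triangles can be minimally destroyed only by taking all $6m$ edges marked $v_i$ or all $6m$ marked $\ov{v_i}$; each clause $C_j$ then uses a \emph{fresh} copy of each of its three literals (one odd-numbered segment per gadget), and the even-numbered segments serve as buffers so that the vertex identifications forming the clause triangle cannot close any other short cycle. With $k_\psi=6mn$ this simultaneously delivers the forcing, the absence of spurious triangles, and the budget tightness you list as requirements. Without this replication-plus-buffer idea (or an equivalent), the equivalence between size-$k$ contingency sets and satisfying assignments is not established, so as it stands the proposal is a correct plan with the key construction left unproved.
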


\begin{proof}
We reduce 3SAT to $\res(q_\triangle)$.  It will then follow that $\res(q_\triangle)$ is \NP-complete,
and thus so is $\rsp(q_\triangle)$ by \specificref{Lemma}{lem:res_rsp}.
Let $\psi$ be a 3CNF formula with $n$ variables $v_1, \ldots, v_n$ and $m$ clauses 
$C_0, \ldots, C_{m-1}$.  
Our reduction will map  any such $\psi$ to a pair $(D_\psi,k_\psi)$ where $D_\psi$ is a database 
satisfying
$q_\triangle$, and 

\myequation{hardRatsReduction}{\psi\in 3\sat \quad\Leftrightarrow\quad (D_\psi,k_\psi) \in \res(q)}

In our construction, if $\psi \in 3\sat$, then the size of each minimum contingency set for $q_\triangle$ in
$D_\psi$ will be $k_\psi=6mn$, whereas if $\psi \not\in 3\sat$, then the size of all contingency sets
for $q_\triangle$ in $D_\psi$ will be greater than $k_\psi$.

Notice that $D_\psi \models q_\triangle$ iff it contains three tuples $R(a,b)$, $S(b,c)$, $T(c,a)$ that together form a witness.
We visualize $R(a,b)$ as a red edge, $S(b,c)$ as a green edge and $T(c,a)$ as a
blue edge.
In other words, each witness $(a,b,c)$ for $D_\psi\models q_\triangle$ forms an RGB triangle.
(Notice that the edge direction $a \rightarrow b$ drawn in 
\specificref{Figures}{fig:gi segment}, \ref{fig:gadget} and
\ref{fig:gadgetgi identification}
corresponds to the
variable order in $R$, and analogously for $S$ and $T$.)
The job of a contingency set for
$q_\triangle$ is to remove all RGB triangles.

$D_\psi$ contains one circular gadget $G_i$ for each variable $v_i$.  
The circle consists of $12m$ solid edges, half of
them marked $v_i$ and the other half marked $\ov{v_i}$ (see \specificref{Figures}{fig:gi segment}, \ref{fig:gadget}).  Note that there are $12m$ RGB triangles and they can be minimally broken by
choosing the $6m$ $v_i$ edges or the $6m$ $\ov{v_i}$ edges. Any other way would require more
edges removed.
Thus, each minimum contingency set for $D_\psi$ corresponds to a truth assignment to the
variables of $\psi$. And there will be
a minimum contingency set of size $k_\psi = 6mn$ iff $\psi \in 3\sat$.

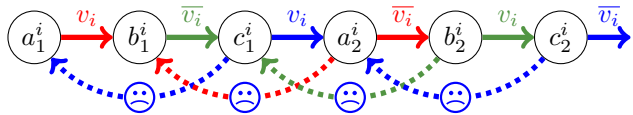
\begin{figure}
\begin{center}
\begin{tikzpicture}[scale=.35,
			every circle node/.style={fill=white, minimum size=7mm, inner sep=0,draw}]
\node (11) at (2,2)  [circle] {$a^i_1$};
\node (12) at (6,2)  [circle] {$b^i_1$};
\node (13) at (10,2) [circle] {$c^i_1$};
\node (14) at (14,2) [circle] {$a^i_2$};
\node (15) at (18,2) [circle] {$b^i_2$};
\node (16) at (22,2) [circle] {$c^i_2$};
\node (22) at (25,2) [] {};

\path[->, line width=2pt, auto] 
	(11) 	edge[color=red] 	node {${v_i}$} (12) 
	(12) 	edge[color=dg] 		node {$\ov{v_i}$} (13)
	(13) 	edge[color=blue] 	node {${v_i}$} (14)
	(14) 	edge[color=red] 	node {$\ov{v_i}$} (15)
	(15) 	edge[color=dg] 		node {${v_i}$} (16)
	(16) 	edge[color=blue]	node {$\ov{v_i}$} (22);	
	
\path[->, line width=2pt, dotted, out=-130, in=-50] 		
	(13) edge[color=blue] 	node [-, solid]{ $\Sadey[1.5][white]$}		(11)
	(14) edge[color=red] 	node [-, solid]{ $\Sadey[1.5][white]$}		(12)
	(15) edge[color=dg] 	node [-, solid]{ $\Sadey[1.5][white]$}		(13)
	(16) edge[color=blue] 	node [-, solid]{ $\Sadey[1.5][white]$}		(14);
\end{tikzpicture}
\end{center}
\caption{A six-node segment of the gadget $G_i$ in the hardness proof for $q_\triangle$: A minimum contingency set chooses either all the solid lines marked $v_i$, or all the solid lines marked $\ov{v_i}$.  The dotted lines are sad because each of them is only part of one single RGB triangle, thus they are never chosen.}
\label{fig:gi segment}
\end{figure}

\begin{figure}
\begin{center}
\begin{tikzpicture}[
	scale=.2,
	every circle node/.style={fill=white, minimum size = 4mm, draw},
	]

\node (05) at (-2.5,-2.5) [inner sep=0mm] {\rotatebox{-45}{$\vdots$}};
\node (06) at (-1,-1) [circle] {};
\node (11) at (2,2)  [circle] {};
\node at (11) 	{\scriptsize$a_1^i$};
\node (12) at (6,2)  [circle] {};
\node (13) at (10,2) [circle] {};
\node (14) at (14,2) [circle] {};
\node (15) at (18,2) [circle] {};
\node (16) at (22,2) [circle] {};
\node (21) at (26,2) 	[circle] {};
\node at (21) 	{\scriptsize$a_3^i$};
\node (22) at (29,-1)  	[circle] {};
\node (23) at (32,-4)  	[circle] {};
\node (24) at (35,-7)  	[circle] {};
\node (25) at (38,-10)  [circle] {};
\node (26) at (41,-13) 	[circle] {};
\node (31) at (44,-16)  [circle] {};
\node at (31) 	{\scriptsize$a_5^i$};
\node (32) at (44,-18)  [inner sep=0mm] {$\vdots$};

\path[->, line width=2pt, auto] 
	(06) 	edge[color=blue]	node {$\ov{v_i}$} (11)
	(11) 	edge[color=red] 	node {${v_i}$} (12) 
	(12) 	edge[color=dg] 		node {$\ov{v_i}$} (13)
	(13) 	edge[color=blue] 	node {${v_i}$} (14)
	(14) 	edge[color=red] 	node {$\ov{v_i}$} (15)
	(15) 	edge[color=dg] 		node {${v_i}$} (16)
	(16) 	edge[color=blue] 	node {$\ov{v_i}$} (21)
	(21) 	edge[color=red] 	node {${v_i}$} (22)
	(22) 	edge[color=dg] 		node {$\ov{v_i}$} (23)
	(23) 	edge[color=blue] 	node {${v_i}$} (24)
	(24) 	edge[color=red] 	node {$\ov{v_i}$} (25)
	(25) 	edge[color=dg] 		node {${v_i}$} (26)
	(26) 	edge[color=blue]	node {$\ov{v_i}$} (31);	
\path[->, line width=2pt, dotted, out=-120, in=-60] 		
	(12) edge[color=dg, in=-30, out=-110] (06.east)		
	(13) edge[color=blue] (11)
	(14) edge[color=red] (12)
	(15) edge[color=dg] (13)
	(16) edge[color=blue] (14)
	(21) edge[color=red] (15);
\path[->, line width=2pt, dotted, out=-170, in=-100] 		
	(22) edge[color=dg, in = -60] (16)
	(23) edge[color=blue] (21)
	(24) edge[color=red] (22)
	(25) edge[color=dg] (23)
	(26) edge[color=blue] (24)
	(31) edge[color=red] (25);	

\node (corner) at (-3.7,-19.3) {};
\node (cent1) at (9,-19) {};
\node (cent2) at (16,-19) {};
\node (cent3) at (34,-19) {};
\node (cent0) at (4,-19) {};

\path[-, line width=0.5pt, dotted, ] 		
	(cent1) edge (11.south west)
	(cent2) edge (21.west)
	(cent3) edge (31.north west);

\begin{scope}[on background layer] 	
\fill [black!15] (32.south)--(31.center)--(21.center)--
				(11.center)--(05.south west)
		--(corner.center);
\end{scope}

\draw (12.5,-3.5) 	node {{\color{black}  $\Smiley[3][white]$ }};
\draw (12.5,-6.75) 	node {{\color{black}  1 }};
\draw (28,-13) 	node {{\color{black} $\Sadey[3][white]$ }};
\draw (28,-16.25) node {{\color{black}  2 }};
\draw (1,-13) 		node {{\color{black} $\Sadey[3][white]$ }};
\draw (1,-16.25) 	node {{\color{black} $2m$ }};

\end{tikzpicture}
\end{center}
\caption{Each gadget $G_i$ in the hardness proof for $q_\triangle$ is a cycle containing $2m$
  six-node segments and a total of $12m$ RGB triangles.  They can all be
  eliminated by removing the $6m$ edges marked $v_i$ or the $6m$ edges marked $\ov{v_i}$.  
The even numbered segments are sad because they are never used for connecting different gadgets (corresponding to clauses that use several variables);
they only separate the odd ones, thus preventing spurious triangles.
}
\label{fig:gadget}
\end{figure}
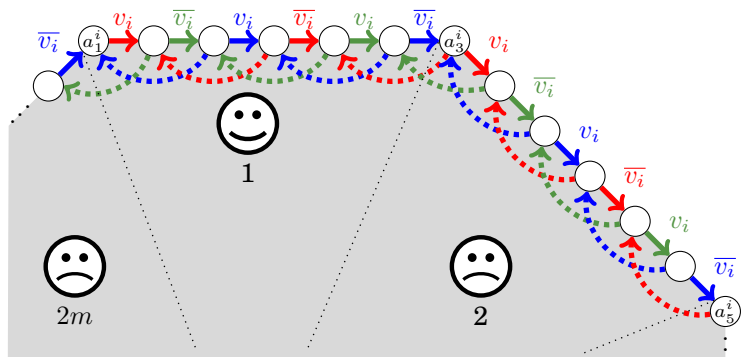

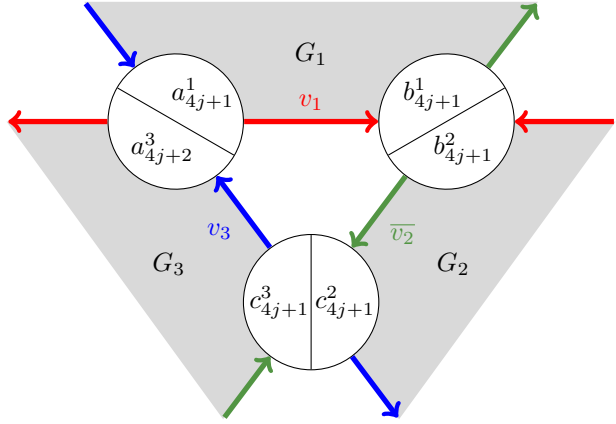
\begin{figure}
\begin{center}
\begin{tikzpicture}[
	scale=0.9,
	PT/.style = {inner sep=0mm},
	CIRC/.style = {circle split, minimum size=18mm, draw, fill=white}]
\node (A) at (14,2) 	[CIRC, rotate = -30] {};
\node at ($(A) + (+0.4,+0.4)$) 	{$a^1_{4j+1}$};
\node at ($(A) + (-0.2,-0.4)$)  {$a^3_{4j+2}$};

\node (B) at (18,2) 	[CIRC, rotate = 30] {};
\node at ($(B) + (-0.2,+0.4)$) 	{$b^1_{4j+1}$};
\node at ($(B) + (+0.25,-0.4)$) {$b^2_{4j+1}$};

\node (C) at (16,-.67) [CIRC, rotate = 90] {};
\node at ($(C) + (-0.47,0)$) 	{$c^3_{4j+1}$};
\node at ($(C) + (+0.50,0)$) 	{$c^2_{4j+1}$};

\node (G11) at 	(12.65,3.77) 	[PT] {};
\node (G12) at 	(19.35,3.77) 	[PT] {};
\node (G31) at 	(11.5,2) 		[PT] {};
\node (G32) at 	(14.7,-2.4) 	[PT] {};
\node (G33) at 	(11.5,-2.4) 	[PT] {};
\node (G21) at 	(20.5,2) 		[PT] {};
\node (G22) at 	(17.3,-2.4) 	[PT] {};
\node (G23) at 	(20.5,-2.4) 	[PT] {};

\path[->, line width=2pt, auto] 
	(G11) 	edge[color=blue] (A)
	(A) 	edge[color=red] node {$v_1$} (B)
			edge[color=red] (G31)
	(G21) 	edge[color=red] (B)
	(B) 	edge[color=dg] node {$\ov{v_2}$} (C)
			edge[color=dg] (G12)
	(G32) 	edge[color=dg] (C)
	(C) 	edge[color=blue] node {$v_3$} (A)
			edge[color=blue] (G22);

\begin{scope}[on background layer] 	
\fill [black!15] (B.center)--(A.center)--(G11.center)--(G12.center);
\fill [black!15] (A.center)--(C.center)--(G32.center)--(G31.center);
\fill [black!15] (B.center)--(C.center)--(G22.center)--(G21.center);
\end{scope}

\draw (16,3) 		node {$G_1$};
\draw (18.1,-.1) 	node {$G_2$}; 
\draw (13.9,-.1) 	node {$G_3$}; 

\end{tikzpicture}
\end{center}

\caption{For clause $C_j=(v_1 \lor \ov{v_2} \lor v_3)$ in the hardness proof for $q_\triangle$,  we identify vertices 
$b^1_{4j+1}\in G_1$ with $b^2_{4j+1}\in G_2$; 
$c^2_{4j+1}\in G_2$ with $c^3_{4j+1}\in G_3$ and 
$a^3_{4j+2}\in G_3$ with  $a^1_{4j+1}\in G_1$. 
This RGB triangle will be deleted iff the chosen variable
  assignment satisfies $C_j.$}
\label{fig:gadgetgi identification}
\end{figure}

We complete the construction of $D_\psi$ by adding one RGB triangle
for each clause $C_j$.  For example, suppose
$C_j = v_1 \lor \ov{v_2} \lor v_3$.
The RGB triangle we add consists of a red edge marked $v_1$, a green edge marked $\ov{v_2}$ and
a blue edge marked $v_3$ (see \autoref{fig:gadgetgi identification}).  Note that if the chosen assignment satisfies $C_j$,  then all $v_1$ edges
are removed, or all $\ov{v_2}$ edges are removed, or all $v_3$ edges are removed.  Thus the $C_j$
triangle is automatically removed.  

How do we create $C_j$'s RGB triangle?  Remember that we have chosen $G_i$ to contain 2 segments for
each clause.  We use segment $2j+1$ of $G_i$ to produce the $v_i$ or
$\ov{v_i}$ used in $C_j$'s triangle.  
The even numbered segments are not used:  they serve as
buffers to prevent spurious RGB triangles from being created. In \autoref{fig:gadget}, we 
mark these even segments with frowns:  they are sad because they are never used.

More precisely, the  red
$v_1$-edge from $G_1$ is $(a^1_{4j+1},b^1_{4j+1})$, 
the green $\ov{v_2}$-edge from $G_2$ is $(b^2_{4j+1},c^2_{4j+1})$,
and the blue $v_3$-edge from $G_3$ is $(c^3_{4j+1},a^3_{4j+2})$ 
(see \autoref{fig:gadgetgi identification}).  

Now to make this an RGB triangle in $D_\psi$, we identify the two $a$-vertices, the two $b$ vertices and the
two $c$ vertices.  In other words, $G_1$'s $a$-vertex $a^1_{4j+1}$ is equal to $G_3$'s $a$-vertex
$a^3_{4j+2}$,
i.e., they are the same element of the domain of $D_\psi$.
We have thus constructed $C_j$'s RGB triangle  (see \autoref{fig:gadgetgi identification}).

The key idea is that these
identifications can only create this single new RGB triangle because there is no other way to get back to
$G_1$ from $G_2$ in two steps.  All other identifications involve
different segments and so are at least six steps away. Recall that this is the reason why the even-numbered
segments in the $G_i$'s are not used: this ensures that no spurious RGB triangles
are created.
Thus, as desired, \specificref{Eq.}{hardRatsReduction} holds and we have reduced $3\sat$ to $\res(q_\triangle)$.
\end{proof}

We next show that the tripod query $q_\Tri$ is also hard.  We do this by reducing the triangle to the
tripod.  
Understanding this reduction  is useful for understanding the proof of our main result.

\begin{proposition}[Tripod $q_\Tri$ is hard]\label{prop:tripodQuery}
$\res(q_\Tri)$ and $\rsp(q_\Tri)$ are \NP-complete.
\end{proposition}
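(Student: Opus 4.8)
The plan is to reduce the triangle query to the tripod query, i.e., to exhibit a first-order reduction witnessing $\res(q_\triangle)\leq\res(q_\Tri)$. Since $\res(q_\triangle)$ is \NP-complete by \specificref{Proposition}{thm: hardness of triangle} and $\res(q_\Tri)\in\NP$ (because $q_\Tri$ evaluates in \ptime), this yields \NP-completeness of $\res(q_\Tri)$; and then $\rsp(q_\Tri)$ is \NP-complete by \specificref{Lemma}{lem:res_rsp} (together with the analogous membership $\rsp(q_\Tri)\in\NP$).

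The reduction rests on the following correspondence: in the tripod the central atom $W$ plays the role of the \emph{witnesses} of the triangle, while the unary atoms $A,B,C$ play the role of the deletable \emph{edges} $R,S,T$ of the triangle (not of the variables $x,y,z$ — in the triangle one deletes edges, i.e., tuples). Given a triangle instance $D$ with $D\models q_\triangle$, I introduce one fresh domain element $\hat r$ for each tuple $r\in R$ and set $A=\{\hat r\mid r\in R\}$, and analogously $B$ from $S$ and $C$ from $T$. I then let $W=\bigset{(\hat r,\hat s,\hat t)}{r=R(a,b),\ s=S(b,c),\ t=T(c,a)\text{ for some }a,b,c}$, so that $W$ records exactly the RGB triangles of $D$. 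The relations $A,B,C,W$ are all first-order (indeed conjunctive-query) definable over $D$, so this is a valid first-order reduction, and $D'\models q_\Tri$ iff $D\models q_\triangle$.

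For correctness I would use the bijection $r\mapsto\hat r$ (and its analogues for $S,T$) between $R\cup S\cup T$ and $A\cup B\cup C$: a set $\Gamma\subseteq R\cup S\cup T$ destroys every RGB triangle of $D$ iff its image $\widehat\Gamma\subseteq A\cup B\cup C$ destroys every witness of $q_\Tri$ in $D'$, because each witness $(\hat r,\hat s,\hat t)$ of $q_\Tri$ in $D'$ corresponds to exactly one triangle of $D$ and conversely, and no spurious witnesses are created. Since $|\Gamma|=|\widehat\Gamma|$, this gives $(D,k)\in\res(q_\triangle)\Leftrightarrow(D',k)\in\res(q_\Tri)$.

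The step I expect to need the most care is that in $q_\Tri$ the atom $W$ is itself endogenous, so a priori a contingency set in $D'$ may delete $W$-tuples, and the backward direction of the equivalence must rule this out. I would argue that deleting $W$-tuples never helps: given any contingency set $\Gamma'\subseteq A\cup B\cup C\cup W$ for $q_\Tri$ in $D'$, replace each deleted tuple $(\hat r,\hat s,\hat t)\in\Gamma'\cap W$ by the single $A$-tuple $\hat r$; the resulting set $\Gamma''\subseteq A\cup B\cup C$ satisfies $|\Gamma''|\leq|\Gamma'|$ and still destroys every witness (the unique witness equal to $(\hat r,\hat s,\hat t)$ is now killed through $\hat r$). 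Hence an optimal contingency set for $D'$ may be assumed to avoid $W$, which closes the argument. This "the central atom never needs to be deleted" observation is the prototype for how the central/internal atoms of a general triad are handled in the main hardness theorem.
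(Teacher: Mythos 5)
Your proposal is correct and follows essentially the same route as the paper: reduce $\res(q_\triangle)$ to $\res(q_\Tri)$ by encoding each edge tuple of $R,S,T$ as a fresh element of $A,B,C$ (the paper's $\angle{ab}$ values), use $W$ to enforce the triangle-gluing so witnesses correspond one-to-one, and conclude $\rsp(q_\Tri)$ hardness via \specificref{Lemma}{lem:res_rsp}. The only cosmetic differences are that the paper first discharges the endogenous $W$ via the general domination argument (\specificref{Prop.}{fact: domination does not change complexity}), making $W$ exogenous before the reduction, whereas you inline the same replacement argument, and that the paper fills $W$ with all triples over $\dom(D)$ while you restrict it to actual RGB triangles — both yield the same witness correspondence.
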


\begin{proof}
First observe that in $q_\Tri$, $\var(A)$ is a subset of $\var(W)$.  We say that
$A$ \emph{dominates} $W$ (\specificref{Definition}{domination}).  It thus follows that when computing the resilience
of $q_\Tri$, a tuple $W(a,b,c)$ is never needed in a minimum contingency set because it could always
be replaced at least as efficiently by the tuple $A(a)$.  It follows that we may assume that $W$ is
exogenous, i.e., $\res(q_\Tri) \equiv
\res(q'_\Tri)$ where $q'_\Tri \datarule A(x),B(y),C(z),W^\exSymb(x,y,z)$ (\specificref{Prop.}{fact: domination does not change complexity}).

We now reduce $\res(q_\triangle)$ to $\res(q'_\Tri)$.  It will then follow that $\res(q_\Tri)$ is \NP-complete,
and thus so is $\rsp(q_\Tri)$ by \specificref{Lemma}{lem:res_rsp}.
Let $(D,k)$ be an instance of $\res(q_\triangle)$.  We construct an instance $(D',k)$ of $\res(q'_\Tri)$
by constructing relations $A,B,C$ as copies of $R,S,T$ from $D$.  Define $D'=(A,B,C,W^\exSymb)$ as follows:
\begin{align*}
A &= \bigset{\angle{ab}}{R(a,b)\in D}\\
B &= \bigset{\angle{bc}}{S(b,c)\in D}\\
C &= \bigset{\angle{ca}}{T(c,a)\in D}\\
W^\exSymb &= \bigset{(\angle{ab},\angle{bc},\angle{ca})}{a,b,c\in \textrm{dom}(D)}
\end{align*}

\noindent Here, $\dom(D)$ is the set of domain elements of $D$ and $\angle{ab}$ stands for a new
unique domain value resulting from the concatenation of domain values $a$ and $b$.

Observe that there is a 1:1 correspondence between the witnesses of $D\models
q_\triangle$ and the witnesses of $D'\models q'_\Tri$. For example, $(a,b,c)$ is a witness that 
$D\models q_\triangle$ iff tuples $R(a,b),S(b,c),T(c,a)$ occur in $D$.  This holds iff
$(\angle{ab},\angle{bc},\angle{ca})$ is a witness that $D'\models 
q'_\Tri$, i.e., the tuples
$A(\angle{ab}),B(\angle{bc}),C(\angle{ca}),W(\angle{ab},\angle{bc},\angle{ca})$ occur in $D'$.
Thus, every contingency set for $q_\triangle$ in $D$ corresponds to a contingency set of the same
size for $q'_\Tri$ in $D'$.
It follows that $(D,k) \in \res(q_\triangle) \Leftrightarrow (D',k)\in \res(q'_\Tri)$.
\end{proof}

While $q_\triangle$ and $q_\Tri$ appear to be very different, they
share a key common structural property, which we define next.

\begin{definition}[triad]\label{def: triad}
A \emph{triad} is a set of three endogenous atoms, ${\cal T} = \set{S_0,S_1,S_2}$ 
such that for every pair $i,j$, 
there is a path from $S_i$ to $S_j$ that uses no variable occurring in the other 
atom of ${\cal T}$. 
\end{definition}

Observe that atoms $R,S,T$ form a triad in $q_\triangle$ and atoms $A,B,C$ form a triad in
$q_\Tri$ (see \autoref{Fig_HardQueries}).
For example, there is a path from $R$ to $S$ in $q_\triangle$ (across hyperedge $y$) that uses only variables (here $y$) that are not contained in the other atom (here $y \not \in \var(T)$).

A triad is composed of endogenous atoms. Some atoms such as $W$ in $q_T$ are given as endogenous,
but are not needed in contingency sets.  We will simplify the query by making all such atoms 
exogenous.  

\begin{definition}[Domination]\label{domination}
If a query $q$ has endogenous atoms $A,B$ such that $\var(A) \subset \var(B)$, then we say that $A$
\emph{dominates} $B$.\footnote{Recall that we never have the case of $\var(A) = \var(B)$.}
\end{definition}	

We already saw an example in \specificref{Prop.}{prop:tripodQuery}: in $q_\Tri$, each of the atoms $A,B,C$
dominates $W$. The following proposition was proved in \cite{MeliouGMS11}.  Unfortunately however,
it was claimed to hold with respect to \emph{responsibility} rather than \emph{resilience}.  As we will see later,
this proposition fails for responsibility because the tuple we are computing the responsibility of
may interfere with domination (\specificref{Prop.}{responsibility of rats is hard}).

\begin{proposition}[Domination for resilience]\label{fact: domination does not change complexity}
Let $q$ be an sj-free CQ and $q'$ the query resulting from labeling some dominated atoms as exogenous.
Then $\res(q) \equiv \res(q')$.
\end{proposition}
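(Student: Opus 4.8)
The plan is to prove something slightly stronger than inter-reducibility: that $\res(q)$ and $\res(q')$ are the \emph{same} set of instances $(D,k)$, after which the identity map is a (trivially first-order) reduction in both directions and $\res(q)\equiv\res(q')$ follows. The starting observation is that $q$ and $q'$ are the same first-order formula --- only the exogenous/endogenous labelling differs, and $q'$ has strictly fewer endogenous atoms. Hence $D\models q$ iff $D\models q'$, and any contingency set for $q'$ over $D$ is automatically a legal contingency set for $q$ over $D$; this gives the inclusion $\res(q')\subseteq\res(q)$ for free. The real content is the reverse inclusion $\res(q)\subseteq\res(q')$.

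For that I would take $(D,k)\in\res(q)$ and work with a \emph{minimal} contingency set $\Gamma\subseteq D^{\enSymb}$ of size at most $k$. The first lemma I would establish is that minimality forces every tuple of $\Gamma$ to participate in some witness of $D\models q$ (a tuple lying on no witness could be dropped from $\Gamma$ without spoiling $D-\Gamma\not\models q$). Next, for each atom $B$ that is endogenous in $q$ but relabelled exogenous in $q'$ --- such $B$ is dominated by hypothesis --- I would pick an atom $A_B$, endogenous in $q$, whose variable set is \emph{inclusion-minimal} among all atoms whose variable set is properly inside $\var(B)$; the point of this minimality is that $A_B$ is then itself undominated, hence was not relabelled and is still endogenous in $q'$. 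I would then turn $\Gamma$ into $\Gamma'$ by leaving alone the tuples that already sit in atoms endogenous in $q'$ and replacing each tuple $R_B(\vec b)\in\Gamma$ (with $B$ relabelled) by the tuple $R_{A_B}(\vec b')$ of $A_B$, where $\vec b'$ reads off the values of $\vec b$ on the positions of $\var(A_B)$ (well defined since $\var(A_B)\subseteq\var(B)$).

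Three things then need checking, and this is where the work lies. (i) $\Gamma'\subseteq D$, consists only of endogenous tuples of $q'$, and $|\Gamma'|\le|\Gamma|\le k$: membership in $D$ is exactly where the minimality lemma is used, since $R_B(\vec b)$ lying on a witness $\vec w$ forces the $A_B$-tuple selected by $\vec w$ to be precisely $R_{A_B}(\vec b')\in D$. (ii) $D-\Gamma'\not\models q$: the key fact is that, because $\var(A_B)\subseteq\var(B)$, any witness of $D\models q$ that selects $R_B(\vec b)$ also selects $R_{A_B}(\vec b')$, so deleting $R_{A_B}(\vec b')$ destroys at least as many witnesses as deleting $R_B(\vec b)$; hence $\Gamma'$ destroys every witness destroyed by $\Gamma$, and $\Gamma$ destroyed all of them. (iii) Therefore $D-\Gamma'\not\models q'$ (same formula), so $(D,k)\in\res(q')$, and $\res(q)=\res(q')$.

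I expect the main obstacle to be careful bookkeeping rather than any deep step: the witness-preservation claim in (ii) is a one-liner from $\var(A_B)\subseteq\var(B)$, but it has to be combined correctly with the minimality of $\Gamma$ (to keep replacement tuples inside $D$) and with the ``inclusion-minimal atom is undominated'' observation (to keep replacement tuples endogenous in $q'$, and to make sure the replacement map is well defined even when several atoms sit below $B$). There is no $\mathsf{P}$-versus-$\mathsf{NP}$ subtlety here, since the statement is an unconditional identity of decision problems.
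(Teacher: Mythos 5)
Your proposal is correct and follows essentially the same route as the paper's proof: replace each tuple of a relabelled (dominated) atom in a minimum/minimal contingency set by its projection onto a dominating atom, observe that this kills at least as many witnesses, and conclude the two problems coincide. Your version merely spells out details the paper leaves implicit --- using minimality to guarantee the projected tuple actually lies in $D$, and choosing an inclusion-minimal (hence undominated, hence still endogenous in $q'$) dominating atom to handle chains of domination --- so it is a slightly more careful rendering of the same argument rather than a different one.
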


\begin{proof}
Let $\Gamma$ be a minimum contingency set of $q$ in $D$.  
Suppose that atom $A$ dominates atom $B$ but there is some tuple $B(\vec t)\in \Gamma$. 
Let $\vec p$ be the projection of $\vec t$ onto $\var(A)$. 
Then we
can replace $B(\vec t)$ by $A(\vec p)$ 
and we remove at least as many witnesses that $D\models q$.  It
follows, as desired, that the complexity of $\res(q)$ is unchanged if $B$ is exogenous, i.e.,  $\res(q) \equiv \res(q')$.
\end{proof}

When studying resilience, we follow the convention that 
\emph{all dominated atoms are exogenous}.  For example, 
$A$ dominates $R$ and $S$ in the query $\rats$, 
and $B$ dominates $R$ and $S$ in the query $\brats$.  We
thus transform the queries so that the dominated atoms are exogenous.  Exogenous atoms have the superscript ``$\exSymb$''.
\myequation{eqn:raxx}{\begin{array}{rcl}
\raxx & \datarule& A(x),R^\exSymb(x,y),S(y,z),T^\exSymb(z,x) \\
\brxxx & \datarule& A(x),R^\exSymb(x,y),B(y),S^\exSymb(y,z),T^\exSymb(z,x)
\end{array}}

\noindent
By \specificref{Prop.}{fact: domination does not change complexity}, $\res(\rats) \equiv \res(\raxx)$ and 
$\res(\brats) \equiv \res(\brxxx)$.

We now prove our first main result.

\begin{lemma}[Triads make $\res(q)$ hard]\label{hard part dichotomy}
Let $q$ be an sj-free CQ where all dominated atoms are exogenous. If $q$ has a triad, then $\res(q)$
is \NP-complete. 
\end{lemma}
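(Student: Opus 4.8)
Given a triad $\mathcal{T} = \{S_0, S_1, S_2\}$ in $q$, I want to reduce $\res(q_\triangle)$ to $\res(q)$, which suffices by \specificref{Prop.}{thm: hardness of triangle} and \specificref{Lemma}{lem:res_rsp}. The three pairwise paths guaranteed by the triad definition play the role of the three "sides" $R$, $S$, $T$ of the triangle: the path from $S_i$ to $S_j$ avoids every variable of the third atom $S_\ell$, so intuitively the triad carries a homomorphic image of $q_\triangle$ with $S_0, S_1, S_2$ mapped to the three edges of the triangle and the connecting paths realized by (exogenous) padding. So the first step is to fix, for each pair $\{i,j\}$, a path $P_{ij}$ in $\mathcal{H}(q)$ from $S_i$ to $S_j$ none of whose variables occur in the remaining triad atom, and to understand the set of variables that each of $S_0, S_1, S_2$ "exposes" to these paths.

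**Construction of the database.** Starting from an instance $(D, k)$ of $\res(q_\triangle)$ with relations $R, S, T$ over domain $\dom(D)$, I will build $D'$ over $q$ as follows. The three triad atoms $S_0, S_1, S_2$ get populated by copying $R$, $S$, $T$ respectively: a tuple $R(a,b)$ becomes a tuple of $S_0$ whose variables are filled with fresh concatenated domain values (in the style of the $q_\Tri$ proof, using $\langle ab\rangle$-type values) so that the identity of $(a,b)$ is recoverable, and likewise for $S_1$ from $S$ and $S_2$ from $T$. For every other atom of $q$, and in particular for all atoms lying on the connecting paths $P_{ij}$, I populate it exogenously with all tuples consistent with these fresh values (a "complete" relation, exactly as $W^\exSymb$ is filled in \specificref{Prop.}{prop:tripodQuery}); by the standing convention that dominated atoms are already exogenous and by \specificref{Prop.}{fact: domination does not change complexity} this is harmless, and I additionally observe that any endogenous non-triad atom can be assumed not to help a contingency set because the witnesses will be set up so that it is never cheaper to delete from it than from one of $S_0, S_1, S_2$. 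The key points to verify are: (i) every witness of $D' \models q$ projects, via the $\langle\cdot\rangle$-values, to a witness $(a,b,c)$ of $D \models q_\triangle$, using that the connecting paths force the $S_i$-tuples participating in a witness to "agree" on the shared endpoints just as the triangle's edges do; and (ii) conversely every triangle witness lifts to a $q$-witness. This gives a size-preserving bijection between contingency sets for $q_\triangle$ in $D$ and minimum contingency sets for $q$ in $D'$, hence $(D,k)\in\res(q_\triangle) \Leftrightarrow (D',k)\in\res(q)$.

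**The main obstacle.** The delicate part is step (i) — controlling \emph{spurious witnesses}. When we take the union of the three "triangle-edge" relations together with completely-filled padding relations, we must ensure that a witness of $q$ cannot be assembled from $S_i$-tuples that do \emph{not} come from a common triangle, which would let the contingency set "cheat." This is precisely the role the buffering/even-segment trick plays in the $q_\triangle$ hardness proof, and here it must be engineered through the structure of the paths $P_{ij}$: because $P_{ij}$ uses no variable of $S_\ell$, the way a witness binds the variables shared between $S_i$ and the path is decoupled from how it binds variables shared with $S_\ell$, so I need to argue that the only consistent global bindings are the "honest" ones induced by an actual triangle. A clean way to do this is to make the fresh values injective enough that reading off the $\langle ab\rangle$-component of the value bound to an endpoint variable of $S_0$ uniquely determines the $R$-tuple, and to check that consistency along each path $P_{ij}$ forces the $S_i$ and $S_j$ endpoints to be the "same" triangle vertex. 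I also need the minor housekeeping that disconnected components and atoms with identical variable sets have been removed (as the excerpt already assumes), and that the reduction is first-order computable, which it is since all of $D'$ is defined by simple first-order formulas over $D$.

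**Remark on special cases.** I should double-check the boundary cases where a connecting path $P_{ij}$ has length zero, i.e., $S_i$ and $S_j$ already share a variable not in $S_\ell$ — then the corresponding "edge identification" is direct, exactly as in the triangle itself — and where the $S_i$ share additional variables; in all cases the path-avoids-$S_\ell$ condition is what guarantees the three pairwise linkings are independent, which is the whole point of \specificref{Definition}{def: triad}. Once the bijection of contingency sets is established, \NP membership of $\res(q)$ is immediate from \specificref{Definition}{def: resilience}, so the reduction completes the proof.
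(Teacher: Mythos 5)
Your overall strategy coincides with the paper's (reduce $\res(q_\triangle)$ to $\res(q)$ by copying $R,S,T$ into $S_0,S_1,S_2$ with encoded values and letting the triad paths enforce agreement), but the construction as you describe it breaks down exactly where the paper does its real work. Your prescription ``fill the variables of $S_i$ with fresh concatenated domain values'' is inconsistent as soon as two triad atoms share a variable: a variable in $\var(S_0)\cap\var(S_1)$ cannot carry an $\angle{ab}$-style value in $S_0$-tuples and an $\angle{bc}$-style value in $S_1$-tuples, so with disjoint sets of fresh values the constructed $D'$ would not even satisfy $q$ and the reduction fails. The paper resolves this (its Case~2) by partitioning $\var(q)$ into seven classes and assigning, per triangle $(a,b,c)$, the raw values $a$, $b$, $c$ to the pairwise-shared variables, $\angle{ab},\angle{bc},\angle{ca}$ to the unshared triad variables, and $\angle{abc}$ to all remaining variables (\autoref{variable partition eq}), after first freezing to a constant any variable shared by all three triad atoms --- a case you never mention. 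Your ``Remark on special cases'' names the shared-variable situation but supplies no assignment, and that assignment is the decisive device, not a boundary check.

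Second, you leave the population of the non-triad atoms ambiguous (``all tuples consistent with these fresh values, a complete relation'') and then merely assert the two facts that make the reduction correct. In the paper every non-triad relation contains exactly one coherent tuple per triangle witness $(a,b,c)$, with values chosen uniformly by partition class; this is what guarantees that every edge on a path $P_{ij}$ carries a value determining the shared triangle vertex, so no spurious witness can be assembled, and it also justifies ignoring endogenous non-triad atoms: such a tuple is determined by a single triangle, so deleting it removes no more witnesses than deleting a corresponding triad tuple, and can always be replaced. If instead the padding relations were complete in the cross-product sense, coherence along the paths would be lost and the equivalence $(D,k)\in\res(q_\triangle)\Leftrightarrow(D',k)\in\res(q)$ would fail. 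Relatedly, you cannot ``populate exogenously'' atoms that $q$ declares endogenous --- the query is fixed by the reduction --- and \specificref{Prop.}{fact: domination does not change complexity} does not apply to them since they need not be dominated; the replacement argument above is the justification that is actually needed. So your plan points in the right direction and correctly identifies spurious witnesses as the obstacle, but the deferred verification --- the per-class value assignment and the coherence argument along the triad paths --- is the proof, and it is missing.
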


\begin{proof}
Let $q$ be a query with triad ${\cal T}=\set{S_0,S_1,S_2}$.
We build a reduction from $\res(q_\triangle)$ to $\res(q)$. 
Given any $D$ that satisfies $q_\triangle$ we will produce a database $D'$ that satisfies $q$ such
that for all $k$:
\myequation{eq res hard case1}{(D,k)\in \res(q_\triangle) \quad\Leftrightarrow\quad (D',k)\in\res(q)}

\noindent
We will assume that no variable is shared by all three elements of ${\cal T}$  
(we can ignore any
such variable by setting it to a constant). 
Our proof splits into two cases:

\emph{Case 1}:  $\var(S_0), \var(S_1), \var(S_2)$ 
are pairwise disjoint: 
Our reduction is similar to the reduction from $q_\triangle$ to $q_\Tri$ (\specificref{Prop.}{prop:tripodQuery}).

We first define the triad relations in $D'$:
\myequation{case1Eq1}{
\begin{array}{rcl}
S_0 &=& \bigset{(\angle{ab}, \ldots, \angle{ab})}{R(a,b) \in D} \\
S_1 &=& \bigset{(\angle{bc}, \ldots, \angle{bc})}{S(b,c) \in D} \\
S_2 &=& \bigset{(\angle{ca}, \ldots, \angle{ca})}{T(c,a) \in D}. 
\end{array}}
Thus, each tuple of, for example, $S_0$ consists of identical entries with value $\angle{ab}$ for each
pair $R(a,b) \in D$.  Thus, $S_0,S_1,S_2$ mirror $R,S,T$, respectively. 

To define all the relations corresponding to the other atoms $A_i$ of $D'$, we first
partition the variables of $q$ into 4 disjoint sets: $\var(q)= \var(S_0)\cup\var(S_1)\cup \var(S_2) \cup
V_3$.  Now for each atom $A_i$, arrange its variables in these four groups. 
Then define the relation $R'_i$ of $D'$ corresponding to atom $A_i$ as follows
\myequation{case1 eq}{R_i' = \bigset{(\angle{ab};\angle{bc};\angle{ca};\angle{abc})}{D \models q_\triangle(a,b,c)}}
\noindent For example, all the variables $v\in \var(S_0)$ are assigned the value $\angle{ab}$
and all the variables $v\in V_3$ are assigned $\angle{abc}$.

By the definition of triad, there is a path from $S_0$ to $S_1$ not using any edges 
(variables) from
$\var(S_2)$. 
Thus, any witness of  $D'\models q$ 
that includes occurrences of $\angle{ab}$ and $\angle{b'c'}$ must have $b =
b'$.  

Similarly, a path from $S_1$ to $S_2$ guarantees that $c$ is preserved and a path from $S_2$ to
$S_0$ guarantees that $a$ is preserved.  It follows that the witnesses that $D' \models q$ are
essentially identical to the witnesses that $D\models q_\triangle(x,y,z)$ (see \autoref{hard part
  fig}).\footnote{More precisely, if $(a,b,c)$ is a witness that $D\models
  q_\triangle$, then $(\angle{ab},\angle{bc},\angle{ca},\angle{abc},a,b,c)$ is a witness that
  $D'\models q$, with the variables partitioned according to Eq. \ref{variable partition eq},
and these are the only
  possible such witnesses.  
}

Furthermore, any minimum contingency set only needs tuples from
$S_0, S_1$ or $S_2$.  Thus the sizes of minimum contingency sets are preserved, i.e., \specificref{Eq.}{eq res
  hard case1} holds, as desired.  Thus $\res(q)$ is \NP-complete.

\emph{Case 2}:  $\var(S_i) \cap \var(S_j) \ne \emptyset $  for some $i\ne j$:
We generalize the construction from Case 1 as follows.  Partition $\var(S_i)$ into those 
unshared, those shared with $S_{i-1}$, and those shared with $S_{i+1}$ (addition here is mod 3).

We then assign the
relations of the triad as follows:
\begin{align*}
S_0 &= \bigset{(\angle{ab}; a; b)}{R(a,b) \in D} \\
S_1 &= \bigset{(\angle{bc}; b; c)}{S(b,c) \in D} \\
S_2 &= \bigset{(\angle{ca}; c; a)}{T(c,a) \in D}
\end{align*}
Since none of the $S_i$'s is dominated, both $a$ and $b$ occur in each tuple of $S_0$, both of $b$
and $c$ in each tuple of $S_1$ and both of $c$ and $a$ in each tuple of $S_2$.
Thus, as in Case 1, $S_0,S_1,S_2$ capture $R,S,T$, respectively.  
The key ideas is now that we partition all the variables $\var(q)$ into 7 sets 
according to their respective appearance in each of the 3 tables.
For each assignment of $x,y,z$ to values $a,b,c$ in $D$, we will then make assignments to the variables according to their partition:
\begin{align}
\begin{tabular}{ >{$}c<{$} | >{$}c<{$} | >{$}c<{$} }\label{variable partition eq}
\textrm{set name} & \textrm{variable partition} 						& \textrm{assignment}	\\
 \hline
V_0 & \var(S_0)-(\var(S_1) \cup \var(S_2))				& \angle{ab}  	\\  
V_1 &  \var(S_1)-(\var(S_0) \cup \var(S_2))				& \angle{bc} 	\\
V_2 &  \var(S_2)-(\var(S_0) \cup \var(S_1))				& \angle{ca} 	\\
V_3 &  \var(q) - (\var(S_0)\cup \var(S_1)\cup \var(S_2))	& \angle{abc} 	\\              
V_4 &  \var(S_2) \cap \var(S_0)							& a 	\\
V_5 &  \var(S_0) \cap \var(S_1)							& b		\\
V_6 &  \var(S_1) \cap \var(S_2)							& c 	
\end{tabular}
\end{align}

We then define the relations in $D'$ corresponding to each of the other atoms $A$ of $q$
to be the following set of tuples, where the only difference is which
of the 7 members of the partition of variables occurs in $\var(A)$.
\myequation{case2 eq}{\hspace*{-.1in}\bigset{(\angle{ab};\angle{bc};\angle{ca};\angle{abc};a;b;c)}
	{D\!\models\! q_\triangle(a,b,c)}}

By the definition of a triad, there is a path from $S_0$ to $S_1$ not using any edges (variables) from
$S_2$.  Thus, ``$b$'' is always present (see Eq. \ref{variable partition eq}).
Thus, any witness including occurrences of
some of  $\angle{ab},b',\angle{b''c}$ must 
have $b = b' = b''$.  Thus, as in Case 1, 
the witnesses of $D' \models q$ are essentially identical to the
witnesses of $D\models q_\triangle$ and we have reduced
$\res(q_\triangle)$ to $\res(q)$ (see \autoref{hard part fig}).
\end{proof}

\begin{figure}
\begin{center}
\subfloat[Case 1]{
\begin{tikzpicture}[ scale=.3,
	circ/.style = {circle, minimum size=9mm, draw, fill=white}]]
\node[circ] (S0) at(0,0)  {{\color{red} $S_0(\angle{ab})$}};
\node[circ] (S1) at(6.5,-8)  {{\color{dg} $S_1(\angle{bc})$}};
\node[circ] (S2) at(-6.5,-8)  {{\color{blue} $S_2(\angle{ac})$}};
\draw (6.5,-2.5) node {\begin{tabular}{c}\\
                                $b$\\
                             preserved\\
                       \end{tabular} };
\draw (-6.5,-2.5) node {\begin{tabular}{c}\\
                                $a$\\
                             preserved\\
                       \end{tabular} };
\draw (0,-9) node {  $c$ preserved};
\path[line width=2pt] (S0) edge (S1);
\draw[line width=2pt] (S1) -- (S2);
\draw[line width=2pt] (S2) -- (S0);
\end{tikzpicture}
\label{hard part fig case 1}
}
\hspace*{.5in}
\subfloat[Case 2]{
\begin{tikzpicture}[ scale=.3,
	circ/.style = {circle, minimum size=9mm, draw, fill=white}]]
\node[circ] (S0) at(0,0)  {{\color{red} $S_0(\angle{ab};a;b)$}};
\node[circ] (S1) at(6.5,-8)  {{\color{dg} $S_1(\angle{bc};b;c)$}};
\node[circ] (S2) at(-6.5,-8)  {{\color{blue} $S_2(\angle{ca};c;a)$}};
\draw (6.5,-2.5) node {\begin{tabular}{c}\\
                                $b$\\
                             preserved\\
                       \end{tabular} };
\draw (-6.5,-2.5) node {\begin{tabular}{c}\\
                                $a$\\
                             preserved\\
                       \end{tabular} };
\draw (0,-9) node {\begin{tabular}{c}\\
                                $c$\\
                             preserved\\
                       \end{tabular} };
\path[line width=2pt] (S0) edge (S1);
\draw[line width=2pt] (S1) -- (S2);
\draw[line width=2pt] (S2) -- (S0);
\end{tikzpicture}
\label{hard part fig case 2}
}
\end{center}
\caption{Reduction from $\res(q_\triangle)$ to $\res(q)$ when $q$ contains a triad
  $\set{S_0,S_1,S_2}$ in the proof of 
\specificref{Lemma}{hard part dichotomy}.
}
\label{hard part fig}
\end{figure}

\subsection{Polynomial algorithm for linear queries}\label{sec: easy part}
We just showed that resilience for queries with triads is
\NP-complete. Next we will prove a strong converse: 
resilience for triad-free queries is in \ptime.
We start by defining a class of queries for which resilience is
known to be in \ptime.

\begin{definition}[Linear Query]\label{def: linearHypergraph}
A query $q$ is \emph{linear} if its atoms may be
arranged in a linear order such that each variable occurs in a contiguous sequence of atoms.
\end{definition}

\begin{example}[Linear Query]\label{linear ex}
Geometrically, a query is linear if all of the vertices of its hypergraph can be drawn along a straight
line and all of its hyperedges can be drawn as convex regions.
For example, the following query is linear: $q \datarule A(x),R(x,y),S(y,z)$   (see \autoref{linear fig}).
\end{example}

\begin{figure}
\begin{center}
\begin{tikzpicture}[ scale=.3]
\draw (-6,0) node {{\color{red} $A$}};
\draw (0,0) node {{\color{dg} $R$}};  
\draw (6,0) node {{\color{blue} $S$}};  
\draw (3,-2.2) node {{  $y$}};
\draw (7.3,0) node {{  $z$}};
\draw (-3,-2.2) node {{ $x$}};
\draw (-3,0) ellipse (6.5 and 1.5);
\draw (6,0) ellipse (1 and 1);
\draw (3,0) ellipse (6.5 and 1.5);
\end{tikzpicture}
\end{center}
\caption{\specificref{Example}{linear ex}: Linear query  $q \datarule A(x),R(x,y),S(y,z)$.
\label{linear fig}}
\end{figure}

The responsibility of linear queries is known to be in $\PTIME$ and thus by
\Autoref{lem:res_rsp}, resilience of linear queries is in $\PTIME$ as well.

\begin{fact}[Linear queries in \ptime~\cite{MeliouGMS11}]\label{linear query is easy}\label{coro:
    res linear query is easy} 
  For any linear sj-free CQ $q$, $\rsp(q)$ (and thus also $\res(q)$) are in \ptime.
\end{fact}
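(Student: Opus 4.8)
The plan is to compute a minimum contingency set by reducing to a minimum cut (equivalently, maximum flow) on a layered network built from the database, using the linear order on the atoms to turn witnesses into source-to-sink paths. By \autoref{lem:res_rsp} it suffices to show $\rsp(q)\in\PTIME$, since then $\res(q)\in\PTIME$ follows; I therefore concentrate on responsibility, treating resilience as the simpler special case to which the construction specializes.

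\textbf{The network.}
First I would fix a linear order $A_1,\dots,A_m$ of the atoms witnessing linearity (\autoref{def: linearHypergraph}) and build a directed layered graph $G_D$: layer $i$ has one node per tuple of the relation of $A_i$; add a super-source $\sigma$ with an edge to every tuple of $A_1$, a super-sink $\tau$ with an edge from every tuple of $A_m$, and an edge from $u\in A_i$ to $v\in A_{i+1}$ exactly when $u$ and $v$ agree on every variable in $\var(A_i)\cap\var(A_{i+1})$. The crucial observation is the contiguity property: since each variable of $q$ occupies a contiguous block of atoms, agreement forced between consecutive atoms propagates transitively across each whole block, so a directed path $\sigma,u_1,\dots,u_m,\tau$ selects exactly one tuple per atom that are jointly consistent, and these paths are in bijection with the witnesses of $D\models q$. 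Finally I would give every node from an endogenous atom capacity $1$ and every node from an exogenous atom capacity $\infty$.

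\textbf{From cuts to contingency sets.}
Under this translation, the minimum number of endogenous tuples whose deletion makes $q$ false --- i.e.\ what $\res(q)$ asks about --- is exactly the minimum-weight $\sigma$--$\tau$ vertex cut of $G_D$, since a set $\Gamma\subseteq D^{\enSymb}$ destroys all witnesses iff its nodes separate $\sigma$ from $\tau$; splitting each node into an in/out pair carrying the node's capacity turns this into an ordinary minimum edge cut, computable in polynomial time by max-flow. For $\rsp(q)$ with target tuple $t_0\in A_i$ the objective changes to: delete as few endogenous tuples as possible so that every surviving witness runs through $t_0$ while at least one witness through $t_0$ is preserved. This is again cast as a min-cut on a refined network --- informally one ``splits'' $t_0$ into a left and a right copy and reattaches them to $\sigma$ and $\tau$ through uncuttable arcs, so that every $\sigma$--$\tau$ route bypassing $t_0$ must be cut while a single route $\sigma{\to}\cdots{\to}t_0{\to}\cdots{\to}\tau$ can survive --- and again reduces to max-flow; the precise network and the proof that its min-cut equals the minimum contingency set for $t_0$ are given by \citeN{MeliouGMS11}.

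\textbf{Main obstacle.}
The step I expect to be the main obstacle is the correctness of the responsibility network, not the resilience one: the target tuple interacts delicately with the path structure --- in fact the minimum contingency set for $t_0$ can strictly exceed the plain min-cut of $G_D$ with $t_0$ deleted --- so one must argue that the refined network neither forces every $t_0$-witness to be destroyed nor lets a $t_0$-bypassing witness escape, and that its node splitting creates no spurious paths. A secondary point worth isolating is that the witness-to-path bijection genuinely uses contiguity: for a non-linear query, agreement between consecutive atoms no longer implies global agreement, which is exactly why this argument --- and hence the polynomial-time bound --- does not extend past linear queries, and motivates the triad analysis that precedes this subsection.
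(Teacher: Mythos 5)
Your resilience half is correct and is essentially the paper's own argument: the paper builds the same layered flow network, except that tuples become unit-weight edges (with nodes the projections of consecutive joins onto $\var(A_i)\cap\var(A_{i+1})$) rather than unit-capacity vertices; after your in/out node splitting the two formulations coincide, and the contiguity argument identifying source--sink paths with witnesses is the same one the paper uses.

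The gap is the responsibility step, which you yourself flag as the main obstacle and then resolve with a construction that does not work. A contingency set $\Gamma$ for $t_0$ must hit every witness avoiding $t_0$ \emph{and leave at least one witness through $t_0$ intact}; the second condition is not monotone under enlarging $\Gamma$, so it cannot be captured by a single cut constraint, and the particular gadget you sketch fails. Take the linear query $q \datarule A(x),R(x,y),B(y)$ with $A=\set{a}$, $R=\set{(a,b_1),\ldots,(a,b_n)}$, $B=\set{b_1,\ldots,b_n}$, and $t_0=R(a,b_1)$: the minimum contingency set has size $n-1$ (one of $R(a,b_i),B(b_i)$ for each $i\geq 2$, while $A(a)$ may not be touched), yet any single network in which $t_0$ is merely made free, or split and wired to $\sigma$ and $\tau$ by uncuttable arcs, admits the cut $\set{A(a)}$ of cost $1$: nothing in such a network prevents the minimum cut from destroying all $t_0$-witnesses as well, and wiring $t_0$'s copies to $\sigma$ and $\tau$ in fact forces every partial path into and out of $t_0$ to be severed, the opposite of preserving one. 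The paper's proof avoids this by enumeration, not by a single cut: for each of the at most $O(n^r)$ witnesses $\vec w$ extending $t_0$ it solves a min cut in the network $N_{\vec w}(q,D)$ in which $t_0$ gets weight $0$ and the \emph{remaining tuples of $\vec w$} get weight $\infty$, and the responsibility is the minimum of these values over all such $\vec w$; pinning an entire witness is what guarantees that some $t_0$-witness survives the cut, and the per-witness loop is exactly where the polynomial (query-dependent) exponent comes from. Deferring ``the precise network'' to \citeN{MeliouGMS11} does not close this gap --- the witness enumeration is precisely the missing idea, and it is also the part of the proof the paper later reuses for responsibility with wildcards in \specificref{Lemma}{linear easy rspStar}.
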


\begin{proof}
We give the proof for completeness and because we
will need an extension of the proof for a later result (\specificref{Lemma}{linear easy rspStar}).  

Let $q\datarule A_1(\vec{z_1})\land\cdots\land A_r(\vec{z_r})$ be a linear query, arranged in its
linear ordering.  We first show that $\res(q) \in \PTIME$. Let $D\models q$.  We construct a network
$N=N(q,D)$ as follows.
$N$ is an (r+1)-partite graph consisting of vertices $V = \set{s} \cup P_1 \cup P_2 \cup \cdots \cup
P_{r-1} \cup \set{t}$.  Each edge of $N$ has weight $1$ and corresponds
to exactly one tuple  $A_i(\vec a) \in D$. $P_i$ is the projection onto
$\var(A_i)\cap\var(A_{i+1})$ of $A_i^D \bowtie A_{i+1}^D$.  
The edge corresponding to $A_i(\vec a)$ is 
$(\pi_{\var(A_{i-1}) \cap \var(A_i)}(\vec a), \pi_{\var(A_{i}) \cap
  \var(A_{i+1})}(\vec a))$. However, $s$ is the starting point of all the $A_1$ edges, and $t$ is
the endpoint of all the $A_r$ edges (see \autoref{fig:NetworkFlow}).  

With this construction, a cut in $N(q,D)$ is exactly a contingency set for $(q,D)$ and thus a
min cut is exactly a minimum contingency set.  Thus we have reduced $\res(q)$ to network flow.

\begin{figure}
\begin{center}
\subfloat[\small{$N(q,D)$}]{
\begin{tikzpicture}[ scale=.25,
	every circle node/.style={fill=white, minimum size = 4mm, draw},
	]
\node (s)  at 	(0,0) [circle] {$s$};
\node (a1) at 	(8,4) [circle] {$a_1$};
\node (a2) at 	(8,0) [circle] {$a_2$};
\node (a3) at 	(8,-4) [circle] {$a_3$};
\node (b1) at 	(16,4) [circle] {$b_1$};
\node (b2) at 	(16,0) [circle] {$b_2$};
\node (b3) at 	(16,-4) [circle] {$b_3$};
\node (t)  at 	(24,0) [circle] {$t$};
\foreach \from/\to in {s/a1,s/a2,s/a3,a1/b1,a1/b2,a2/b2,a3/b3,b1/t,b2/t,b3/t}
\foreach \from/\to/\l in {s/a1/1,s/a2/1,s/a3/1,a1/b1/1,a1/b2/1,a2/b2/1,a3/b3/1,b1/t/1,b2/t/1,b3/t/1}
\draw[->,line width=2pt,color=black] (\from) -- (\to) node[pos=.5,above]{\l};
\draw[->,line width=2pt,color=black] (b1) to [out=0,in=90] (t);
\node at (20,5) {1};
\end{tikzpicture}
}
\hspace*{.15in}
\subfloat[\small{$N_{\vec w}(q,D)$; $\;\vec w = (a_1,b_2,c_2)$; $\vec d = R(a_1,b_2)$}]{
\begin{tikzpicture}[ scale=.25,
	every circle node/.style={fill=white, minimum size = 4mm, draw},
	]
\node (s)  at 	(0,0) [circle] {$s$};
\node (a1) at 	(8,4) [circle] {$a_1$};
\node (a2) at 	(8,0) [circle] {$a_2$};
\node (a3) at 	(8,-4) [circle] {$a_3$};
\node (b1) at 	(16,4) [circle] {$b_1$};
\node (b2) at 	(16,0) [circle] {$b_2$};
\node (b3) at 	(16,-4) [circle] {$b_3$};
\node (t)  at 	(24,0) [circle] {$t$};
\foreach \from/\to/\l in 
{s/a2/1,s/a3/1,a1/b1/1,a2/b2/1,a3/b3/1,b1/t/1,b3/t/1}
\draw[->,line width=2pt,color=black] (\from) -- (\to) node[pos=.5,above]{\l};
\foreach \from/\to/\l in
{s/a1/$\infty$,a1/b2/0,b2/t/$\infty$}
\draw[->,line width=2pt,color=red] (\from) -- (\to) node[pos=.5,above]{\l};
\draw[->,line width=2pt,color=black] (b1) to [out=0,in=90] (t);
\node at (20,5) {1};
\end{tikzpicture}
}
\end{center}
\caption{Network flow in the proof of \specificref{Fact}{linear query is easy} illustrated for query  
$q \datarule A(x),R(x,y),S(y,z)$ from \autoref{linear fig}
and database $D= \{ A,R,S \}$, 
where $A=\set{a_1,a_2,a_3}$,
$R=\set{(a_1,b_1),(a_1,b_2),(a_2,b_2),(a_3,b_3)}$,
  $S=\set{(b_1,c_1),(b_1,c_2),(b_2,c_2),(b_3,c_3)}$.  The drawing on the left is $N(q,D)$, the
  result of the reduction from
  $\res(q,D)$ to network flow.  The drawing on the right is $N_{\vec w}(q,D)$
  where we are computing the responsibility of $\vec d = R(a_1,b_2)$ and $\vec w =  (a_1,b_2,c_2)$.
\label{fig:NetworkFlow}}
\end{figure}

A similar but more complicated construction shows how to use network flow to compute the
responsibility of tuple $\vec d\in D$ 
for the linear query $q$.  We
construct the same network $N(q,D)$ but now we modify some of the edge weights.  We want to
compute the minimum size of a contingency set $\Gamma$ such that $D-\Gamma\models q$ but
$D-(\Gamma\cup{\vec d}) \not\models q$.  Consider all the witnesses $\vec w$ that $D\models q$ such that
$\vec w$ extends $\vec d$.  
For any contingency set $\Gamma$ for $\vec d$,  at least one such $\vec w$ must witness
$D-\Gamma\models q$.
Thus, $\Gamma$ must be disjoint from $\vec w$.  Observe that a contingency set
for $\vec d$ which is disjoint from $\vec w$ is a cut of $N(q,D)$ which removes $\vec d$ but leaves
the rest of $\vec w$.  The minimum weight of such a contingency set is exactly the min cut of $N_{\vec w}(q,D)$
which is formed from $N(q,D)$ by changing the weight of $\vec d$ to 0 (as it is removed at no cost) and changing the weights
of all the edges in $\vec w - \vec d$ 
to $\infty$: they cannot be removed.
Thus, the responsibility of $\vec d$ is the minimum over all witnesses $\vec w$ extending $\vec d$ of the
min cut of $N_{\vec w}(q,D)$.
We illustrate this construction for the query from \specificref{Example}{linear ex} in \autoref{fig:NetworkFlow}.

Thus we have shown that the complexity of computing $\res(q)$ is at most that of network flow.
On the other hand, $\rsp(q)$ may be computed by computing network flow of all the networks
$N_{\vec w}(q,D)$.  For each fixed $q$, there are at most ${O}(n^r)$ such  
$\vec w$.  Thus, for each $q$, $\rsp(q)\in\PTIME$.  
Note that for linear queries, the complexity of resilience is no more
than the complexity of network flow.  However, the complexity of resilience is in $\PTIME$ for each
fixed $q$, but we do not currently have a fixed upper bound on the size of the exponent.
\end{proof}

If all queries without a triad were linear, then this would complete the dichotomy theorem for resilience.
While this is not the case, we will show that \emph{any triad-free query can be transformed into a query
of equivalent complexity that is linear}.  

Recall that when studying resilience, we make atoms which are dominated, exogenous 
(\specificref{Prop.}{fact: domination does not change complexity}).  This is done, for example, to the rats
and brats queries, i.e., $\res(\rats) \equiv \res(\raxx)$ and $\res(\brats) \equiv \res(\brxxx)$
(see \specificref{Eq.}{eqn:raxx}).
Neither $\raxx$ nor $\brxxx$ is linear.  However they can be transformed to linear queries 
without changing their complexity via the following 
transformation from \cite{MeliouGMS11}:

\begin{definition}[Dissociation] \label{def: dissociation}
Let $A^\exSymb$ be an exogenous atom in a query $q$, and $v \in \var(q)$ a variable 
that does not occur in $A^\exSymb$. 
Let $q'$ be the same as $q$ except that we add $v$ to the arguments $A^\exSymb$. 
This transformation is called \emph{dissociation}.  
\end{definition}

\begin{example}[Dissociation]\label{dissociation linear ex}
The queries $\raxx$ and $\brxxx$ (\specificref{Eq.}{eqn:raxx}) have no triads but they are not
linear.  However, applying certain dissociations, we obtain
the following linear queries:
\begin{align*}
\raxx' & \datarule A(x),R^\exSymb(x,y,z),S(y,z),T^\exSymb(x,y,z)\\
\brxxx' & \datarule A(x),R^\exSymb(x,y,z),B(y)S^\exSymb(x,y,z),T^\exSymb(x,y,z)
\end{align*}
Note also that $\raxx'$ and $\brxxx'$ have duplicate atoms which we finally delete, without
affecting their complexity:
\begin{align*}
\raxx'' & \datarule A(x),R^\exSymb(x,y,z),S(y,z) \\
\brxxx'' & \datarule A(x),R^\exSymb(x,y,z),B(y)
\end{align*}
\end{example}

The key fact is that \emph{dissociation cannot decrease the complexity of resilience or
responsibility}.

\begin{lemma}[Dissociation increases complexity~\cite{MeliouGMS11}]\label{fact: dissociation do not decrease complexity}
If $q'$ is obtained from $q$ through dissociation, then $\res(q) \leq \res(q')$.
\end{lemma}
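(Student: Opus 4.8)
The claim is that dissociation cannot make resilience easier: if $q'$ is $q$ with a variable $v$ added to an exogenous atom $A^\exSymb$, then $\res(q) \leq \res(q')$. The natural strategy is to exhibit a first-order reduction: given an instance $D$ of $\res(q)$, build an instance $D'$ of $\res(q')$ such that for all $k$, $(D,k) \in \res(q) \Leftrightarrow (D',k) \in \res(q')$. Since $q'$ differs from $q$ only in the arity of $A^\exSymb$, the database $D'$ should agree with $D$ on every relation except $A^\exSymb$, and on $A^\exSymb$ it should be the natural ``padding'' that records, for each original tuple, every value of $v$ that is consistent with it. Concretely, I would set $A'^{\exSymb} = \{\, A^\exSymb(\vec a, b) \mid A^\exSymb(\vec a) \in D \text{ and } b \text{ is a value appearing in the } v\text{-column somewhere in } D \,\}$ (equivalently, join $A^\exSymb$ with the active domain restricted to $v$'s possible values), and leave all endogenous relations and all other exogenous relations unchanged.

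**Key steps, in order.** (1) Define $D'$ as above; note this is a first-order definable transformation. (2) Show the witness sets correspond: $\vec w$ is a witness for $D \models q$ iff $\vec w$ is a witness for $D' \models q'$. The forward direction is immediate since adding the extra coordinate $b = \vec w(v)$ to the $A^\exSymb$-tuple is always available in $D'$ by construction; the backward direction holds because projecting away the $v$-coordinate of the $A'^\exSymb$-tuple recovers a tuple in $A^\exSymb$, and all other atoms are literally the same. In particular $D \models q \Leftrightarrow D' \models q'$. (3) Show contingency sets correspond. Since $A^\exSymb$ is exogenous in both queries, no tuple of $A^\exSymb$ or $A'^\exSymb$ ever appears in a contingency set; a contingency set $\Gamma$ consists entirely of tuples from the endogenous relations, which are identical in $D$ and $D'$. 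Because the witness sets of $(q,D)$ and $(q',D')$ coincide (after the trivial coordinate bijection on witnesses), $D - \Gamma \not\models q$ iff $D' - \Gamma \not\models q'$ for any such $\Gamma$. (4) Conclude $(D,k) \in \res(q) \Leftrightarrow (D',k) \in \res(q')$, which is the desired reduction.

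**Main obstacle.** The only subtlety is making sure the ``padding'' of $A^\exSymb$ is rich enough that every witness of $q$ survives, yet not so rich that it creates new witnesses of $q'$ that have no counterpart in $q$. Because $v \notin \var(A^\exSymb)$ in the original query, $v$ must occur in some other atom of $q$ (else $q' = q$ trivially, or $v$ is irrelevant); so in any witness the value assigned to $v$ is already forced by those other atoms, and it lies in the set of values used to pad $A'^\exSymb$. Conversely, any witness of $D' \models q'$ assigns $v$ some value, and dropping the $v$-coordinate from the $A'^\exSymb$-tuple still lands in $A^\exSymb$, so it is a genuine witness of $q$. This is exactly the ``$v$ is already determined elsewhere'' argument; once it is spelled out, the equivalence of contingency-set sizes is routine. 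I expect essentially no difficulty beyond stating this correspondence carefully, since the paper has already set up the convention that dominated atoms are exogenous and that exogenous tuples are never in contingency sets (\specificref{Prop.}{fact: domination does not change complexity}).
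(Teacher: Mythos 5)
Your proposal is correct and follows essentially the same route as the paper: map $(D,k)$ to $(D',k)$ where the dissociated exogenous atom is padded so that the new coordinate imposes no constraint, then observe that witnesses and contingency sets (which never contain exogenous tuples) are unchanged. The only inessential difference is that the paper pads ${A^\exSymb}'$ with all of $\dom(D)$ rather than only the values occurring in positions of $v$, which makes the ``no new witnesses'' point immediate but changes nothing substantive.
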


\begin{proof}
Let $R^\exSymb(\vec z)$ be the atom that has been changed to ${R^\exSymb}'(\vec z,v)$.
We  reduce $\res(q)$ to $\res(q')$ by mapping $(D,k)$ to $(D',k)$ where $D'$ is the same as $D$ with the
exception that we let ${R^\exSymb}' = \bigset{(\vec t,d)}{R^\exSymb(\vec t)\in D; d\in \mbox{dom}(D)}$. 
This transformation does not change the witness 
set nor the contingency sets, because, by the way we formed ${R^\exSymb}'$ from $R^\exSymb$, the conjunct ${R^\exSymb}'(\vec z,v)$
places the same restriction on $D'$ that $R^\exSymb(\vec z)$ places on $D$.
\end{proof}
	
The other direction does not hold, i.e,
dissociation may strictly increase the complexity of the resilience of a query\footnote{For example,
the query $\ell \datarule A(x),W_1^\exSymb(x,y),B(y),W_2^\exSymb(y,z),C(z)$ is linear, but by applying
dissociation we can transform it to $q_\Tri$.}.
It follows from \specificref{Lemma}{fact: dissociation do not decrease complexity} that if $q$ can be
dissociated to a linear query, then $\res(q)\in\PTIME$.  
In particular, the above dissociations of $\raxx$ and $\brxxx$ prove that 
$\res(\raxx)$ and $\res(\brxxx)$ are in $\PTIME$. Thus, since the
transformations from $\rats$ to $\raxx$ and $\brats$ to $\brxxx$ preserve the complexity of
resilience, we conclude that $\res(\rats)$ and $\res(\brats)$ are easy.
Later we will see that, for responsibility, $\rsp(\brats)\in\PTIME$ but $\rsp(\rats)$ is \NP-complete
(\specificref{Prop.}{responsibility of rats is hard}).

\begin{corollary}\label{res(rats) easy}
$\res(\rats)$ and $\res(\brats)$ are in $\PTIME$.
\end{corollary}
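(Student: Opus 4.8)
The plan is to derive this corollary purely by chaining results already established: \specificref{Prop.}{fact: domination does not change complexity} (domination does not change the complexity of resilience), \specificref{Lemma}{fact: dissociation do not decrease complexity} (dissociation does not decrease complexity), and \specificref{Fact}{linear query is easy} (resilience of linear queries is in \ptime). The first step is to make all dominated atoms exogenous. Since $\var(A)=\set{x}$ is a proper subset of $\var(R)=\set{x,y}$ and of $\var(T)=\set{z,x}$, $A$ dominates $R$ and $T$ in $\rats$; likewise, all of $R,S,T$ are dominated in $\brats$ (for instance $\var(B)=\set{y}\subset\var(S)=\set{y,z}$). Hence \specificref{Prop.}{fact: domination does not change complexity} gives $\res(\rats)\equiv\res(\raxx)$ and $\res(\brats)\equiv\res(\brxxx)$, with $\raxx$ and $\brxxx$ as in \specificref{Eq.}{eqn:raxx}.

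The second step is to dissociate these (now exogenous) atoms into linear queries, exactly as in \specificref{Example}{dissociation linear ex}: adding $z$ to $R^\exSymb$ and $y$ to $T^\exSymb$ in $\raxx$ produces $\raxx'\datarule A(x),R^\exSymb(x,y,z),S(y,z),T^\exSymb(x,y,z)$, and the analogous three dissociations turn $\brxxx$ into $\brxxx'$. Since $R^\exSymb$ and $T^\exSymb$ now carry the same variable set (and likewise the three rewritten atoms of $\brxxx'$), we may intersect them and drop the redundant copies --- a step that preserves the complexity of resilience --- reaching $\raxx''\datarule A(x),R^\exSymb(x,y,z),S(y,z)$ and $\brxxx''\datarule A(x),R^\exSymb(x,y,z),B(y)$. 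One checks directly that these are linear in the sense of \specificref{Definition}{def: linearHypergraph}: in the orders $A,R^\exSymb,S$ and $A,R^\exSymb,B$ respectively, every variable occupies a contiguous block of atoms.

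By \specificref{Fact}{linear query is easy}, $\res(\raxx'')$ and $\res(\brxxx'')$ are in \ptime, so running the chain backwards --- $\res(\rats)\equiv\res(\raxx)\leq\res(\raxx')\equiv\res(\raxx'')\in\PTIME$ via \specificref{Lemma}{fact: dissociation do not decrease complexity} and the duplicate-deletion step, and symmetrically for $\brats$ --- yields $\res(\rats),\res(\brats)\in\PTIME$. I do not expect any genuine obstacle here; the one point that needs care is the order of operations, because \specificref{Definition}{def: dissociation} only permits dissociating exogenous atoms, so the domination step must precede the dissociation step in order to expose $R$, $S$, and $T$ as exogenous and thereby make them available for dissociation.
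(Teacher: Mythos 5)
Your proposal is correct and follows essentially the same route as the paper: make the dominated atoms exogenous via \specificref{Prop.}{fact: domination does not change complexity}, dissociate to the linear queries of \specificref{Example}{dissociation linear ex}, drop the duplicated exogenous atoms, and conclude with \specificref{Lemma}{fact: dissociation do not decrease complexity} and \specificref{Fact}{linear query is easy}. Your remark that domination must precede dissociation (since only exogenous atoms may be dissociated), and your identification of $A$ dominating $R$ and $T$ in $\rats$, are both accurate and consistent with \specificref{Eq.}{eqn:raxx}.
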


Later we will see that it is also true that dissociation does not decrease the complexity of responsibility, but
the proof is more subtle (\specificref{Lemma}{lem: dissociation rspStar}).

Now we are ready to show that the $\res(q)$ is easy if $q$ is triad-free.  We will show that for
every triad-free query, we can linearize the endogenous atoms and use 
some dissociations to make the exogenous atoms fit into the same order.

\begin{lemma}[Queries without triads are easy]\label{easy part dichotomy}
Let $q$ be an 
sj-free CQ that has no triad. Then $\res(q)$ is in \ptime.
\end{lemma}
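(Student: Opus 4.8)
The plan is to reduce $q$ to a \emph{linear} query of equivalent-or-higher resilience complexity and then invoke \specificref{Fact}{linear query is easy}. First I would apply \specificref{Prop.}{fact: domination does not change complexity} to assume without loss of generality that every dominated atom of $q$ is exogenous; this preserves triad-freeness, since any triad of the modified query is a set of three endogenous atoms joined by the required paths, and those three atoms are endogenous in $q$ as well, so it would already be a triad of $q$. I would also assume $q$ is connected, as disconnected components are handled independently.

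The heart of the argument is a structural lemma: \emph{the endogenous atoms of a connected triad-free query can be arranged in a linear order $B_1,\dots,B_p$ so that every variable occurs in a contiguous subsequence of $B_1,\dots,B_p$.} I would prove this by induction on $p$, each time exhibiting an ``extremal'' endogenous atom that can serve as an endpoint: run a breadth-first exploration of the hypergraph from an arbitrary endogenous atom $S_0$, take an endogenous atom $S_2$ at maximum distance, and argue $S_2$ is a ``leaf'' — if $S_2$ had two endogenous atoms attached along genuinely different branches, one could exhibit paths between $S_0$, one such atom, and a third, each avoiding the variables of the remaining one, giving a triad and contradicting the hypothesis. Deleting $S_2$, linearizing the rest by induction, and reinserting $S_2$ at the appropriate end yields the order. (A disconnected endogenous sub-hypergraph — which does occur, e.g.\ in $\raxx$ — is handled by the same bookkeeping, since triad-freeness also rules out a ``cyclic'' pattern of exogenous atoms linking three endogenous pieces, so the pieces themselves linearize.)

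Next I would extend this order to all atoms: each exogenous atom $C$ is inserted at a position lying inside the endogenous span of every variable it contains — such a position exists because triad-freeness forces these spans to intersect pairwise, hence (Helly on a line) to share a common point — and then, using \specificref{Lemma}{fact: dissociation do not decrease complexity}, I would dissociate: for every variable $v$ and every exogenous atom $C$ lying strictly between the first and last atom containing $v$ but not already containing $v$, add $v$ to $C$. Because the endogenous order is already variable-contiguous among endogenous atoms and each $C$ was placed inside the relevant spans, these dissociations only ever enlarge exogenous atoms; after performing all of them (and deleting any duplicate atoms that result) every variable occupies a contiguous block, so the resulting query $q'$ is linear. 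Since dissociation does not decrease the complexity of resilience and making dominated atoms exogenous does not change it, $\res(q)\le\res(q')$, and $\res(q')\in\PTIME$ by \specificref{Fact}{linear query is easy}; hence $\res(q)\in\PTIME$.

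The main obstacle is the structural lemma, and specifically the ``a leaf exists'' claim: one must make precise what ``attached along different branches'' means in the hypergraph so that the three required triad-paths can actually be exhibited, and verify that the candidate triad's atoms are genuinely pairwise connected by paths avoiding the third atom's variables. Getting the interaction with exogenous atoms right in the insertion/dissociation step — ensuring no endogenous atom ever falls into the ``gap'' of some variable — is the secondary technical point, but it should follow fairly mechanically once the endogenous order and the span-intersection property are in hand.
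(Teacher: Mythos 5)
Your overall skeleton is the paper's: order the endogenous atoms using triad-freeness, fit the exogenous atoms in via dissociation, and conclude with \specificref{Lemma}{fact: dissociation do not decrease complexity} and \specificref{Fact}{linear query is easy}. But there is a genuine gap in the middle, and it is not only the leaf-existence claim you flag. Your structural lemma asks only that each variable's \emph{endogenous} occurrences be contiguous in the chosen order, and your placement step then rests on the claim that the endogenous spans of the variables of an exogenous atom pairwise intersect. Both halves fail on the triad-free, connected query $q \datarule A(x), B(y), C(z), E_1^\exSymb(x,y), E_2^\exSymb(y,z)$ (no dominated atoms, so your WLOG step changes nothing): no variable occurs in two endogenous atoms, so \emph{every} order of $A,B,C$ satisfies your structural lemma, yet for the order $A,C,B$ the spans of $x$ and $y$ are the disjoint singletons $\set{A}$ and $\set{B}$ with the endogenous atom $C$ strictly between them; there is no admissible position for $E_1^\exSymb$, and the failure cannot be repaired by dissociation, which never touches the endogenous $C$. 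Even for good orders the spans need not intersect --- in $\raxx$ the spans of $x$ and $y$ for $R^\exSymb(x,y)$ are $\set{A}$ and $\set{S}$, and the correct position is the gap between them --- so ``pairwise intersection plus Helly'' is not the right invariant. What the construction actually needs is an ordering criterion that sees connectivity \emph{through the exogenous atoms}, not just variables shared by endogenous atoms, and the same is true of the notion of ``leaf'' in your BFS argument.

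This is exactly what the paper's proof supplies. For each endogenous atom $S_i$ it considers the cut $c_i$ obtained by deleting all hyperedges in $\var(S_i)$, and uses connectivity plus triad-freeness to show that among any three endogenous atoms there is a unique one whose cut disconnects the other two; this forces a unique relative placement of each endogenous atom (in the example above, $c_B$ separates $A$ from $C$, so $B$ is forced into the middle, eliminating the bad order your lemma permits). The exogenous atoms are then handled segment by segment inside the induction: every exogenous atom falling between two consecutive endogenous atoms is dissociated with all variables of both neighbours (and its own), so these atoms become identical and merge into a single fat exogenous atom --- which sidesteps your insertion/Helly step entirely. If you want to keep the leaf-extraction route, you must define ``endpoint'' via these cuts (an endogenous atom whose cut separates no two other endogenous atoms) rather than via BFS distance or shared variables, and strengthen the structural lemma so that the order records the cut-separation structure of the whole hypergraph; as stated, its conclusion is too weak to support the remainder of your argument.
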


\begin{proof}
Let $q$ be a triad-free query.
We prove by induction on the number of \emph{endogenous} atoms in $q$ that we can transform it into a linear query by using dissociations.   
Since dissociations cannot decrease complexity (\specificref{Lemma}{fact: dissociation do not decrease complexity})
and resilience is easy for linear queries (\specificref{Fact}{linear query is easy}), 
it follows that $\res(q)$ is in \ptime.

\emph{Base case}: $q$ has fewer than three endogenous atoms.
Consider $S_1, S_2$ the endogenous atoms of $q$. 
Using
dissociation, we add all the variables to all the exogenous atoms.
Thus all the exogenous atoms are identical and we can remove all but one, call it $E_1^\exSymb$.
The resulting query, $q'$, is linear with
ordering $S_1,E^\exSymb_1,S_2$.  Thus $\res(q)\in$ \ptime.

\emph{Inductive case}: assume true for triad-free queries with $n$ endogenous atoms.   Let $q_{n+1}$
be triad-free and have $n+1$ endogenous atoms. We now describe a way to linearize these atoms.
For each endogenous atom $S_i$, let $c_i$ be the cut of the hypergraph resulting from removing all
the variables of $S_i$, i.e., all the hyperedges that touch $S_i$.  These cuts are drawn as dotted vertical lines in \autoref{easyWalkFig}.  

Let $S_1$ and $S_2$ be two endogenous atoms and draw $S_2$ to the right of $S_1$. 
Now consider a third endogenous atom $S_3$.  Since $q_{n+1}$ is connected and has no triads, there
is a unique $i\in \set{1,2,3}$ such that the cut $c_i$ disconnects the two atoms in
$\set{S_1,S_2,S_3} - \set{S_i}$.

Thus we must place $S_i$ between the other two.  In other words, there is exactly one place that
$S_3$ can be added to the figure:  to the left of $S_1$ if $c_1$ separates $S_3$ from $S_2$; in
between $S_1$ and $S_2$ if $c_3$ separates $S_1$ from $S_2$; or to the right of $S_2$ if $c_2$
separates $S_1$ from $S_3$.

For example, let $S_1(x,y)$ and $S_2(y,z)$ be the first two endogenous atoms. Let the third be
$S_3(z,w)$ which shares a variable with $S_2$. Note that $c_3$ does not separate $S_1$ from $S_2$
and $c_1$ does not separate $S_2$ from $S_3$.  Since $q_{n+1}$ has no triad, it must be the case
that $c_2$ separates $S_1$ from $S_3$.  Thus, the order in this case must be $S_1, S_2, S_3$.

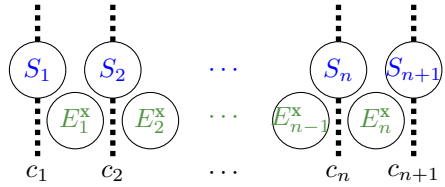
\begin{figure}
\begin{center}
\begin{tikzpicture}[ scale=.5]
\draw (2,2) circle [radius=.75] node {{\color{blue} $S_1$}};
\draw (2,-.75) node {{\color{black} $c_1$}};  
\draw (3,.65)  circle [radius=.75] node {{\color{dg} $E^\exSymb_1$}};  
\draw (4,-.75) node {{\color{black} $c_2$}};  
\draw (5,.65)  circle [radius=.75]  node {{\color{dg} $E^\exSymb_2$}};  
\draw (7,.75) node {{\color{dg} $\cdots$}};  
\draw (7,-.75) node {{\color{black} $\cdots$}};  
\draw (7,2) node {{\color{blue} $\cdots$}};  
\draw (9,.65)  circle [radius=.75]  node {{\color{dg} $E^\exSymb_{n-1}$}};  
\draw (10,-.75) node {{\color{black} $c_n$}};  
\draw (11,.65)   circle [radius=.75] node {{\color{dg} $E^\exSymb_{n}$}};  
\draw (12,-.75) node {{\color{black} $c_{n+1}$}};  
\draw (4,2) circle [radius=.75] node {{\color{blue} $S_2$}};
\draw (10,2) circle [radius=.75] node {{\color{blue} $S_{n}$}};
\draw (12,2) circle [radius=.75] node {{\color{blue} $S_{n+1}$}};
 \draw[line width=2pt,dotted] (2,2.75) -- (2,3.75); 
 \draw[line width=2pt,dotted] (2,1.25) -- (2,-.3); 
 \draw[line width=2pt,dotted] (4,2.75) -- (4,3.75); 
 \draw[line width=2pt,dotted] (4,1.25) -- (4,-.3); 
 \draw[line width=2pt,dotted] (10,2.75) -- (10,3.75); 
 \draw[line width=2pt,dotted] (10,1.25) -- (10,-.3); 
 \draw[line width=2pt,dotted] (12,2.75) -- (12,3.75); 
 \draw[line width=2pt,dotted] (12,1.25) -- (12,-.3); 

\end{tikzpicture}
\end{center}
\caption{A walk along the endogenous atoms in the proof of \specificref{Lemma}{easy part dichotomy}.  The cut $c_i$ results from removing all the variables
  (edges) from atom $S_i$.}
\label{easyWalkFig}
\end{figure}

Now add the remaining endogenous atoms one at a time. 
Since $q_{n+1}$ has no triad, by the above observation, there is exactly
one place that each next endogenous atom may be placed.
Finally once all the endogenous atoms have been placed,  renumber them so left to right they are
$S_1$, $S_2$,
$\ldots$, $S_{n+1}$.  

Define the query $q_n$ to be the result of removing all the variables
in $\var(S_{n+1}) - \var(S_n)$ 
and removing all the atoms in which any of those removed variables occurred.
In \autoref{easyWalkFig}, this corresponds to removing everything to the right of $c_n$.

By our inductive hypothesis, 
there is a query $q_n'$ that is the result of doing
some dissociations to $q_n$,  and $q_n'$ is linear. Furthermore by our observation above, the ordering
of the endogenous atoms remains $S_1, S_2, \ldots, S_n$.

Now, we form $q'_{n+1}$ by first adding back to $q_n$ all the variables and atoms that we removed.
Note that we are thus adding back just one endogenous atom, $S_{n+1}$, together with zero or more
exogenous atoms, all of which contain some variables in $\var(S_{n+1}) - \var(S_n)$.  Finally, to
all these exogenous atoms that we have just added back (if any), add all the variables in $\var(S_n)
\cup \var(S_{n+1})$, together with any other variables occurring in any of these exogenous atoms.  
Thus all the newly re-added exogenous atoms are identical and we can combine them into one, call it,
$E^\exSymb_{n}$.   Note that $c_n$ still separates $E^\exSymb_{n}$ and $S_{n+1}$ from the rest
of the hypergraph.  

Thus, we have transformed $q_{n+1}$ to a linear query $q_{n+1}'$ such that 
$\res(q_{n+1}) \leq \res(q_{n+1}')$.  Thus $\res(q_{n+1}) \in\PTIME$ as desired.
\end{proof}

\subsection{Dichotomy of resilience}\label{sec:firstDichotomy}
Combining \specificref{Lemmas}{hard part dichotomy}  and
\ref{easy part dichotomy} leads to our
first dichotomy result on the complexity of resilience:

\begin{theorem}[Dichotomy of resilience]\label{resilience dichotomy thm}
Let $q$ be an sj-free CQ and let $q'$ be the result of 
making all dominated atoms exogenous.
If $q'$ has a triad, then $\res(q)$ is \NP-complete, otherwise
it is in \ptime.
\end{theorem}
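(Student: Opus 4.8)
The plan is to read off the theorem from the three pieces already in place. Since $q'$ is obtained from $q$ by declaring a set of dominated atoms exogenous, \specificref{Proposition}{fact: domination does not change complexity} gives $\res(q)\equiv\res(q')$, so it suffices to determine the complexity of $\res(q')$. I would also recall the observation right after \specificref{Definition}{def: resilience}: $q'$ is evaluable in \ptime, hence $\res(q')\in\NP$, which supplies the upper bound needed for the hardness half.

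Next I would split on whether $q'$ contains a triad. If it does: before applying \specificref{Lemma}{hard part dichotomy} I must check that $q'$ satisfies its hypothesis that all dominated atoms are exogenous. This is immediate: if an endogenous atom $B$ of $q'$ were dominated by an endogenous atom $A$ of $q'$, then $\var(A)\subset\var(B)$, so $A$ already dominated $B$ in $q$ and $B$ would have been made exogenous when forming $q'$, a contradiction. Hence \specificref{Lemma}{hard part dichotomy} applies, $\res(q')$ is \NP-hard, and together with $\res(q')\in\NP$ and $\res(q)\equiv\res(q')$ we conclude that $\res(q)$ is \NP-complete. If $q'$ contains no triad: $q'$ is an sj-free CQ with no triad, so \specificref{Lemma}{easy part dichotomy} gives $\res(q')\in\PTIME$, whence $\res(q)\in\PTIME$. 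The two cases are exhaustive, so this is a genuine dichotomy, completing the proof.

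I do not expect a substantive obstacle, since \specificref{Lemmas}{hard part dichotomy} and~\ref{easy part dichotomy} do all the real work; the only thing that genuinely needs care --- the closest analogue of an obstacle --- is the bookkeeping about the endogenous/exogenous partition: (i) the argument above that the ``all dominated atoms exogenous'' hypothesis of \specificref{Lemma}{hard part dichotomy} survives the passage $q\mapsto q'$, and (ii) the fact that ``triad'' in the statement must be read relative to the partition of $q'$ (triads consist of \emph{endogenous} atoms, \specificref{Definition}{def: triad}), since relabeling atoms as exogenous can only destroy triads present in $q$, never create new ones --- which is precisely why the classification is based on $q'$ rather than on $q$.
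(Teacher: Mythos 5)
Your proposal is correct and follows exactly the paper's route: the theorem is obtained by combining Proposition~\ref{fact: domination does not change complexity} ($\res(q)\equiv\res(q')$) with Lemma~\ref{hard part dichotomy} and Lemma~\ref{easy part dichotomy}, plus the observation $\res(q)\in\NP$. Your extra bookkeeping --- checking that $q'$ itself has no dominated endogenous atoms, so the hypothesis of Lemma~\ref{hard part dichotomy} holds --- is a detail the paper leaves implicit, and your argument for it is sound.
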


Note that it is easy to tell whether $q$ has a triad.  Checking whether a given triple of atoms is a triad
consists of three reachability problems and -- is there a path from $S_i$ to $S_j$ not using any of
the edges in $\var(S_k)$ -- and is thus doable in linear time.

An exhaustive search of
all endogenous triples thus provides a \ptime\ algorithm:

\begin{corollary}\label{cor:ptimetriadcheck}
We can check in polynomial time in the size of the query $q$ whether $\res(q)$ is \NP-complete or \ptime.
\end{corollary}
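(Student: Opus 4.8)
The plan is to convert the structural characterization of \specificref{Theorem}{resilience dichotomy thm} into an explicit meta-algorithm whose input is the query $q$ alone and whose running time is polynomial in $|q|$. By that theorem, $\res(q)$ is \NP-complete precisely when $q'$ contains a triad, where $q'$ is obtained from $q$ by relabeling every dominated atom as exogenous. Hence it suffices to (i) construct $q'$ and (ii) test $q'$ for a triad, and to argue that both can be done in polynomial time in the query size.

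For step (i) I would iterate over all ordered pairs $(A,B)$ of endogenous atoms of $q$ and, whenever $\var(A)\subset\var(B)$, mark $B$ exogenous. Because domination (\specificref{Definition}{domination}) compares the \emph{fixed} variable sets of the atoms, and because a $\subset$-minimal endogenous atom is never marked, a single such pass already produces $q'$: no iteration or fixed-point computation is needed. There are $O(m^2)$ pairs and each containment test costs $O(|\var(q)|)$, so this stage is polynomial in $|q|$.

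For step (ii) I would enumerate all $\binom{m}{3}$ triples of endogenous atoms of $q'$. For a fixed triple ${\cal T}=\set{S_0,S_1,S_2}$ and a fixed ordered pair $i\neq j$, the defining condition of a triad (\specificref{Definition}{def: triad}) --- the existence of a path from $S_i$ to $S_j$ using no variable occurring in the remaining atom $S_k$ --- is exactly ordinary reachability from $S_i$ to $S_j$ in the graph $G_k$ whose vertices are the atoms of $q'$ and whose edges join two atoms sharing at least one variable not in $\var(S_k)$ (equivalently, the hypergraph of $q'$ with every hyperedge in $\var(S_k)$ deleted). Each reachability query is linear time via breadth-first search, there are only three such pairs per triple, and ${\cal T}$ is a triad iff all three queries succeed. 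This amounts to $O(m^3)$ reachability computations on graphs of size $O(|q|)$, hence polynomial in $|q|$.

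Finally, the algorithm outputs ``\NP-complete'' if some triple is a triad and ``\ptime'' otherwise; correctness is immediate from \specificref{Theorem}{resilience dichotomy thm}. I do not expect a genuine obstacle here: the only points that warrant a line of justification are that the ``avoids the variables of $S_k$'' clause of \specificref{Definition}{def: triad} coincides with plain reachability in the hyperedge-deleted graph $G_k$, and that the pass producing $q'$ is correct without iteration because domination is a static property of variable sets.
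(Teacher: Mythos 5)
Your proposal is correct and matches the paper's own argument: the paper likewise checks each triple of endogenous atoms by three reachability tests in the hypergraph with the third atom's variables (hyperedges) removed, and concludes by exhaustive search over all triples. The only difference is that you spell out the preprocessing pass that marks dominated atoms exogenous, which the paper leaves implicit in invoking \specificref{Theorem}{resilience dichotomy thm}; your observation that no fixed-point iteration is needed there is accurate.
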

\smallskip

\section{Functional dependencies}\label{sec:FDs}

Functional dependencies (FDs), such as key constraints, restrict the set of allowable data instances. 
In this section, we characterize how these restrictions affect the complexity of
resilience.  We first show that FDs cannot increase the complexity of the resilience of a query
(\specificref{Prop.}{fd dont add complexity}).
Next we introduce a transformation of queries suggested by a given set of FDs call \emph{induced
  rewrites} (\specificref{Def.}{def: induced rewrite}).  We show that induced rewrites preserve the
complexity of resilience (\specificref{Lemma}{lm: induced rewrite preserves complexity}).

We call a query \emph{closed} if all possible induced rewrites have been applied (\specificref{Def.}{def: induced rewrite}).
We conjectured that induced rewrites capture the full power of FDs with respect to the complexity of
resilience, in other words, the complexity of the resilience of a closed query is unchanged if we remove
its FDs (\specificref{Conjecture}{conjecture}). 

We prove that the complexity of resilience for closed queries that have triads is \NP-complete (\specificref{Lemma}{hard part dichotomy with fd}).  On the other hand, even without its FDs, we know that a closed
query that has no triads has an easy resilience problem (\specificref{Lemma}{easy part dichotomy}).  We thus
conclude that in the presence of FDs,  the dichotomy -- still determined by the presence or absence
of triads, but now in the closure of the query -- remains in
force (\specificref{Lemma}{easy part dichotomy}).   It follows as a corollary that  \specificref{Conjecture}{conjecture} holds.

\subsection{FDs can only simplify resilience}
We write $\res(q; \setfd)$ to refer to the resilience problem for query $q$, restricted to databases
satisfying the set of FDs $\setfd$.  
Note that since we are always considering conjunctive queries, any particular FD either holds or
does not hold on the whole query, so it is not necessary to mention which atom the FD is applied to.

First we observe that FDs cannot make the resilience problem
harder:

\begin{proposition}[FDs do not increase complexity]\label{fd dont add complexity}
Let $q$ be an sj-free CQ and $\setfd$ a set of functional dependencies. Then $\res(q; \setfd) \leq \res(q)$.
\end{proposition}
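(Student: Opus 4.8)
The plan is to observe that, as a set of strings, $\res(q;\setfd)$ is simply the restriction of $\res(q)$ to those inputs that encode a database satisfying $\setfd$, and then to package this observation as a trivial first-order reduction.

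First I would unfold the two definitions. For a well-formed input encoding a pair $(D,k)$ we have $(D,k)\in\res(q;\setfd)$ if and only if $D\models\setfd$ and $(D,k)\in\res(q)$: the requirement that there exist $\Gamma\subseteq D^\enSymb$ with $D-\Gamma\not\models q$ and $|\Gamma|\le k$ (as well as $D\models q$) is literally identical in both definitions, and the only extra condition imposed by $\res(q;\setfd)$ is $D\models\setfd$. Hence $\res(q;\setfd)\subseteq\res(q)$, and on the sub-domain of inputs whose database satisfies $\setfd$ the two problems coincide exactly.

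Next I would define the reduction $f$. Because a set of functional dependencies is a universally quantified first-order sentence, the predicate ``$w$ encodes a pair $(D,k)$ with $D\models\setfd$'' is first-order expressible over the encoding. Let $f(w)=w$ whenever this predicate holds, and $f(w)=w_0$ otherwise, where $w_0$ is a fixed encoding of the instance $(\emptyset,0)$; since $q$ contains at least one atom, $\emptyset\not\models q$, so $(\emptyset,0)$ is a no-instance of both $\res(q)$ and $\res(q;\setfd)$. Testing an FO condition and outputting a fixed structure are both first-order operations, so $f$ is an FO-reduction. Correctness follows by cases: if $w$ encodes $(D,k)$ with $D\models\setfd$, then $w\in\res(q;\setfd)\iff(D,k)\in\res(q)\iff f(w)=w\in\res(q)$; otherwise $w\notin\res(q;\setfd)$ (either $w$ is malformed or $D\not\models\setfd$) and $f(w)=w_0\notin\res(q)$, so both sides are false. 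In all cases $w\in\res(q;\setfd)\iff f(w)\in\res(q)$, which is exactly $\res(q;\setfd)\le\res(q)$. The same argument applies verbatim with $\rsp$ in place of $\res$ should it be needed elsewhere.

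There is no deep obstacle here; the only point requiring any care is the bookkeeping distinction between treating $\res(q;\setfd)$ as a promise problem — inputs guaranteed to satisfy $\setfd$, in which case the identity map is already a reduction — versus a language over $\set{0,1}^*$, in which case ill-formed or FD-violating inputs must be routed to a canonical no-instance as above. I would remark that this is the convention used implicitly throughout, so the reduction is genuinely trivial, and the content of the proposition is the conceptual point that constraining the class of admissible databases can only make resilience easier.
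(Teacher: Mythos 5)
Your argument is correct and follows essentially the same route as the paper, which likewise notes that $\res(q;\setfd)$ is just the restriction of $\res(q)$ to databases satisfying $\setfd$ and takes the identity map as the reduction. Your extra bookkeeping for inputs violating $\setfd$ (routing them to a canonical no-instance) is a harmless formalization of the same idea.
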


\begin{proof}
The reduction is the identity function.  Note that
$\res(q; \setfd)$ is just the restriction of $\res(q)$ to
databases satisfying $\setfd$.  Thus, for all databases $D$ that satisfy $(q; \setfd)$:
$(D,k) \in \res(q;\setfd) \Leftrightarrow (D,k) \in \res(q)\;$.
\end{proof}

\begin{corollary}[Triad-free queries are still easy]\label{coro: ptime remains ptime}
If $q$ is an sj-free CQ that has no triad, and therefore $\res(q)$ is in \ptime, then
$\res(q; \setfd)$ is also in \ptime.
\end{corollary}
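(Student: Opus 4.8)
The plan is to read this off as a one-line consequence of the two immediately preceding results. First I would invoke \specificref{Lemma}{easy part dichotomy}: since $q$ is an sj-free CQ with no triad, the (unrestricted) resilience problem already lies in \ptime, i.e., $\res(q)\in\PTIME$. Then I would invoke \specificref{Prop.}{fd dont add complexity}, which supplies a reduction $\res(q;\setfd)\leq\res(q)$ --- indeed the identity map, since $\res(q;\setfd)$ is nothing more than the restriction of $\res(q)$ to the sub-class of databases satisfying $\setfd$.

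The only conceptual ingredient beyond citing these two statements is the transitivity of ``$\leq$'' with respect to membership in \PTIME: if $S\leq T$ and $T\in\PTIME$ then $S\in\PTIME$. This is exactly the observation recorded after \specificref{Definition}{reduction def} (first-order reductions are polynomial-time reductions). Applying it with $S=\res(q;\setfd)$ and $T=\res(q)$ gives $\res(q;\setfd)\in\PTIME$, which is the claim. So the whole proof is: ``$q$ triad-free $\Rightarrow \res(q)\in\PTIME$ by \specificref{Lemma}{easy part dichotomy}; $\res(q;\setfd)\leq\res(q)$ by \specificref{Prop.}{fd dont add complexity}; done.''

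There is essentially no obstacle here. The one point worth a sentence of care is the bookkeeping around dominated atoms: the phrase ``$q$ has no triad'' should be read as in \specificref{Theorem}{resilience dichotomy thm}, i.e., after relabeling dominated atoms as exogenous, and one checks this reading is harmless because relabeling an atom as exogenous can only shrink the pool of atoms eligible to form a triad and, by \specificref{Prop.}{fact: domination does not change complexity}, leaves the complexity of resilience unchanged. I would therefore keep the written proof to two or three sentences.
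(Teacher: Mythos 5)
Your proof is correct and matches the paper's reasoning exactly: the corollary is stated immediately after \specificref{Prop.}{fd dont add complexity} precisely because it follows by composing that identity reduction with the \ptime\ result for triad-free queries. Nothing further is needed; your remark about reading ``triad-free'' modulo making dominated atoms exogenous is a reasonable clarification but not a departure from the paper's argument.
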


We next show that for some queries, FDs do in fact reduce the complexity of resilience. 
Recall that the tripod query, $q_\Tri $ is hard (\specificref{Prop.}{prop:tripodQuery}). 
However, $q_\Tri $ becomes polynomial when we add the
FD $\fd = x \to y$.

\begin{proposition}[FDs make $q_\Tri$ easy]\label{lem: tripod become easy}
\[\res(q_\Tri;\set{x\to y})\in \PTIME\; .\]
\end{proposition}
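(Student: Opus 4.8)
The plan is to strip off the atom $W$ by domination and then exploit the functional dependency to enlarge the atom $A(x)$ to $A(x,y)$; this leaves a query with only two endogenous atoms, which is therefore triad-free and, by \specificref{Lemma}{easy part dichotomy}, has polynomial-time resilience. Write $\Phi=\set{x\to y}$. First, since $\var(A)=\set{x}$, $\var(B)=\set{y}$ and $\var(C)=\set{z}$ are all strict subsets of $\var(W)=\set{x,y,z}$, each of $A,B,C$ dominates $W$. The contingency-set exchange argument of \specificref{Prop.}{fact: domination does not change complexity} (replace a $W$-tuple in a contingency set by its projection onto $\var(A)$) does not depend on which databases are allowed, so it still yields $\res(q_\Tri;\Phi)\equiv\res(q_1;\Phi)$ for $q_1\datarule A(x),B(y),C(z),W^\exSymb(x,y,z)$.

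Now let $q_2\datarule A^\exSymb(x,y),B(y),C(z),W^\exSymb(x,y,z)$; I would reduce $\res(q_1;\Phi)$ to $\res(q_2)$ by the first-order map $(D,k)\mapsto(D',k)$, where $D'$ agrees with $D$ on $B,C,W$ and $A'^\exSymb=\bigset{(a,b)}{A(a)\in D \text{ and } \exists c\,(W(a,b,c)\in D)}$. Since $D$ satisfies $x\to y$ on $W$, for each $a$ there is at most one such $b$, so $A'$ is just $A$ with its forced $y$-value attached, and the witnesses of $D'\models q_2$ coincide exactly with those of $D\models q_1$ (the triples $(a,b,c)$ with $A(a),B(b),C(c),W(a,b,c)$ all present); in particular $D\models q_1 \Leftrightarrow D'\models q_2$. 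A contingency set for $q_2$ in $D'$ uses only tuples of $B^D\cup C^D$ and kills all witnesses, so it is also one for $q_1$ in $D$. Conversely, given a minimum contingency set $\Gamma$ for $q_1$ in $D$, replace each $A(a)\in\Gamma$ by $B(b)$ with $(a,b)\in A'$ (simply dropping $A(a)$ when no such $b$ exists, in which case $A(a)$ lies in no witness anyway): every witness killed by $A(a)$ has the shape $(a,b,\cdot)$ and so is also killed by $B(b)$, hence the result is a contingency set of size $\leq|\Gamma|$ built only from $B$- and $C$-tuples, i.e.\ a contingency set for $q_2$ in $D'$. Thus minimum contingency sets have equal size on both sides and $(D,k)\in\res(q_1;\Phi)\Leftrightarrow(D',k)\in\res(q_2)$.

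Finally, $q_2$ has exactly two endogenous atoms, $B$ and $C$, so it contains no triad (\specificref{Def.}{def: triad} requires three endogenous atoms), whence $\res(q_2)\in\PTIME$ by \specificref{Lemma}{easy part dichotomy}; composing, $\res(q_\Tri;\set{x\to y})\equiv\res(q_1;\Phi)\leq\res(q_2)\in\PTIME$. The one delicate point --- and the only place the functional dependency is actually used --- is the substitution of an $A$-tuple by its unique $B$-companion inside a contingency set; everything else is routine bookkeeping. This whole argument is the special case of the general induced-rewrite machinery (\specificref{Def.}{def: induced rewrite}, \specificref{Lemma}{lm: induced rewrite preserves complexity}) corresponding to the single rewrite that appends $y$ to $A(x)$.
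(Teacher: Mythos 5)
Your proof is correct and takes essentially the paper's route: make the dominated atom $W$ exogenous, use the FD $x\to y$ to attach $y$ to $A$ so that $B$ dominates the enlarged atom (which then becomes exogenous), and finish by triad-freeness; your explicit reduction to $q_2$ simply inlines the paper's \specificref{Lemma}{lm: induced rewrite preserves complexity} together with the domination step, as you yourself observe. The only micro-detail to patch is the subcase where $(a,b)\in A'$ but $B(b)\notin D$: the replacement tuple is then not in the database, but exactly as in your parenthetical, $A(a)$ participates in no witness in that case and can simply be dropped.
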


We will prove \specificref{Prop.}{lem: tripod become easy} along the way, as we learn about the effect of
FDs.   Recall that the tripod query $q_\Tri$
has the triad $\set{A,B,C}$.
Notice that the FD $x\to y$ ``disarms'' this triad because $A$ and $B$ are no longer independent. More
explicitly, once we know $x$, we also know $y$.  Thus $\res(q_\Tri;\set{x\to y}) \equiv \res(r)$
where $r\datarule A'(x,y),B(y),C(z), W^\exSymb(x,y,z)$
(\specificref{Lemma}{lm: induced rewrite preserves complexity}).  Furthermore, since $B$ dominates $A'$ in $r$,
$A'$ becomes exogenous: $r'\datarule {A'}^\exSymb(x,y),B(y),C(z), W^\exSymb(x,y,z)$.  Query $r'$ has no triad
and thus is easy.

\subsection{Induced rewrites preserve complexity}\label{sec:inducedRewrites}

We call the transformation $(q_\Tri;\set{x\to y})\rewrite(r;\set{x\to y})$ an \emph{induced
  rewrite}\footnote{Transformations of queries called \emph{rewrites} were defined in
  \cite{MeliouGMS11}. An induced rewrite is a rewrite that is induced by an FD.}.
Induced rewrites are key to understanding the effect of FDs on the complexity of
resilience.

\begin{definition}[induced rewrite: $\rewrite$, closed query]\label{def: induced rewrite}
Given a set of functional dependencies $\setfd$ and a query $q$, we write 
$(q; \setfd) \rewrite (q'; \setfd)$ to mean that $q'$ is the result of adding the 
dependent variable $u$ to some relation that contains all the determinant variables 
$\vec v$ for some $\vec v \to u \in \setfd$.
We use $\rewriteStar$ to indicate zero or more applications of $\rewrite$.
If $(q;\setfd) \rewriteStar (q^*;\setfd)$ and no more induced rewrites can be applied
to $(q^*;\setfd)$, then we call $(q^*;\setfd)$ a \emph{closed query} and we say that 
$(q^*;\setfd)$ is the \emph{closure} of $(q;\setfd)$.
\end{definition}

This paper began as an attempt to determine whether 
the dichotomy for responsibility of sj-free CQs \cite{MeliouGMS11} continues to hold in the presence
of FDs.  In studying the effect of FDs, we defined induced rewrites and proved 
that induced rewrites preserve the complexity of responsibility.  We conjectured that once we
have reached a closed query, all the effect of the FDs on the complexity of responsibility has
been exhausted and thus there is no further change if we delete all the FDs.  We were able to prove this
conjecture for unary FDs, i.e., those of the form $v\to u$ where $v$ is a single variable.

However we had great difficulty proving this conjecture for all FDs.  We studied the responsibility
problem more carefully and found that responsibility is quite delicate. In particular, we discovered
an error in Lemma 4.10 of \cite{MeliouGMS11}, namely that \specificref{Prop.}{fact: domination does not
  change complexity} (in the present paper) does not hold for responsibility.

We identified resilience as a better-behaved
notion than responsibility and we characterized the complexity of resilience via triads. Once we had
done that, we were able to use the notion of triads to prove our conjecture about closed queries
and thus prove the dichotomy theorem for resilience in the presence of arbitrary FDs.  We give that proof shortly.

With our improved insight from resilience, we went back and proved the dichotomy for
responsibility (\specificref{Theorem}{main rsp dichotomy thm}) and finally showed that it holds as well in the
presence of FDs (\specificref{Theorem}{last thm}).

We first show that induced rewrites preserve the complexity of resilience.

\begin{lemma}[Induced rewrites preserve complexity]\label{lm: induced rewrite preserves complexity}
Let $q$ be a query, $\setfd$ a set of functional dependencies, and $q'$ the result of an
induced rewrite, i.e., $(q;\setfd) \rewrite (q';\setfd)$. Then $\res(q'; \setfd)
\equiv \res(q; \setfd)$. 
\end{lemma}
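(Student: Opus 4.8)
The plan is to establish the two inequalities $\res(q;\setfd)\leq\res(q';\setfd)$ and $\res(q';\setfd)\leq\res(q;\setfd)$ by exhibiting the \emph{identity map on databases} as the reduction in both directions, and then arguing that a database $D$ satisfying $\setfd$ witnesses membership for $(q;\setfd)$ exactly when it witnesses membership for $(q';\setfd)$. Concretely, write the induced rewrite as follows: there is an FD $\vec v\to u\in\setfd$ and an atom, say $A=R(\vec z)$, with $\vec v\subseteq\var(A)$ but $u\notin\var(A)$, and $q'$ is obtained from $q$ by replacing $A$ with $A'=R'(\vec z,u)$ (a fresh relation symbol of one higher arity). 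The whole content of the lemma is that, over databases satisfying $\setfd$, adding the already-determined column $u$ to $A$ changes nothing of substance about witnesses or contingency sets.

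First I would make the correspondence between instances precise. Given $D\models(q;\setfd)$, define $D'$ to agree with $D$ on every relation except $R$, and set ${R'}^{D'}=\bigset{(\vec t,d)}{R(\vec t)\in D^{R},\ d \text{ is the unique value with } \vec v\mapsto \vec t|_{\vec v},\ u\mapsto d \text{ consistent with the } \vec v\to u \text{ relationship in } D}$. Because $D$ satisfies $\vec v\to u$, for each tuple $R(\vec t)\in D$ there is \emph{at most one} admissible $d$ (it is the $u$-value of any witness extending $\vec t$, and if no witness extends $\vec t$ then that tuple of $R$ is irrelevant and we may drop it — or keep it paired with an arbitrary fresh value; either choice is fine since such a tuple appears in no witness). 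The key observation is then: the witnesses of $D'\models q'$ are in bijection with the witnesses of $D\models q$, via forgetting the extra $u$-coordinate in $A'$ (which is redundant because the valuation of $u$ is already pinned down by the valuations of $\vec v$ through the FD and the atom $A'$ imposes exactly the same join condition as $A$ once the FD is assumed). Under this bijection, a tuple of $D$ lies in a witness iff the corresponding tuple of $D'$ does, and the endogenous/exogenous status of every atom is preserved (the rewrite does not touch labels). Hence minimal contingency sets transfer with the same size, giving $(D,k)\in\res(q;\setfd)\Leftrightarrow(D',k)\in\res(q';\setfd)$, which is the reduction $\res(q;\setfd)\leq\res(q';\setfd)$.

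For the reverse direction, I would run essentially the same argument starting from a database $D'$ satisfying $\setfd$ over the schema of $q'$: project the extra column away, i.e.\ let $D$ agree with $D'$ everywhere except $R^{D}=\pi_{\vec z}({R'}^{D'})$. Again the FD guarantees that this projection loses no information relevant to witnesses — any witness of $q$ over $D$ can be uniquely extended to a witness of $q'$ over $D'$ by reading off the forced $u$-value, and vice versa — so witnesses and their intersections with endogenous tuples correspond one-to-one with sizes preserved, yielding $\res(q';\setfd)\leq\res(q;\setfd)$. Combining the two inequalities gives $\res(q';\setfd)\equiv\res(q;\setfd)$. I expect the only genuinely delicate point to be the bookkeeping around tuples of $R$ that participate in \emph{no} witness: one must check that these can be handled consistently on both sides so that the instance maps are mutually inverse (up to witness-irrelevant tuples) and clearly first-order computable; once that is pinned down, the witness bijection and the transfer of contingency sets are routine, since the rewrite alters neither the endogenous atoms nor the join structure modulo the assumed FD.
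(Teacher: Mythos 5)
Your proposal matches the paper's proof in essentially every respect: both directions are handled by the canonical instance translations --- extending the rewritten atom's tuples with the FD-determined $u$-value in one direction and projecting that column out in the other --- and the argument that witnesses and minimum contingency sets correspond one-to-one with sizes unchanged is exactly the paper's. The only caution concerns your side remark that witness-irrelevant $R$-tuples could instead be ``kept paired with an arbitrary fresh value'': done carelessly this can violate $\setfd$ in the constructed instance (e.g., two such tuples agreeing on $\vec v$ receiving distinct fresh values, or a fresh $u$-value clashing with another FD involving $u$), so the dropping option you list first --- which is what the paper implicitly does via the existential condition in its definition of the new relation --- is the one to use.
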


\begin{proof}
Let the change from $q$ to $q'$ be the transformation of the atom $B$ to the new atom $B'$
caused by adding variable $u$ to $B$ where $(\vec{v}
\to u) \in \setfd$ and $\vec v \subseteq \var(B)$.
\begin{enumerate}[itemsep=2pt, topsep=2pt]
\item[(a)] $\res(q'; \setfd) \leq \res(q; \setfd)$:
Suppose we are given $(D',k)$ where $D'$ satisfies $\setfd$.  
Let $D$ be the result of projecting out the $u$ entry from $B'$.
Note that $D$ still satisfies $\setfd$.  Furthermore, 
the set of witnesses that  $D\models q$ is identical to the set of witnesses that $D'\models q'$ 
and the sizes of all minimum contingency sets are unchanged. This is because the effect of the tuple
$B(\vec t)$ in a contingency set in $D$ is identical to the effect of the tuple $B'(\vec t')$ in the
corresponding contingency set in $D'$, where $\vec t'$ is the result of adding to $\vec t$ the
unique $u$-attribute which is determined by the $\vec v$-attributes of $\vec t$.
Thus the map $(D',k) \mapsto
(D,k)$ is a reduction of $\res(q'; \setfd)$ to $\res(q; \setfd)$.
\item[(b)] $\res(q; \setfd) \leq \res(q'; \setfd)$.
We are given $(D,k)$ where $D$ satisfies $\setfd$.
Let $B'$ be the set of tuples resulting from adding to each tuple $\vec t$ from $B$, the uniquely
determined $u$-attribute, $c$.  In symbols, $B'=$
\[
\bigset{(\vec t,c)}{B(\vec t) \in D \sland \exists \vec s \in D \,(\pi_{\vec v}(\vec s) = \pi_{\vec
    v}(\vec t) \land c =
  \pi_u(\vec s))}
\]
For the same reason as above, the witnesses of $q'$ in $D'$ are the same as the witnesses of $q$ in
$D$ and the sizes of all minimum contingency sets are unchanged.  Thus the map $(D,k) \mapsto
(D',k)$ is a reduction of $\res(q; \setfd)$ to $\res(q'; \setfd)$.
\qed
\end{enumerate}
\end{proof}

It follows immediately that applying any set of induced rewrites preserves the complexity of resilience:

\begin{corollary}\label{coro: closed query preserves complexity}
If $(q;\setfd) \rewriteStar (q';\setfd)$,  then $\res(q'; \setfd)
\equiv \res(q; \setfd)$.
\end{corollary}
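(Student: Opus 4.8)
The plan is a straightforward induction on the number of induced-rewrite steps witnessing $(q;\setfd) \rewriteStar (q';\setfd)$, using \specificref{Lemma}{lm: induced rewrite preserves complexity} as the single-step building block together with the transitivity of the equivalence relation $\equiv$.

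First I would record the (standard) fact that $\equiv$ is transitive: if $S \equiv T$ and $T \equiv U$, then composing the underlying reductions --- each easy to compute, in fact first-order --- yields reductions $S \leq U$ and $U \leq S$, so $S \equiv U$. This is exactly the observation in the footnote of \specificref{Definition}{reduction def} that first-order reductions compose. Then the induction proceeds as follows. For the base case, a zero-step rewrite has $q' = q$, and the identity map witnesses $\res(q';\setfd) \equiv \res(q;\setfd)$. For the inductive step, suppose the rewrite sequence factors as $(q;\setfd) \rewriteStar (q'';\setfd) \rewrite (q';\setfd)$, where the first part uses $n$ steps. By the induction hypothesis $\res(q'';\setfd) \equiv \res(q;\setfd)$, and by \specificref{Lemma}{lm: induced rewrite preserves complexity} applied to the final step, $\res(q';\setfd) \equiv \res(q'';\setfd)$. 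Transitivity then gives $\res(q';\setfd) \equiv \res(q;\setfd)$, completing the induction.

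There is no real obstacle here: all the content lives in \specificref{Lemma}{lm: induced rewrite preserves complexity}, and the corollary needs only the routine fact that reductions compose. The one point worth stating explicitly is that a sequence realizing $\rewriteStar$ may touch different atoms and invoke different FDs from $\setfd$ at successive steps; this causes no difficulty, since \specificref{Lemma}{lm: induced rewrite preserves complexity} is stated for an arbitrary single induced rewrite, regardless of which atom it modifies or which FD $\vec v \to u \in \setfd$ it uses.
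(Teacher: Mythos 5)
Your proposal is correct and takes essentially the same route as the paper: the paper states the corollary as an immediate consequence of \specificref{Lemma}{lm: induced rewrite preserves complexity}, which amounts to exactly your induction on the number of rewrite steps combined with transitivity of $\equiv$ (composition of the underlying reductions). Nothing further is needed.
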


\subsection{For closed queries, FDs are superfluous}\label{sec:closedQueries}

Recall that our current goal is to determine whether the dichotomy of the complexity of
resilience remains true in the presence of FDs.  The following is a natural conjecture which would
given an affirmative answer to this question.

\begin{conjecture}[Induced rewrites suffice]\label{conjecture}
Let $(q^*;\setfd)$ be a closed query, i.e., it is closed under induced rewrites.
Then $ \res(q^*; \setfd)\equiv \res(q^*)$. 
\end{conjecture}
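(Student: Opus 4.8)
The plan is to obtain this as an immediate corollary of a dichotomy for $\res(q^*;\setfd)$ that runs exactly parallel to \specificref{Theorem}{resilience dichotomy thm}. One half of the claimed equivalence is free: by \specificref{Prop.}{fd dont add complexity} we already have $\res(q^*;\setfd)\leq\res(q^*)$, since $\res(q^*;\setfd)$ is nothing but $\res(q^*)$ restricted to the databases that satisfy $\setfd$. So the only thing to establish is that $\res(q^*)$ is \emph{no harder} than $\res(q^*;\setfd)$, and I would do this by a case split on whether $q^*$ contains a triad, with the standing convention that dominated atoms are made exogenous re-applied to $q^*$ (closure under induced rewrites can create fresh dominated atoms, exactly as in \specificref{Prop.}{lem: tripod become easy}).

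The triad-free case needs nothing new. If $q^*$ has no triad, then \specificref{Lemma}{easy part dichotomy} already puts $\res(q^*)$ in \PTIME --- that argument only linearizes $q^*$ by dissociations and appeals to \specificref{Fact}{linear query is easy}, never mentioning FDs --- and then \specificref{Cor.}{coro: ptime remains ptime} puts $\res(q^*;\setfd)$ in \PTIME as well. The two problems thus sit on the same side of the dichotomy, as required.

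The substance is the triad case, where $\res(q^*)$ is \NP-complete by \specificref{Lemma}{hard part dichotomy}. It suffices to prove the FD-counterpart \specificref{Lemma}{hard part dichotomy with fd}: a closed query with a triad has \NP-complete resilience even when its databases are restricted to satisfy $\setfd$. I would reuse the reduction from $\res(q_\triangle)$ in the proof of \specificref{Lemma}{hard part dichotomy} verbatim, adding only the observation that the database $D'$ it builds already satisfies $\setfd$; once that is known, correctness carries over unchanged, because on $\setfd$-databases $\res(q^*;\setfd)$ and $\res(q^*)$ coincide. Recall that in that reduction every variable of the query receives, uniformly across all tuples, a value that is a fixed function of the triangle witness $(a,b,c)$ --- one of $\angle{ab}$, $\angle{bc}$, $\angle{ca}$, $\angle{abc}$, $a$, $b$, $c$, according to which part of the partition in \specificref{Eq.}{variable partition eq} the variable lies in. Closedness of $q^*$ is precisely what forces this assignment to respect every $\fd=(\vec v \to u)\in\setfd$: since $u$ then occurs in every atom containing $\vec v$, one checks part-by-part that the coordinates of $(a,b,c)$ encoded by the variables of $\vec v$ already determine the coordinate encoded by $u$ --- were that not so, there would be an atom containing $\vec v$ to which $u$ has not been added, contradicting closedness. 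Granting this, $\res(q^*;\setfd)$ is \NP-complete, and since $\res(q^*)$ is in \NP and all reductions here are first-order, this yields a reduction $\res(q^*)\leq\res(q^*;\setfd)$; with the free direction, $\res(q^*;\setfd)\equiv\res(q^*)$.

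The main obstacle is exactly that part-by-part compatibility check for \emph{arbitrary} FDs: the unary case was already within reach, but the general case is what we had been unable to push through. The delicate configurations are the Case-2 triads of \specificref{Lemma}{hard part dichotomy}, where $\var(S_0),\var(S_1),\var(S_2)$ overlap and the full seven-part partition of \specificref{Eq.}{variable partition eq} is active, combined with an FD whose determinant $\vec v$ and dependent $u$ are spread across several parts; verifying that closedness still forbids a violation there is where I expect the argument to need the most care, and also where the closedness hypothesis does the real work.
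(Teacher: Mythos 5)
Your proposal follows essentially the same route as the paper: the paper also derives this conjecture as a corollary of the FD dichotomy, using \specificref{Prop.}{fd dont add complexity} together with \specificref{Cor.}{coro: ptime remains ptime} for the triad-free side, and proving \specificref{Lemma}{hard part dichotomy with fd} by re-running the reduction of \specificref{Lemma}{hard part dichotomy} and checking, part-by-part against the partition in \specificref{Eq.}{variable partition eq}, that closedness forces the constructed database $D'$ to satisfy every FD in $\setfd$. The ``delicate'' check you flag is exactly the one the paper carries out (if the determinant does not fix all of $a,b,c$, its variables all lie in one $\var(S_i)$, so closedness places the dependent variable there too), so your plan is correct and matches the paper's argument.
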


It is fairly easy to see that Conjecture \autoref{conjecture} holds when all the FDs in $\setfd$ are unary,
i.e., of the form $v\to u$, with $v$ a single variable.  However we were stumped about how to prove this
for general FDs.  This lead to our more careful analysis of the complexity of responsibility, our
definition of resilience, and our characterization of the complexity of resilience via triads
(\autoref{resilience dichotomy thm}). Now we will use that analysis to prove that the complexity of
a closed query is \NP-complete if it contains a triad, and in $\PTIME$ otherwise.
Thus Conjecture
\autoref{conjecture} is true and the dichotomy for the complexity of resilience remains true in the
presence of FDs.

\begin{lemma}[Closed queries with triads are hard]\label{hard part dichotomy with fd}
Let $(q^*;\Phi)$ be a closed sj-free CQ all of whose dominated atoms are exogenous.
If $q^*$ has a triad, then $\res(q^*; \setfd)$ is \NP-complete.
\end{lemma}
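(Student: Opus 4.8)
The plan is to reuse, almost verbatim, the reduction $\res(q_\triangle)\le\res(q)$ from the proof of \specificref{Lemma}{hard part dichotomy}, applied to a triad ${\cal T}=\set{S_0,S_1,S_2}$ of $q^*$, and then to check the one extra thing the FD setting demands: that the database $D'$ the reduction outputs \emph{already} satisfies $\setfd$, so that $(D',k)$ is a legal instance of $\res(q^*;\setfd)$. Since $\res(q^*;\setfd)$ is a restriction of $\res(q^*)$ and hence in $\NP$, and $\res(q_\triangle)$ is \NP-complete (\specificref{Prop.}{thm: hardness of triangle}), this yields \NP-completeness. As in that proof we may assume that no variable lies in all three atoms of ${\cal T}$ (a common variable can be frozen to a constant, and a set of FDs stays valid under the corresponding selection: an FD $\vec v\to u$ merely shrinks to $(\vec v\setminus\set{w})\to u$ once $w$ is fixed), and we keep the partition $\var(q^*)=V_0\cup\cdots\cup V_6$ of \specificref{Eq.}{variable partition eq} together with its value assignment. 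Every structural claim of that proof --- the bijection between witnesses of $D\models q_\triangle$ and of $D'\models q^*$, and the fact that a minimum contingency set only ever needs tuples from $S_0,S_1,S_2$ --- is oblivious to whether $D'$ obeys FDs, so $(D,k)\in\res(q_\triangle)\Leftrightarrow(D',k)\in\res(q^*)$ is unaffected.

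The only new work is verifying $D'\models\setfd$, and this is exactly where closedness enters. For a variable $w$ of $q^*$, the value it receives in the $D'$-tuple coming from a witness $(a,b,c)$ of $D\models q_\triangle$ is a fixed function $\mathrm{val}_w(a,b,c)$; reading off \specificref{Eq.}{variable partition eq}, two witnesses induce the same value for $w$ iff they agree on a coordinate set $\kappa(w)\subseteq\set{a,b,c}$, and a short case check over $V_0,\dots,V_6$ gives the clean rule $a\in\kappa(w)\Leftrightarrow w\notin\var(S_1)$, $b\in\kappa(w)\Leftrightarrow w\notin\var(S_2)$, $c\in\kappa(w)\Leftrightarrow w\notin\var(S_0)$ (the point is that $a$ survives concatenation into $\mathrm{val}_w$ precisely when $\mathrm{val}_w$ is built from $S_0$- or $S_2$-slots, i.e.\ when $w\notin\var(S_1)$, and similarly for $b,c$; this also covers the pairwise-disjoint Case~1, where $V_4,V_5,V_6$ are empty). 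Now fix an FD $\vec v\to u\in\setfd$. Only atoms $A$ with $\vec v\subseteq\var(A)$ can exhibit a violation, and for any such $A$ closedness forces $u\in\var(A)$ (else the induced rewrite adding $u$ to $A$ would apply), so each tuple involved has both its $\vec v$-columns and its $u$-column. Two $D'$-tuples (in possibly different atoms) agreeing on all $\vec v$-columns come from witnesses agreeing on $\bigcup_{w\in\vec v}\kappa(w)$, so $D'\models\vec v\to u$ as soon as $\kappa(u)\subseteq\bigcup_{w\in\vec v}\kappa(w)$. This inclusion is exactly what closedness supplies: if, say, $a\in\kappa(u)$ then $u\notin\var(S_1)$, so since $q^*$ is closed the rewrite adding $u$ to $S_1$ is unavailable, which forces $\vec v\not\subseteq\var(S_1)$, i.e.\ some $w\in\vec v$ has $w\notin\var(S_1)$, i.e.\ $a\in\kappa(w)$; running the same argument for $b$ (via $S_2$) and $c$ (via $S_0$) gives $\kappa(u)\subseteq\bigcup_{w\in\vec v}\kappa(w)$. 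Hence $D'\models\setfd$ and the reduction is complete.

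The main obstacle I expect is purely organizational: distilling the compact rule for $\kappa(w)$ out of the somewhat intricate two-case construction of \specificref{Lemma}{hard part dichotomy}, and being careful that a violation of $\vec v\to u$ in $D'$ can only occur among atoms containing all of $\vec v$ (so that the per-atom reasoning with the partition genuinely applies) --- after which closedness does all the work and no new gadget is needed. A minor secondary point is confirming that the ``freeze a shared variable to a constant'' preprocessing interacts cleanly both with $\setfd$ and with the hypothesis that $q^*$ is closed; this is routine.
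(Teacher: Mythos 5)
Your proposal is correct and takes essentially the same approach as the paper: it reuses the reduction from \specificref{Lemma}{hard part dichotomy} unchanged and uses closedness under induced rewrites, via the variable partition of \autoref{variable partition eq}, to verify that the constructed database satisfies $\setfd$, so the same map is a reduction from $\res(q_\triangle)$ to $\res(q^*;\setfd)$. Your $\kappa$-rule is just a tidier bookkeeping of the paper's own case analysis (a determinant set that fails to fix one of $a,b,c$ must lie entirely inside some $\var(S_i)$, and closedness then forces the dependent variable into $\var(S_i)$ as well).
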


\begin{proof}
Let $(q^*;\Phi)$ be as in the statement of the lemma.
Recall that we proved in \specificref{Lemma}{hard part dichotomy} that $\res(q_\triangle) \leq \res(q^*)$ and
thus $\res(q^*)$ is \NP-complete.  Let $f$ be the reduction we produced from $\res(q_\triangle) $ to 
$\res(q^*)$.  We will now show that if $f(D,k) = (D',k')$ then $D'\models \Phi$.  
It will then
follow that $f$ is a reduction from $\res(q_\triangle) $ to $\res(q^*;\Phi)$.  Thus
$\res(q^*;\Phi)$ is \NP-complete as claimed.

To see why $D'\models \Phi$, we will recall the definition of the reduction in the proof of
\specificref{Lemma}{hard part dichotomy}.  But first, we will examine how $q_\triangle$ (\specificref{Example}{four main
  queries ex}) itself is affected by FDs.

In particular, let $\Phi_0$ be any set of FDs for which $(q_\triangle, \Phi_0)$ is closed under
induced rewrites.  Notice that since
$q_\triangle$ is closed, there can be no nontrivial unary FDs such as $x\rightarrow y$, (otherwise,
$T(z,x)$ would have been replaced by $T'(z,x,y)$) 
nor any
nontrivial binary FDs such as $xy\to z$ (otherwise $R(x,y)$ would have been replaced by $R'(x,y,z)$).
In fact, $\Phi_0$ has no nontrivial FDs, i.e.,
$\Phi_0=\emptyset$.  

Now recall the reduction from $\res(q_\triangle)$ to $\res(q^*)$ in the proof of 
\specificref{Lemma}{hard part dichotomy}. What that proof did was to embed $q_\triangle$ into $q^*$.  Using the triad of $q^*$,
${\cal T}=\set{S_0,S_1,S_2}$, we partitioned the variables of $q^*$ into 7 sets,
and for each assignment of $x,y,z$ to values $a,b,c\in\dom(D)$, we made assignments
according to that partition (see \autoref{variable partition eq}).

The net effect, is that just as for $q_\triangle$, since $(q;\Phi)$ is closed, it must be the case
that $D' \models \Phi$.  In particular, suppose that $\Phi$ contains the FD, $\vec u\to v$.
First suppose that $\vec u$ is
contained in one of the 7 sets of the partition (see \autoref{variable partition eq}). Then, since $(q^*;\Phi)$
is closed, $v$ must be in the same 
set and thus it has exactly the same value as each of the variables in $\vec u$.  
If $\vec u$ has a variable from $V_3$ ($\var(q) - (\var(S_0)\cup \var(S_1)\cup \var(S_2))$)
then its value is $\angle{abc}$ so it determines all other variables. Similarly, if $\vec u$ has
variables from two of $V_0, V_1, V_2$ then it again determines all three values.  
Suppose $\vec u$ does not determine all three values, e.g., say it does not determine $c$.  Then,
looking at \autoref{variable partition eq}, we see that all the variables of $\vec u$ are
from $V_0, V_4$ or $V_5$, i.e., they are all from $\var(S_0)$.  But then since $(q^*;\Phi)$
is closed, $v$ must be in $\var(S_0)$ as well, and thus it is determined by $a$ and $b$.

Thus, we have shown that the reduction $f$ is also a reduction from $\res(q_\triangle)$ to
$\res(q^*,\Phi)$ and thus the latter problem is \NP-complete.
\end{proof}

\subsection{Dichotomy of resilience with FDs}\label{sec:secondDichotomy}

Recall that FDs cannot increase the complexity of resilience and thus if $q$ has no triad, then
$\res(q;\Phi)\in\PTIME$ (\specificref{Cor.}{coro: ptime remains ptime}).  Thus, we have succeeded in proving
the dichotomy for resilience in the presence of FDs:

\begin{theorem}[FD Dichotomy]\label{fd dichotomy thm}
Let $(q;\Phi)$ be an sj-free 
CQ with functional dependencies.  Let $(q^*,\Phi)$ be its
closure under induced rewrites, and such that all dominated atoms of $q^*$ are exogenous. 
If $q^*$ has a triad then $\res(q;\Phi)$ is \NP-complete.  Otherwise,
$\res(q;\Phi)\in\PTIME$.
\end{theorem}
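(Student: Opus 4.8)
The plan is to assemble \specificref{Theorem}{fd dichotomy thm} from the pieces already in place, by reducing the general FD case to the triad dichotomy for closed queries. Given $(q;\Phi)$, first form its closure under induced rewrites and then relabel all dominated atoms of the result as exogenous; call the intermediate query (closure, before the relabeling) $q^*_0$, so that $q^*$ in the statement is $q^*_0$ with dominated atoms made exogenous. By \specificref{Cor.}{coro: closed query preserves complexity}, the closure step preserves complexity: $\res(q;\Phi)\equiv\res(q^*_0;\Phi)$. Next I would observe that the domination argument of \specificref{Prop.}{fact: domination does not change complexity} goes through verbatim once all databases are restricted to satisfy $\Phi$: replacing a tuple $B(\vec t)$ in a contingency set by the dominating tuple $A(\vec p)$ only shrinks the set of deleted tuples, the fixed instance $D$ is never modified and hence still satisfies $\Phi$, and no new witnesses appear; the converse direction is trivial since making atoms exogenous only shrinks the class of admissible contingency sets. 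Hence $\res(q^*_0;\Phi)\equiv\res(q^*;\Phi)$, and stringing the equivalences together gives $\res(q;\Phi)\equiv\res(q^*;\Phi)$, so it suffices to classify $\res(q^*;\Phi)$.

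Now split on whether $q^*$ contains a triad. In the hard case, $q^*$ is a closed sj-free CQ all of whose dominated atoms are exogenous and which has a triad, so \specificref{Lemma}{hard part dichotomy with fd} applies directly and $\res(q^*;\Phi)$ is \NP-complete; by the equivalence above, $\res(q;\Phi)$ is \NP-complete (membership in $\NP$ is immediate since $q$ is evaluable in \PTIME). In the easy case, $q^*$ has no triad, so by \specificref{Cor.}{coro: ptime remains ptime} (which bundles \specificref{Prop.}{fd dont add complexity} with \specificref{Lemma}{easy part dichotomy}, using that FDs only restrict the instance) we get $\res(q^*;\Phi)\in\PTIME$, and again the equivalence transfers this back to $\res(q;\Phi)\in\PTIME$.

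The one point that needs a short verification, and the only place where the presence of FDs is genuinely felt, is that the two preprocessing steps behave well together: relabeling dominated atoms as exogenous must not destroy the ``closed'' hypothesis required by \specificref{Lemma}{hard part dichotomy with fd}. This follows because an induced rewrite (\specificref{Def.}{def: induced rewrite}) only inspects which atoms contain the determinant variables of an FD, and relabeling an atom changes neither its variable set nor $\Phi$, so no new induced rewrite becomes applicable — a closed query stays closed. I would also note that the domination cleanup terminates (each relabeling strictly decreases the number of endogenous atoms) and that induced rewrites can themselves create fresh domination (as in $q_\Tri \rewrite r$, where $B$ comes to dominate $A'$), which is exactly why the closure is computed \emph{before} the domination cleanup and not interleaved with it. I expect no substantive obstacle: the theorem is essentially a bookkeeping corollary of \specificref{Cor.}{coro: closed query preserves complexity}, \specificref{Prop.}{fd dont add complexity}, \specificref{Lemma}{hard part dichotomy with fd}, and \specificref{Lemma}{easy part dichotomy}, with the mildly delicate part being the two sanity checks above, both of which hold because FDs are preserved under deletion and are indifferent to the exogenous/endogenous labeling.
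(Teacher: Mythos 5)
Your proposal is correct and follows essentially the same route as the paper: transfer complexity along the closure via \specificref{Cor.}{coro: closed query preserves complexity}, apply \specificref{Lemma}{hard part dichotomy with fd} when $q^*$ has a triad, and combine \specificref{Prop.}{fd dont add complexity} (via \specificref{Cor.}{coro: ptime remains ptime}) with \specificref{Lemma}{easy part dichotomy} when it does not. Your two sanity checks --- that the domination relabeling still preserves complexity when databases are restricted to satisfy $\Phi$, and that relabeling does not break closedness --- are points the paper leaves implicit, and you verify them correctly.
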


Note that we have thus also proved \specificref{Conjecture}{conjecture}:

\begin{corollary}[Induced rewrites suffice]\label{co:inducedRewritesSuffice}
Let $(q;\Phi)$ be an sj-free CQ with functional dependencies, and
let $q^*$ be the closure of $q$ under induced rewrites. Then, $\res(q;\setfd) \equiv \res(q^*;\setfd) \equiv \res(q^*)$.
\end{corollary}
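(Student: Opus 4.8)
The plan is to split the statement into its two equalities and dispatch them with results already in hand, using the FD dichotomy of \specificref{Theorem}{fd dichotomy thm} together with the triad dichotomy of \specificref{Theorem}{resilience dichotomy thm}. The first equivalence, $\res(q;\setfd) \equiv \res(q^*;\setfd)$, is immediate: by definition of closure, $q^*$ is obtained from $q$ by a finite sequence of induced rewrites, i.e.\ $(q;\setfd)\rewriteStar(q^*;\setfd)$, so \specificref{Corollary}{coro: closed query preserves complexity} applies verbatim. For the second, $\res(q^*;\setfd)\equiv\res(q^*)$, one direction is free: \specificref{Proposition}{fd dont add complexity} gives $\res(q^*;\setfd)\leq\res(q^*)$, since imposing $\setfd$ only restricts the admissible databases. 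The real content is the converse, and my strategy is not to construct a reduction by hand but to argue that the two problems land on the same side of the dichotomy.

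Concretely, let $\hat q$ be the query obtained from $q^*$ by relabelling every dominated atom as exogenous. Since $\hat q$ has exactly the same atoms and the same variable sets as $q^*$ (only some endogenous tags become exogenous), applicable induced rewrites are unchanged, so $(\hat q;\setfd)$ is again closed; moreover \specificref{Proposition}{fact: domination does not change complexity} gives $\res(q^*)\equiv\res(\hat q)$, and since the reduction there is the identity it restricts cleanly to $\setfd$-databases, giving $\res(q^*;\setfd)\equiv\res(\hat q;\setfd)$. Now case on whether $\hat q$ contains a triad. If it does, \specificref{Lemma}{hard part dichotomy} makes $\res(\hat q)$ (hence $\res(q^*)$) \NP-complete, while \specificref{Lemma}{hard part dichotomy with fd} makes $\res(\hat q;\setfd)$ (hence $\res(q^*;\setfd)$) \NP-complete; both problems lie in \NP, so they are equivalent. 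If $\hat q$ has no triad, \specificref{Lemma}{easy part dichotomy} puts $\res(q^*)$ in \PTIME\ and \specificref{Corollary}{coro: ptime remains ptime} puts $\res(q^*;\setfd)$ in \PTIME. In either case the two problems fall on the same side of the dichotomy, which with the $\leq$ of \specificref{Proposition}{fd dont add complexity} yields $\res(q^*;\setfd)\equiv\res(q^*)$; chaining with the first equivalence completes the proof. Put differently, the corollary is exactly the statement that \specificref{Theorem}{fd dichotomy thm} and \specificref{Theorem}{resilience dichotomy thm} agree, since both classify according to the presence of a triad in $\hat q$.

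I expect the delicate point to be the triad-free case: ``both problems are in \PTIME'' does not, on its own, supply an honest first-order (or logspace) inter-reduction, so making $\equiv$ fully literal requires either exhibiting a reduction $\res(q^*)\leq\res(q^*;\setfd)$ directly --- for instance by checking that the dissociation-and-linearization of \specificref{Lemma}{easy part dichotomy} together with the network-flow reduction of \specificref{Fact}{linear query is easy} can be run uniformly inside the class of $\setfd$-databases --- or reading $\equiv$ at the level of complexity classification, as the dichotomy theorems themselves do. The residual bookkeeping (that \specificref{Proposition}{fact: domination does not change complexity} restricts to $\setfd$-databases because its reduction is the identity, and that relabelling atoms as exogenous neither creates nor kills applicable induced rewrites) is routine.
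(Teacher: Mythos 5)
Your proposal matches the paper's own (implicit) argument: the paper derives this corollary directly from the FD dichotomy, i.e.\ $\res(q;\setfd)\equiv\res(q^*;\setfd)$ via \specificref{Corollary}{coro: closed query preserves complexity}, and $\res(q^*;\setfd)\equiv\res(q^*)$ because, after making dominated atoms exogenous, both problems are \NP-complete when a triad is present (\specificref{Lemma}{hard part dichotomy} and \specificref{Lemma}{hard part dichotomy with fd}) and both are in \PTIME\ otherwise (\specificref{Lemma}{easy part dichotomy} and \specificref{Corollary}{coro: ptime remains ptime}), with \specificref{Proposition}{fd dont add complexity} supplying the easy direction. The caveat you raise about reading $\equiv$ at the level of complexity classification in the triad-free case applies equally to the paper's treatment, so your write-up is, if anything, slightly more explicit than the original.
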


\section{Complexity of Responsibility}\label{responsibility sec}

We now develop and prove the analogous characterizations of the complexity 
of responsibility. As we will see, responsibility is a bit more delicate than resilience, but in the
end the final theorems are similar.

We first concentrate on the difference between resilience and responsibility.
Recall the queries $\rats$ and $\raxx$ (\specificref{Example}{four main queries ex} and  \specificref{Eq.}{eqn:raxx}).
We saw earlier that $\res(\rats)$ is in \PTIME (\specificref{Cor.}{res(rats) easy}).  The reason is
that atom $A$ dominates $R$ and $T$ and thus the complexity of $\res(\rats)$ is unchanged when we
make $R$ and $T$ exogenous (\specificref{Prop.}{fact: domination does not change complexity}), i.e., $\res(\rats)
\equiv \res(\raxx)$.  Obviously $\raxx$ is triad-free.  
Thus, by \autoref{resilience dichotomy thm}, $\res(\raxx)$ and $\res(\rats)$ are in
$\PTIME$.  We now show, however, that $\rsp(\rats)$ is \NP-complete.

\begin{proposition}[$\rats$ is hard for $\rsp$]\label{responsibility of rats is hard}
$\rsp(\rats)$ is \NP-complete.
\end{proposition}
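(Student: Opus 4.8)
I would prove \specificref{Prop.}{responsibility of rats is hard} by a reduction from an NP-complete problem that is already available to us, namely $\res(q_\triangle)$ (\specificref{Prop.}{thm: hardness of triangle}); membership in \NP\ is immediate since $\rats$ is computable in \PTIME. The point of the construction is to embed the triad $\set{R,S,T}$ of $\rats$ (which, note, is a genuine triad only because for responsibility we may \emph{not} make $R,T$ exogenous) while neutralizing the atom $A$. Concretely: given an instance $(D,k)$ of $\res(q_\triangle)$ with relations $R_\triangle,S_\triangle,T_\triangle$, I would build a database $D'$ for $\rats$ by putting (copies of) $R_\triangle,S_\triangle,T_\triangle$ into $R,S,T$, populating $A$ with the relevant $x$-values, and then adjoining one \emph{fresh witness} built from brand-new domain values $\alpha,\beta,\gamma$ (tuples $A(\alpha),R(\alpha,\beta),S(\beta,\gamma),T(\gamma,\alpha)$), and I would take the target tuple to be $t=S(\beta,\gamma)$. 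The idea is that $t$ lies in the only atom of $\rats$ that neither dominates nor is dominated by another atom; a target tuple in $A$, $R$, or $T$ would instead leave a domination-style shortcut intact and keep the problem in \PTIME, which is exactly why the choice $t\in S$ is forced.

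\textbf{Key steps.} First I would verify that the witnesses of $\rats$ on $D'$ split cleanly into (i) the single fresh witness $(\alpha,\beta,\gamma)$ through $t$, which survives any contingency set that uses only ``old'' tuples, and (ii) the ``bad'' witnesses, which are in bijection with the witnesses of $q_\triangle$ on $D$. Since $\beta,\gamma$ are fresh, the fresh tuples occur in no bad witness and adjoin no spurious witness, so this split is exact. Second, I would unwind \specificref{Def.}{def: responsibility}: $(D',t,k')\in\rsp(\rats)$ iff there is $\Gamma$ of size $\le k'$ that destroys every bad witness (because deleting $t$ destroys the fresh one) while leaving the fresh witness intact (so that $D'-\Gamma\models\rats$). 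Thus the minimum such $|\Gamma|$ equals the minimum size of a set of tuples destroying all witnesses of $q_\triangle$ on $D$ --- which, via \specificref{Prop.}{thm: hardness of triangle}, is $\le k$ iff the reduction maps a yes-instance. Setting $k'=k$ then gives $(D,k)\in\res(q_\triangle)\Leftrightarrow(D',t,k')\in\rsp(\rats)$, and the reduction is clearly first-order computable.

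\textbf{Main obstacle.} The delicate point --- and the reason this is not a one-line corollary of \specificref{Prop.}{thm: hardness of triangle} --- is that $A$ \emph{dominates} $R$ and $T$: a single tuple $A(a)$ destroys all witnesses with $x=a$, so a careless construction would let $\Gamma$ cheaply ``collapse'' the $x$-coordinate and thereby recover the easy behavior of $\res(\rats)$ (\specificref{Cor.}{res(rats) easy}). The reduction must be engineered so that using $A$-tuples buys nothing: I would ensure that every $x$-value of $D'$ carries a potential witness through $t$ (so that deleting its $A$-tuple risks eliminating the witness we must preserve, exactly the interference with domination advertised before the proposition), and/or that each $A$-tuple participates in no more bad witnesses than some single $R$- or $T$-tuple it could be swapped for, so the optimum is governed solely by the triangle/shared-$S$-edge structure inherited from the $q_\triangle$ hardness gadget. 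Getting this accounting exactly right --- i.e.\ proving that the minimum contingency set in $D'$ cannot beat the minimum triangle cover in $D$ even with $A$-tuples available --- is where the real work lies; the rest of the argument is bookkeeping about which tuples lie in which witnesses.
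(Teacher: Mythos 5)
There is a genuine gap, and it sits exactly at the point you flag as the ``main obstacle'': your concrete construction does not neutralize the $A$-tuples, and the patches you sketch cannot. In your $D'$, the protected witness $(\alpha,\beta,\gamma)$ is built entirely from fresh tuples, so no old tuple of $A$, $R$, $T$, or $S$ touches it. Consequently a contingency set for $t=S(\beta,\gamma)$ may delete old $A(a)$ tuples at will, and since $A$ dominates $R$ and $T$ in $\rats$, a single $A(a)$ wipes out every bad witness with $x=a$. The minimum contingency set in your reduction is therefore the resilience of the old part of $D'$ with $A$-deletions allowed --- i.e.\ essentially $\res(\raxx)$, which is in \PTIME\ (\specificref{Cor.}{res(rats) easy}) and is in general strictly smaller than the minimum number of $R/S/T$ tuples breaking all triangles of $D$. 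So the claimed equivalence $(D,k)\in\res(q_\triangle)\Leftrightarrow(D',t,k)\in\rsp(\rats)$ is false. Your first proposed fix (give every old $x$-value a potential witness through $t$, say via $R(a,\beta)$ and $T(\gamma,a)$) does not help, because \specificref{Def.}{def: responsibility} only requires \emph{one} witness through $t$ to survive in $D'-\Gamma$: the adversarial $\Gamma$ can delete $A(a)$ for all but one $x$-value and still keep a witness through $t$. Your second fix (each $A$-tuple participates in no more bad witnesses than some single $R$- or $T$-tuple) is simply not achievable when an arbitrary triangle instance is copied in verbatim, since $A(a)$ meets every triangle through $a$.

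The paper's proof avoids this by not reducing from $\res(q_\triangle)$ at all: it reduces \emph{directly from 3SAT}, designing the instance so that a single value $a_0$ both forms the unique witness of the target tuple $\vec{s_0}=S(b_0,c_0)$ and serves as the $x$-value of \emph{all} variable-gadget witnesses ($R(a_0,b^\ell_j)$, $T(c^\ell_j,a_0)$). Then $A(a_0)$ can never enter a contingency set (it would kill the only witness containing $\vec{s_0}$), so the gadget witnesses must be broken by individually deleting $R$-, $T$-, or $S$-tuples, which encodes a truth assignment; separate $A$-values $a_{s,i}$ are introduced only in the clause gadgets, where whether one of them may be spared encodes satisfaction of the clause. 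That interplay --- one globally shared, protected $x$-value plus auxiliary $A$-values whose deletion cost does the clause accounting --- is the missing idea; without it, domination by $A$ collapses your reduction back into the polynomial regime. (Incidentally, your side remark that the target is ``forced'' to lie in $S$ is also off: the paper notes that responsibility for $\rats$ is hard for tuples from every relation except $A$.)
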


\begin{figure}
\begin{center}
\begin{tikzpicture}[ scale=.4]
\draw (-7.5,8) node[color=red] { $R$};	
\draw (0,12) node[color=dg] { $S$};	
\draw (7.5,8) node[color=blue] { $T$};	
\draw (-10,0) node[color=black] { $A$};	
\draw (10,0) node[color=black] { $A$};	
\draw (-8,0) circle [radius=1.0] node { $a_0$};
\draw (8,0) circle [radius=1.0] node { $a_0$};
\draw (-4,11) circle [radius=1.0] node { $b^{\ell}_{1}$};
\draw (-4,8) circle [radius=1.0] node { $b^{\ell}_{2}$};
\draw (-4,5) node {{\color{black} $\vdots$ }};
\draw (-4,-8) node {{\color{black} $\vdots$ }};
\draw (4,-8) node {{\color{black} $\vdots$ }};
\draw (4,5) node {{\color{black} $\vdots$ }};
\draw (-4,2) circle [radius=1.0] node { $b^{\ell}_{t}$};
\draw (-4,-2) circle [radius=1.0] node { $b^{\ell}_{t+1}$};
\draw (-4,-5) circle [radius=1.0] node { $b^{\ell}_{t+2}$};
\draw (-4,-11) circle [radius=1.0] node { $b^{\ell}_{2t}$};
\draw (4,11) circle [radius=1.0] node { $c^{\ell}_{1}$};
\draw (4,8) circle [radius=1.0] node { $c^{\ell}_{2}$};
\draw (4,2) circle [radius=1.0] node { $c^{\ell}_{t}$};
\draw (4,-2) circle [radius=1.0] node { $c^{\ell}_{t+1}$};
\draw (4,-5) circle [radius=1.0] node { $c^{\ell}_{t+2}$};
\draw (4,-11) circle [radius=1.0] node { $c^{\ell}_{2t}$};
 \draw[line width=2pt,color=red] (-8,1) -- (-4.7,10.3)  ; 
 \draw[line width=2pt,color=red] (-8,1) -- (-4.7,7.3) ; 
 \draw[line width=2pt,color=red] (-7.3,.7) -- (-5,2)  ; 
 \draw[line width=2pt,color=red,dotted] (-7.3,-.7) -- (-5,-2); 
 \draw[line width=2pt,color=red,dotted] (-7.3,-.7) -- (-4.7,-4.3); 
 \draw[line width=2pt,color=blue,dotted] (7.3,-.7) -- (4.7,-4.3); 
 \draw[line width=2pt,color=red,dotted] (-8,-1) -- (-4.7,-10.3); 
 \draw[line width=2pt,color=blue] (8,1) -- (4.7,10.3)  ; 
 \draw[line width=2pt,color=blue] (8,1) -- (4.7,7.3) ; 
 \draw[line width=2pt,color=blue] (7.3,.7) -- (5,2)  ; 
 \draw[line width=2pt,color=blue,dotted] (7.3,-.7) -- (5,-2); 
 \draw[line width=2pt,color=blue,dotted] (8,-1) -- (4.7,-10.3) ; 
 \draw[line width=2pt,dotted,color=dg] (-3,11) -- (3,11) ; 
 \draw[line width=2pt,dotted,color=dg] (-3.3,10.3) -- (3.3,8.7) ; 
 \draw[line width=2pt,dotted,color=dg] (-3.3,8.7) -- (3.3,10.3) ; 
 \draw[line width=2pt,dotted,color=dg] (-3,8) -- (3,8)  ; 
 \draw[line width=2pt,dotted,color=dg] (-3,2) -- (3,2);  
 \draw[line width=2pt,dotted,color=dg] (-3.3,7.3) -- (3.3,2.7)  ; 
 \draw[line width=2pt,dotted,color=dg] (-3.3,2.7) -- (3.3,7.3)  ; 
 \draw[line width=2pt,dotted,color=dg] (-3.3,2.7) -- (3.3,10.3)  ; 
 \draw[line width=2pt,dotted,color=dg] (-3.3,10.3) -- (3.3,2.7)  ; 
 \draw[line width=2pt,color=dg] (-3.3,10.3) -- (3.3,-1.3) node[pos=.8,left]{$v_\ell$}; 
 \draw[line width=2pt,color=dg] (-3.3,7.3) -- (3.3,-4.3) node[pos=.9,left]{$v_\ell$}; 
 \draw[line width=2pt,color=dg] (-3.3,1.3) -- (3.3,-10.3)   node[pos=.9,left]{$v_\ell$}; 
 \draw[line width=2pt,color=dg] (-3.3,-1.3) -- (3.3,10.3) node[pos=.2,right]{$\ov{v_\ell}$}; 
 \draw[line width=2pt,color=dg] (-3.3,-4.3) -- (3.3,7.3) node[pos=.1,right]{$\ov{v_\ell}$}; 
 \draw[line width=2pt,color=dg] (-3.3,-10.3) -- (3.3,1.3)   node[pos=.1,right]{$\ov{v_\ell}$}; 
\end{tikzpicture}
\end{center}
\caption{The $\rats$ variable gadget $G_\ell$ for variable $v_\ell$. Red, green, and blue lines correspond to tuples from $R$, $S$, and $T$, respectively.
Dotted lines 
will never need to be chosen in minimum contingency sets of $f(\psi)$.}
\label{rats gadget fig}
\end{figure}
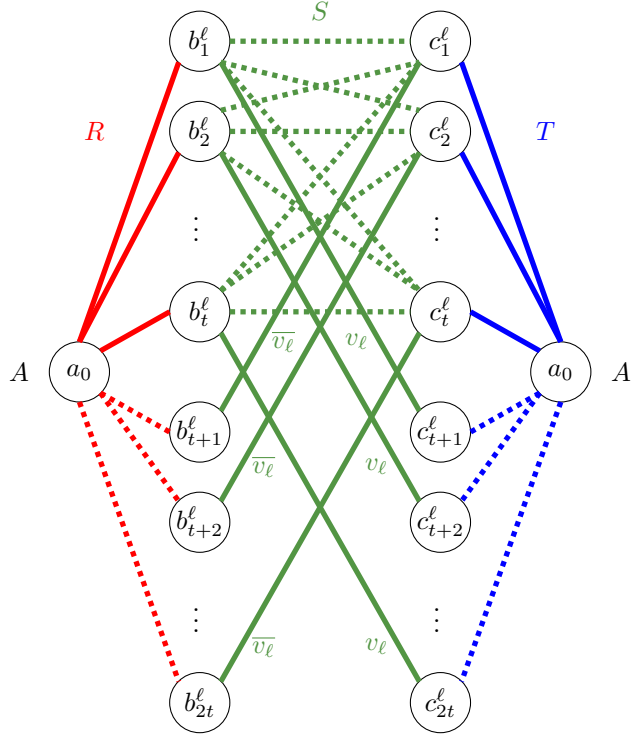

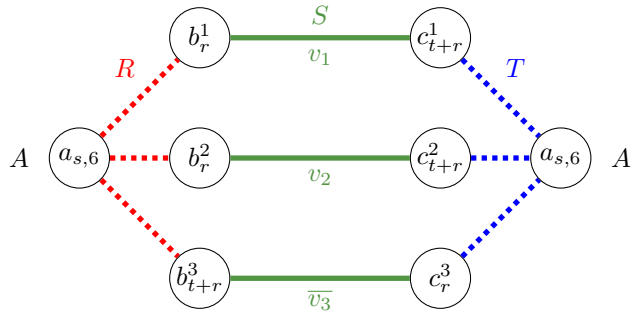
\begin{figure}
\begin{center}
\begin{tikzpicture}[ scale=.4,
			every circle node/.style={fill=white, minimum size=8mm, inner sep=0,draw}]
\draw (-10,0) node[color=black] { $A$};	
\draw (10,0) node[color=black] { $A$};	
\draw (0,4.75) node[color=dg] { $S$};	
\draw (-6.5,3) node[color=red] { $R$};
\draw (6.5,3) node[color=blue] { $T$};	
\node (L) at (-8,0)   [circle] { $a_{s,6}$};
\node (R) at (8,0)   [circle] { $a_{s,6}$};
\node (b1) at  (-4,4) [circle]  { $b^1_{r}$};
\node (b2) at  (-4,0) [circle]  { $b^2_{r}$};
\node (b3) at  (-4,-4) [circle]  { $b^3_{t+ r}$};
\node (c1) at  (4,4) [circle]  { $c^1_{t+r}$};
\node (c2) at  (4,0) [circle]  { $c^2_{t+r}$};
\node (c3) at  (4,-4) [circle]  { $c^3_{r}$};
\foreach \from/\to in {L/b1,L/b2,L/b3}
\draw[line width=2pt,color=red,dotted] (\from) -- (\to);
\foreach \from/\to/\l in {b1/c1/$v_1$,b2/c2/$v_2$,b3/c3/$\ov{v_3}$}
\draw[line width=2pt,color=dg] (\from) -- (\to) node[pos=.5,below]{\l}; 
\foreach \from/\to in {c1/R,c2/R,c3/R}
\draw[line width=2pt,color=blue,dotted] (\from) -- (\to);
\end{tikzpicture}
\end{center}

\caption{The $\rats$ clause gadget corresponding to clause $C_s = v_1 \lor \ov{v_2} \lor v_3$ and truth assignment $\alpha_6=\set{\angle{v_1,1},\angle{v_2,1},\angle{v_3,0}}$.
$A(a_{s,6})$ must be in the minimum contingency set unless the chosen truth assignment is
$\alpha_6$.}
\label{rats clause fig}
\end{figure}

\begin{proof}
We reduce 3SAT to $\rsp(q_{\textrm{rats}})$.
Let $\psi$ be a 3-CNF formula with variables $v_1, \ldots, v_n$ and clauses $C_1, \ldots, C_m$.
The reduction will map $\psi$ to $f(\psi) = (D,\vec{s_0},k)$ with $\vec{s_0} = S(b_0,c_0)$, where we will construct
$D=(A,R,S,T)$ to have a contingency set 
for $\vec{s_0}$ 
of size $k$ iff $\psi\in$ 3SAT (we explain the choice of value $k$ later in the proof). We let $a_0$ be the unique element of the
domain of $D$ that joins with $\vec{s_0}$.  

In $\rats$, $A$ dominates $R$, but when we are building a contingency set $\Gamma$ for $\vec{s_0}$,
we may require some tuples of the form $R(a_0,b)$.  Note that these cannot be 
replaced by the tuple $A(a_0)$, because that would remove the only witness $(a_0, b_0, c_0)$
that contains our tuple $\vec s_0$.
This  explains why $\res(\rats)\in\PTIME$ while $\rsp(\rats)$ is \NP-complete, 
and it is the key idea behind the reduction we now produce.

For each variable $v_\ell$ occurring in $\psi$, we build the gadget $G_\ell$ as follows: $G_\ell$
consists of $2t$ $b^{\ell}_{j}$ values for $y$ and $2t$ $c^{\ell}_{j}$ values for $z$ ($1\leq j\leq 2t$)
where
$t$ is a constant to be specified later.  We include the $2t$ pairs $R(a_0,b^{\ell}_{j})$ and the $2t$ pairs
$T(c^{\ell}_{j},a_0)$, $1 \leq j \leq 2t$. (See \autoref{rats gadget fig} where these pairs are drawn as edges 
from $a_0$ to each $b^{\ell}_{j}$ and  from each $c^{\ell}_{ j}$ to $a_0$, respectively. 
Notice that the value $a_0$ is shown twice for better illustration.)

Next, we include all the pairs $S(b^{\ell}_{j},c^{\ell}_{j'})$, $1\leq j,j' \leq t$.  These are
drawn in \autoref{rats gadget fig} as a complete bipartite graph between the vertex sets $\set{b^{\ell}_{1},
  \ldots,b^{\ell}_{t}}$ and $\set{c^{\ell}_{1}, \ldots,c^{\ell}_{t}}$.

Finally we add two matchings of size $t$ which we name the ``$v_\ell$ matching'' and the ``$\ov{v_\ell}$ matching,'' respectively:
\begin{align*}
	v_\ell \; \textrm{matching}:\quad &  
		S(b^{\ell}_{ 1},c^{\ell}_{t+1}), \ldots, S(b^{\ell}_{  t},c^{\ell}_{2t})\\
	\ov{v_\ell}\; \textrm{matching}:\quad &  
		S(b^{\ell}_{t+1},c^{\ell}_{1}), \ldots, S(b^{\ell}_{2t},c^{\ell}_{t})
\end{align*}
Notice that in \autoref{rats gadget fig}, 
the $v_\ell$ matchings are connecting the upper left corner with the lower right corner, whereas the $\ov{v_\ell}$ matchings are connecting the other two corners.

Any minimum contingency set must remove all of the witnesses from $G_\ell$.  
Such a minimum contingency set must remove either all the pairs $R(a_0,b^\ell_{1}), \ldots
R(a_0,b^{\ell}_{t})$ or all the pairs $T(c^{\ell}_{1},a_0), \ldots T(c^{\ell}_{t},a_0)$, i.e., one side or
the other of the complete bipartite graph.  After this, $t$ witnesses remain, either involving the 
$v_\ell$ matching (if the $T(c^{\ell}_{i},a_0)$'s were chosen), or otherwise the $\ov{v_\ell}$
matching.  Only the $S$-tuples will be useful for the clause gadgets, so the optimal choice will be
to choose the $t$ $S$-tuples marked $v_\ell$ or the $t$ $S$-tuples marked $\ov{v_\ell}$.  Any
optimal minimal contingency set thus corresponds to a truth assignment to the boolean variables
$v_1, \ldots, v_n$.

So far, we have described the gadgets $G_1, \ldots G_n$ and shown that any minimum contingency set
for this part of $D$ corresponds to a truth assignment for the variables $v_1, \ldots, v_n$.  We
next introduce the clause gadgets and choose the value $k$, so that contingency sets for $D$ of size $k$
will correspond exactly to truth assignments that satisfy all of the clauses of $\psi$.

We now describe the clause gadgets.  Suppose, for example, that
$C_s = v_1 \lor \ov{v_2} \lor v_3$ with $s\in[m]$.  
Then 7 of the eight possible truth assignments to $v_1,v_2,v_3$
satisfy $C_s$, i.e., all but the assignment $\alpha_2$ (010 in binary).  For each of these 7 good assignments:
$\alpha_i$, $i \in \{0, \ldots 7\} - \{2\}$, 
we add an element $a_{s,i}$ to $A$ and we add the tuples to $R$ and $T$ 
so that $a_{s,i}$ participates in three witnesses, each of which shares an $S$ tuple with a witness from each of the three variable gadgets that agree
with assignment $\alpha_i$. 
For example, assignment $\alpha_6$ (110 in binary) makes $v_1,v_2$ true  and $v_3$ false, so $a_{s,6}$ joins with
$S(b^1_{r(s,6)},c^1_{t+r(s,6)})$, 
$S(b^2_{r(s,6)},c^2_{t+r(s,6)})$, and 
$S(b^3_{t+ r(s,6)},c^3_{r(s,6)})$.  
Here $r(s,i)$ is a function that chooses a unique element of the matching $v_j$ or $\ov{v_j}$
appropriate to assignment $\alpha_i$ of clause $s$ (see \autoref{rats clause fig}).

The key property of the $C_s$ gadget is that, if the chosen truth assignment satisfies $C_s$, then we
do not need to worry about the
$a_{s,i}$ corresponding to the chosen assignment, and may choose only 6  $a_{s,i}$'s from $A$ for the contingency set.  
However, if the chosen assignment does not
satisfy $C_s$, then all 7 of the $a_{s_i}$'s must be chosen!

We can let $t=8m$ and $k=(2t)n + 6m$ = $(16n + 6)m$.
Our construction insures that $(D,\vec{s_0},k)\in
\rsp(\rats)$ iff $\psi\in$ 3SAT.
\end{proof}

Notice that in the proof of \specificref{Prop.}{responsibility of rats is hard} we showed that is hard to
compute the responsibility for a tuple from $S$ in  $\rsp(\rats)$.  The complexity of computing the
responsibility of a tuple can depend on which relation the tuple is chosen from.  In the case of
$\rats$, responsibility is hard for tuples from all relations except for $A$.

The proof of \specificref{Prop.}{responsibility of rats is hard} shows that domination does not work the same
way for responsibility as it does for resilience.  In particular, 
the analogy of \specificref{Prop.}{fact:
  domination does not change complexity} (Domination for Resilience) does not hold for responsibility.  

We next show that a modified version of domination still works for responsibility.  Recall the queries
$\brats$ (\specificref{Example}{four main queries ex}) and define the query $\brx$ as follows:

\myequation{eqn:brx}{\brx \datarule A(x),R^\exSymb(x,y),B(y),S(y,z),T(z,x)\; .}

Notice that
$\var(A) \subset \var(R)$ and $\var(B) \subset \var(R)$ and that also $\var(R) 
\subseteq \var(A) \cup \var(B)$.

\begin{proposition}[$\rsp(\brats)$]\label{full domination in brats}
The complexity of responsibility for $\brats$ is unchanged if we make $R$ exogenous, i.e., 
\[\rsp(\brats) \equiv \rsp(\brx)\; .\]
\end{proposition}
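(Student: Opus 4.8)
The plan is to establish the equivalence via the \emph{identity} reduction on instances $(D,t,k)$, exactly in the spirit of the proof of \specificref{Prop.}{fact: domination does not change complexity}. Since $\brats$ and $\brx$ are literally the same query except for the exogenous label on $R$, a database $D$ satisfies one iff it satisfies the other, and ``$\vec w$ is a witness'' is the same notion for both; the only difference is that a contingency set for $\brx$ may not contain tuples of $R$. Hence it suffices to show that for every $D$ and every tuple $t$, a contingency set for $t$ exists in $(\brats,D)$ iff one exists in $(\brx,D)$, and when they exist the minimum sizes agree. One direction is immediate: the endogenous part of $D$ for $\brx$ is contained in that for $\brats$, so every contingency set for $t$ in $(\brx,D)$ is already a contingency set for $t$ in $(\brats,D)$ of the same size.

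For the other direction I would start from a contingency set $\Gamma$ for $t$ in $(\brats,D)$ and rewrite it into one for $\brx$ of no larger size. Since $\Gamma$ is a contingency set, $D-\Gamma\models\brats$ but $D-(\Gamma\cup\set t)\not\models\brats$, so some witness $\vec w=(a^*,b^*,c^*)$ survives $\Gamma$ and uses $t$ (otherwise removing $t$ could not change the truth value). Using the full-domination hypothesis --- $\var(A)\subset\var(R)$, $\var(B)\subset\var(R)$, and $\var(R)\subseteq\var(A)\cup\var(B)$, so that the first coordinate $a$ of $R(a,b)$ is its $A$-value and the second coordinate $b$ is its $B$-value --- I would replace each $R(a,b)\in\Gamma$ as follows: if $a\ne a^*$, put $A(a)$ into $\Gamma'$; otherwise $a=a^*$, in which case $b\ne b^*$ must hold (else $R(a^*,b^*)\in\Gamma$ would contradict $\vec w$ surviving $\Gamma$), and I put $B(b)$ into $\Gamma'$. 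All non-$R$ tuples of $\Gamma$ are kept unchanged. Then $\Gamma'\subseteq D^{\enSymb}$ for $\brx$, and $|\Gamma'|\le|\Gamma|$ (replacements may collide, which only helps).

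Then I would verify that $\Gamma'$ is a valid contingency set for $t$ in $(\brx,D)$ via three checks. First, deleting $A(a)$ kills every witness of the form $(a,\cdot,\cdot)$ and deleting $B(b)$ kills every witness of the form $(\cdot,b,\cdot)$, each of which includes the witness $(a,b,\cdot)$ that $R(a,b)$ alone killed; hence $\Gamma'$ kills every witness that $\Gamma$ killed, so every witness surviving $\Gamma'$ also survives $\Gamma$ and therefore uses $t$. Second, $\vec w$ still survives $\Gamma'$: the non-$R$ tuples of $\Gamma$ avoid $\vec w$ (as $\vec w$ survives $\Gamma$), and by construction no inserted tuple equals $A(a^*)$ or $B(b^*)$, the only $A$- and $B$-tuples of $\vec w$. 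Third, $t\notin\Gamma'$: we have $t\notin\Gamma$, and since $t$ lies on $\vec w$, if $t$ is an $A$- or $B$-tuple then $t\in\set{A(a^*),B(b^*)}$, precisely the tuples we never insert, while if $t$ is not an $A$- or $B$-tuple no inserted tuple can equal it. Combining the three points, $D-\Gamma'\models\brx$ (via $\vec w$) while $D-(\Gamma'\cup\set t)\not\models\brx$, so $\Gamma'$ witnesses the desired inequality.

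The main obstacle --- and the only place where the special structure of $\brats$ (as opposed to $\rats$) is used --- is the replacement step: we must cover every $R$-tuple of $\Gamma$ by an $A$- or $B$-tuple without ever deleting the distinguished surviving witness $\vec w$ through $t$, and without accidentally re-inserting $t$. Full domination is exactly what makes this possible: each $R(a,b)$ has two admissible covers, $A(a)$ and $B(b)$, and because $\vec w$ survives $\Gamma$, any $R$-tuple of $\Gamma$ disagrees with $\vec w$ in its $A$-coordinate or in its $B$-coordinate, so at least one of its two covers is safe. For $\rats$, where $R$ is only singly dominated (by $A$), an $R$-tuple sharing the $A$-value of a needed witness has no safe replacement --- this is precisely the gap exploited in \specificref{Prop.}{responsibility of rats is hard}.
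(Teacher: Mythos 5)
Your proposal is correct and takes essentially the same route as the paper's proof: the heart of the argument—replacing each $R(a,b)$ in the contingency set by $A(a)$ when $a\ne a^*$ and by $B(b)$ otherwise (noting that then $b\ne b^*$), guided by a witness through $t$ that survives $\Gamma$—is exactly the paper's exchange step. The only difference is presentational: you rewrite all $R$-tuples at once and verify the three conditions explicitly, whereas the paper phrases it as a contradiction for a minimum contingency set containing the fewest possible $R$-tuples.
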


\begin{proof}
Let $D\models \brats$ and let $\vec t$ be a tuple that participates in a witness that $D\models\brats$.
We will show that there is a minimum contingency set $\Gamma'$ 
for $\vec t$ that contains no tuples from $R$.
Let $\Gamma$ be a minimum contingency set for $\vec t$ that contains as few tuples from $R$ as
possible. Suppose that $R(a_1,b_1)\in \Gamma$.
Let $\vec{j}$ be a witness that $(D-\Gamma)\models\brats$ and 
let $a_0,b_0,c_0$ be the projection of $\vec{j}$ onto components $x,y,z$, respectively.  
Thus, $A(a_0),R(a_0,b_0)$ and $B(b_0)$ are all in $D-\Gamma$.
In particular, $R(a_1,b_1)\ne R(a_0,b_0)$.
Let $\Gamma'$ be the result of replacing $R(a_1,b_1)$ by $A(a_1)$ if
$a_1\ne a_0$, and by $B(b_1)$ otherwise, in which case $b_1 \ne b_0$.
Thus $\Gamma'$ is still a minimum contingency set for $\vec t$ and it contains fewer tuples from $R$,
contradicting the fact that $\Gamma$ had the fewest possible such tuples.
Thus, tuples from $R$ are never needed in any minimum contingency set for $\vec t$.
Thus, as claimed, the complexity of $\rsp(\brats)$ is unchanged when we make $R$ exogenous.
\end{proof}

We are now ready to formalize \emph{full domination}, the version of domination that works for responsibility the way
that ordinary domination works for resilience.  Our first example is that in the query $\brats$, the relation $R$ is fully
dominated  because every variable in $\var(R)$ is ``covered'' by some other endogenous
relation (\specificref{Prop.}{full domination in brats}).\footnote{
Contrast this with the definition of domination (\specificref{Definition}{domination}) which only
requires that some subset of the variables is covered by another relation.}
Here are three more examples, $s_1,s_2,s_3$ where $R$ is fully
dominated and one, $n_4$, where it is not.  

\myequation{eqn:fullDominationExamples}{
\begin{array}{rcl}
s_1 &\datarule& A(x),R(x,y,w),B(y),S(y,z),T(z,x)\\
s_2 &\datarule& A(x),R(x,y,w),Q^\exSymb(w),B(y),S(y,z),T(z,x)\\
s_3 &\datarule& A(x),R(x,y,w),Q^\exSymb(w,x),B(y),S(y,z),T(z,x)\\
n_4 &\datarule& A(x),R(x,y,w),Q^\exSymb(w,z),B(y),S(y,z),T(z,x)\\
\end{array}}

In a query $q$,  call a variable $w\in\var(R)$ \emph{solitary} if it cannot reach another endogenous atom
without following one of the edges in $\var(R) - \set{w}$.  Note that in each of $s_1,s_2,s_3$, the
variable $w$ is solitary, but $w$ is not solitary in $n_4$.

\begin{definition}[Full domination]\label{full domination}
Let $F$ be an atom of query $q$.  $F$ is
\emph{fully dominated} iff for all non-solitary variables $y\in \var(F)$
there is another atom $A$ such that $y\in\var(A)\subset\var(F)$.
\end{definition}

Observe that relation $R$ is fully dominated in $\brats$, as well as in $s_1,s_2,s_3$, but not in $n_4$ (\specificref{Eq.}{eqn:fullDominationExamples}).
On the other hand, $R$ is not fully dominated
in $\rats$ because $y$ is connected to $S(y,z)$ and thus not
solitary and not covered by any smaller atom.

We now show that fully dominated atoms may be made exogenous.

\begin{lemma}[Full domination]\label{full domination vs exogenous}
Let $F$ be a fully dominated atom in an sj-free CQ  $q$.  Let $q'$ be the modified query in which $F$
is made exogenous.  Then $\rsp(q) \equiv \rsp(q')$.
\end{lemma}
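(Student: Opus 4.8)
Note first that $q$ and $q'$ are literally the same first‑order formula; they differ only in which atoms are labelled endogenous. Hence a contingency set for $t$ with respect to $q'$ (which by definition contains no tuple of $F$) is automatically a contingency set of the same size for $t$ with respect to $q$, so $(D,t,k)\in\rsp(q')\Rightarrow(D,t,k)\in\rsp(q)$ under the identity map. Consequently it suffices to prove the converse \emph{normalization claim}: every minimum contingency set $\Gamma$ for $t$ with respect to $q$ can be replaced by one of the same size containing no tuple of $F$ (which is then also a contingency set for $t$ with respect to $q'$). Given this, the identity map is a (first‑order) reduction in both directions, so $\rsp(q)\equiv\rsp(q')$. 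This mirrors exactly how \specificref{Prop.}{full domination in brats} is proved for the special case $q=\brats$, $F=R$, which here had no solitary variables; the new ingredient is the handling of solitary variables, which full domination does not require to be covered.

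\textbf{Setup and the easy case.} Fix $D\models q$ and $t\in D$ lying in some witness (otherwise both problems are vacuous). Let $\Gamma$ be a minimum contingency set for $t$ containing as few tuples of $F$ as possible, and suppose for contradiction $F(\vec a)\in\Gamma$. Since $F(\vec a)\notin D-\Gamma$, every witness of $D-\Gamma\models q$ uses some $F(\vec b)$ with $\vec b\neq\vec a$; fix such a witness $\vec j$ and set $\vec b=\pi_{\var(F)}(\vec j)$. Write $N\subseteq\var(F)$ for the non‑solitary variables of $F$ and $W=\var(F)\setminus N$ for the solitary ones. \emph{Case 1: $\vec a$ and $\vec b$ differ on some $y\in N$.} By full domination there is an endogenous atom $A_y$ with $y\in\var(A_y)\subsetneq\var(F)$. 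Put $\Gamma'=(\Gamma\setminus\{F(\vec a)\})\cup\{A_y(\pi_{\var(A_y)}(\vec a))\}$, which has $|\Gamma'|\le|\Gamma|$ and strictly fewer $F$‑tuples. I would verify (i) $\vec j$ still witnesses $D-\Gamma'\models q$, since its $y$‑value is $b_y\neq a_y$, so $\pi_{\var(A_y)}(\vec j)\neq\pi_{\var(A_y)}(\vec a)$ and $\vec j$ avoids the added tuple; and (ii) any witness of $D-(\Gamma'\cup\{t\})\models q$ avoids $A_y(\pi_{\var(A_y)}(\vec a))$, hence differs from $\vec a$ in some coordinate of $\var(A_y)\subseteq\var(F)$, hence avoids $F(\vec a)$, hence witnesses $D-(\Gamma\cup\{t\})\models q$ --- contradicting that $\Gamma$ is a contingency set for $t$. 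Thus $\Gamma'$ contradicts the minimality/few‑$F$‑tuples choice of $\Gamma$.

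\textbf{The solitary case.} \emph{Case 2: $\vec a$ and $\vec b$ agree on all of $N$} (and, since $\vec a\neq\vec b$, differ on some solitary variable). I would show $\Gamma\setminus\{F(\vec a)\}$ is already a contingency set for $t$, contradicting minimality of $|\Gamma|$. The structural lemma needed is: the connected component $\mathcal S$ of $F$ in the hypergraph obtained by deleting all hyperedges in $N$ contains \emph{no endogenous atom other than $F$}; I would prove this by taking a shortest path in that hypergraph from $F$ to a putative endogenous $G\neq F$, letting $w$ be its last hyperedge lying in $\var(F)$ (so $w\in W$), and noting that the tail of the path from $F$ via $w$ to $G$ uses no hyperedge of $\var(F)\setminus\{w\}$, contradicting solitarity of $w$. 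Now take any witness $\vec w$ of $D-((\Gamma\setminus\{F(\vec a)\})\cup\{t\})\models q$; it must use $F(\vec a)$, else it already contradicts $\Gamma$. Splice $\vec w$ and $\vec j$: let $\vec w^{\ast}$ agree with $\vec j$ on $W$ and on all variables private to $\mathcal S$, and with $\vec w$ on every other variable (in particular on $N$). Since $\vec w$ and $\vec j$ agree on $N$, every atom of $\mathcal S$ receives exactly $\vec j$'s tuple (these atoms are exogenous except $F$, and at $F$ the spliced tuple is $F(\vec b)$), while every atom outside $\mathcal S$ --- which touches no variable of $W$ nor any private $\mathcal S$‑variable --- receives $\vec w$'s tuple; so $\vec w^{\ast}$ is a witness of $D\models q$ using $F(\vec b)$ instead of $F(\vec a)$, still avoiding $\Gamma$, and --- choosing $\vec j$ among the surviving witnesses so as to avoid $t$ whenever possible --- still avoiding $t$. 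Then $\vec w^{\ast}$ witnesses $D-(\Gamma\cup\{t\})\models q$, a contradiction.

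\textbf{Main obstacle.} The delicate point is Case 2 when $t$ itself lies in the solitary region $\mathcal S$ --- most sharply when $t$ is itself a tuple of $F$ --- because then one may be unable to pick a surviving witness $\vec j$ avoiding $t$. I expect to dispose of this by a direct argument: if every surviving witness of $D-\Gamma\models q$ uses $t$ then, since those witnesses all agree with $\vec a$ on $N$ (Case 2 hypothesis) and $t$ agrees with $\vec a$ on $N$ while differing on $W$, the tuple $F(\vec a)$ could have been dropped from (or re‑solitary‑swapped out of) $\Gamma$ without affecting the contingency property, again contradicting minimality. Everything else is a routine extension of the $\brats$ proof (\specificref{Prop.}{full domination in brats}) together with the trivial observation that the identity map is a valid reduction both ways.
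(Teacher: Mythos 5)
Your first direction (the identity map from $\rsp(q')$ to $\rsp(q)$) and your two cases for a tuple $t$ that does \emph{not} belong to $F$ are essentially the paper's Case~1(a)/(b) and are fine; the structural lemma you sketch about the solitary component is sound and is implicitly what the paper's ``not useful'' argument rests on. The genuine gap is exactly the situation you flag as the ``main obstacle,'' namely $t\in F$, which the paper treats as a separate case with a \emph{non-identity} reduction. There your normalization claim is false and your proposed patch fails. Concretely, take $q \datarule A(x),R(x,y,w),B(y),S(y,z),T(z,x)$ with $F=R$ fully dominated ($w$ solitary), database $A(a)$, $B(b)$, $S(b,z_0)$, $T(z_0,a)$, $R(a,b,c)$, $R(a,b,c')$, and $t=R(a,b,c)$. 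Every contingency set for $t$ in $q$ must contain the twin $R(a,b,c')$: if it survives, every witness through $t$ has a parallel witness through the twin that outlives the deletion of $t$, so $t$ is not counterfactual; and the twin cannot be traded for $A(a)$ or $B(b)$, since deleting those kills the witnesses through $t$ as well. Hence $(D,t,1)\in\rsp(q)$, yet in $q'$ (where $R$ is exogenous) \emph{no} contingency set for $t$ exists at all. So the identity map is not a reduction from $\rsp(q)$ to $\rsp(q')$, and your resolution of the obstacle --- that the offending $F$-tuple ``could have been dropped from $\Gamma$ without affecting the contingency property'' --- is precisely what breaks: dropping it leaves a surviving witness that avoids $t$.

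What is needed (and what the paper does in its Case~2, $F=S$) is to change the instance and the budget: let $I$ be the solitary variables of $F$ and $W$ the useful tuples of $F$ other than $t$ agreeing with $t$ on all variables outside $I$; argue that $W$ is forced into every contingency set for $t$ in $q$ and that no further $F$-tuples are ever needed, and then map $(D,t,k)$ to $(D',t,k-|W|)$ where $D'$ is $D$ with $W$ deleted from $F$. Your splicing machinery cannot substitute for this adjustment, since the lemma's equivalence is inter-reducibility, not identity of the two problems on every instance. A smaller point: in your Case~2 for $t\notin F$ you additionally need $t$'s atom to lie outside the solitary component of $F$ (otherwise the spliced witness may itself pass through $t$); the paper secures this by keeping the relation of $t$ endogenous while making fully dominated atoms exogenous, per its footnote about processing that relation last.
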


{\sloppy

\begin{proof}
We have to show that $\rsp(q) \leq \rsp(q')$ and $\rsp(q') \leq \rsp(q)$. Suppose we are given
$(D,S(\vec t))$ and we are interested in the responsibility of tuple $S(\vec t)$.
There are
two cases. In each case, we will show how, given one of $k,k'$, to produce the other, such that:
\myequation{full domination eq}{(D,\vec t,k)\in\rsp(q) \qLra (D',\vec{t},k') \in \rsp(q')}

\emph{Case 1}: $F\ne S$:   We show that as in the proof
of \specificref{Prop.}{full domination in brats}, there is no need to include any tuples from $F$ in a minimum
contingency set $\Gamma$ for $q$ in $D$.  As in that proof, we let 
$\vec{j}$ be a witness for  $(D-\Gamma)\models q$ and suppose that $F(\vec f) \in \Gamma$.
Thus, $\vec j$ and $\vec f$ must disagree on the assignment of at least one variable.

\emph{(a)}: 
Suppose they differ on some non-solitary variable $y$ of $F$.  Let $A$ be the
atom that covers $y$ and we can replace $F(\vec f)$ by the tuple $\pi_{\var(A)}(\vec f)$ of $A$.
Thus, the sizes of the minimum contingency sets on the two sides are identical and 
letting $k = k'$ and $D=D'$, \specificref{Eq.}{full domination eq} holds. 

\emph{(b)}: 
Suppose on the contrary that $\vec j$ and $\vec f$ agree on all the non-solitary variables of
$F$. Note that since $S$ is endogenous, 
no non-solitary variable of $F$ can occur in $S$\footnote{We are allowing the computation of the
  responsibility of tuples from exogenous relations just to make the proofs simpler.  Notice
  that we never change the relation $S$ whose tuples we are computing the responsibility of.  Thus,
  if we must make $S$ exogenous, we do so as the last fully-dominated atom we make exogenous.}.
Thus, the only place that  $\vec j$ and $\vec f$ 
disagree is on non-solitary variables of $F$ which do not occur in $S$.  Let $F(\vec{f_0})$ be the
tuple of $F$ that agrees with $\vec j$.  Then $\vec f$ and $\vec{f_0}$ agree on all variables except
for solitary variables of $F$.  Thus, since removing $S(\vec t)$ from $D-(\Gamma-\set{F(\vec f)})$ removes all
witnesses of $D\models q$ that extend $\vec{f_0}$, it must also remove all witnesses that extend
$\vec f$, i.e., $\vec f$ is not useful so it does not occur in $\Gamma$.

\emph{Case 2}: $F = S$: 
In this case, some tuples of $F$ may need to be in
$\Gamma$. Let $I$ be the solitary variables of $F$ and let $W = \bigset{\vec f\in F}{\vec f \mbox{
    useful
; } \vec f \ne \vec   t \wedge \pi_{\ov     I}(\vec f)=\pi_{\ov I}(\vec t)}$. 
These are the tuples of $F$ which agree with $\vec t$ on all
but the solitary variables of $F$.  
$W$ must be contained in every contingency set for $(D,\vec t)$.  Thus, we
let $k= k' + \abs{W}$ and $F' = F - W$.  \specificref{Eq.}{full domination eq} holds.
(The point of $\vec f$ being useful in the definition of $W$ is that solitary variables may occur in some exogenous relations which
could already exclude certain values, and thus tuples with those values are not useful so they do
not need to be in the contingency set.)
\end{proof}

}

\subsection{Triads and hardness} 

Now that we have established that full domination works for responsibility, we proceed to prove a
complexity dichotomy for responsibility.  

When studying responsibility, we will insist from now on that every fully dominated atom is
exogenous. For example, $\rats$ has no fully dominated atoms, so it
is already in its normal form and it has a triad, $\set{R,S,T}$.  Note that we cannot have two elements in
a triad such that $\var(S_1) \subset \var(S_2)$ because removing $\var(S_2)$ would isolate $S_1$.
Thus $\set{R,S,T}$ is the unique triad of $\rats$.
On the other hand, $R$ is fully dominated in $\brats$, so we transform it to triad-free $\brx$
(\specificref{Eq.}{eqn:brx}).

We now show that $\rsp(q)$ is \NP-complete if $q$ has a triad.
Then we will show that otherwise $\rsp(q)\in \PTIME$ (\specificref{Cor.}{easy rsp*}). The proofs will take the same form as for
resilience, however the following proof is slightly more subtle than the
analogous result for resilience.

\begin{lemma}[Triads make $\rsp(q)$ hard]\label{triads are hard for rsp}
Let $q$ be an sj-free CQ where all fully dominated atoms are exogenous. If $q$ has a triad, then $\rsp(q)$
is \NP-complete. 
\end{lemma}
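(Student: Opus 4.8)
The plan is to reduce $3\sat$ to $\rsp(q)$, following the same overall strategy as the resilience hardness proof (\specificref{Lemma}{hard part dichotomy}) --- embed a hard instance into $q$ by using the triad ${\cal T}=\set{S_0,S_1,S_2}$ --- but with the embedded instance being the \emph{responsibility} construction of \specificref{Prop.}{responsibility of rats is hard} rather than a resilience instance of $q_\triangle$. As in \specificref{Lemma}{hard part dichotomy}, I would first assume that no variable occurs in all three atoms of ${\cal T}$ (otherwise set it to a constant), fix a path from $S_0$ to $S_1$ avoiding $\var(S_2)$, one from $S_1$ to $S_2$ avoiding $\var(S_0)$, and one from $S_2$ to $S_0$ avoiding $\var(S_1)$, and partition $\var(q)$ into the seven classes $V_0,\dots,V_6$ of \autoref{variable partition eq}. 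Given a $3$CNF $\psi$, I build a database $D$ in which $S_0,S_1,S_2$ play the roles played by $R,S,T$ in \specificref{Prop.}{responsibility of rats is hard}: the variable gadgets $G_\ell$ (a complete bipartite block on the $S_1$-tuples together with a ``$v_\ell$'' matching and a ``$\ov{v_\ell}$'' matching, plus the $S_0$- and $S_2$-tuples hanging off the pinned coordinate) and the clause gadgets are laid out exactly as there, while every non-triad atom of $q$ receives one tuple per $q_\triangle$-style witness, its variables being assigned the derived values $\angle{ab},\angle{bc},\angle{ca},\angle{abc},a,b,c$ according to their class. Exactly as in the proof of \specificref{Lemma}{hard part dichotomy}, the three chosen paths force the coordinates $a,b,c$ of any witness of $D\models q$ to be consistent, so the witnesses of $D\models q$ are in bijection with the witnesses of the underlying $\rats$-instance and removing a triad tuple has precisely the effect of removing the corresponding $R$-, $S$- or $T$-tuple there.

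Second, I would pick the target tuple $\vec{t_0}$ to be the tuple of $S_1$ corresponding to $\rats$'s $\vec{s_0}=S(b_0,c_0)$, chosen so that exactly one witness of $D\models q$ passes through it and it pins all three coordinates; and I would take the gadget size $t$ proportional to $m$ and the threshold $k=(2t)n+6m$ just as in \specificref{Prop.}{responsibility of rats is hard}. The variable gadget then forces every minimum contingency set for $\vec{t_0}$ to choose one of the two sides of each complete bipartite block, i.e., a truth assignment; a satisfied clause costs nothing extra, while a falsified one costs one additional tuple. Combining the two directions yields $(D,\vec{t_0},k)\in\rsp(q)\Leftrightarrow\psi\in 3\sat$. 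Since $q$ is computable in \ptime, $\rsp(q)\in\NP$, and therefore $\rsp(q)$ is \NP-complete.

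The crux --- and the reason this is subtler than the resilience case --- is the treatment of non-triad \emph{endogenous} atoms $B$ with $\var(B)\subseteq\var(S_i)$ for some $i$: these \emph{dominate} a triad atom in the sense of \specificref{Definition}{domination} but need not be \emph{fully} dominated, so under the convention for responsibility they remain endogenous. A single tuple of such a $B$ joins with an entire block of witnesses, so a priori it could remove many witnesses at unit cost and make the reduction collapse --- this is exactly the phenomenon by which $\res(\rats)$ is easy although the triad of $q_\triangle$ survives inside $\rats$. The resolution I would spell out is precisely the device of \specificref{Prop.}{responsibility of rats is hard}: because the unique protected witness through $\vec{t_0}$ pins the coordinates, every $B$-tuple lying on that witness is \emph{forced to survive} and hence cannot serve as a shortcut, and the remaining $B$-tuples are harmless once $t$ is large compared to $m$, since whatever saving they could yield inside a variable gadget is dominated by the $6m$ term in $k$. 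If $q$ has no such atom at all --- the ``$q_\triangle$-like'' situation, e.g.\ $q_\Tri$ once its fully dominated atom $W$ has been made exogenous --- the construction degenerates and one may simply invoke \specificref{Lemma}{hard part dichotomy} together with \specificref{Lemma}{lem:res_rsp}, or equivalently realize the clause penalty by an extra triad tuple along the relevant path as in \specificref{Prop.}{thm: hardness of triangle}. I would also have to re-verify the routine points --- the bijection of witnesses and the ``effect of a single tuple'' claim --- in the presence of the additional endogenous atoms and of the path atoms, following the two cases of \specificref{Lemma}{hard part dichotomy}.
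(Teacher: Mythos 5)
Your reduction has a genuine gap, and it sits exactly where the full-domination hypothesis has to be used. Consider the case in which \emph{two different} pairwise intersections of the triad carry endogenous atoms, say $\var(A)\subseteq\var(S_0)\cap\var(S_2)$ and $\var(B)\subseteq\var(S_0)\cap\var(S_1)$ with both $A,B$ endogenous; this is compatible with the hypothesis that no triad atom is fully dominated, e.g.\ $q\datarule A(x),B(y),R(x,y,w),S(y,z),T(z,x),E^{\exSymb}(w,u),F(u)$, whose triad $\set{R,S,T}$ survives because $w$ is non-solitary and uncovered. In your embedding the tuples of $B$ are keyed on the $b$-coordinate, so inside a variable gadget one may delete the $2t$ $B$-tuples for $b^{\ell}_{1},\dots,b^{\ell}_{2t}$: this costs exactly the same $2t$ as the intended solution, but it kills \emph{both} matchings and hence, for free, every clause witness whose shared $S_1$-tuple comes from that gadget --- it amounts to setting $v_\ell$ true and false simultaneously. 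Doing this in every gadget gives a contingency set of size $2tn\leq k=2tn+6m$ for the target tuple irrespective of whether $\psi$ is satisfiable (the protected witness is untouched, since $b_0$ is not a gadget value), so the reduction no longer separates satisfiable from unsatisfiable formulas. Neither of your safeguards repairs this: pinning the protected witness only immunizes the one $B$-tuple lying on it, and enlarging $t$ is useless because $t$ enters the honest and the cheating solutions identically while what gets erased is the entire $6m$ clause budget. This is precisely the mechanism that makes $\rsp(\brats)$ easy (\specificref{Prop.}{full domination in brats}); the danger is that here the triad survives, so the query really is hard and must be proven hard by a different argument.

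The paper's proof is a case analysis on exactly this structure. If no endogenous atom sits inside any $\var(S_i)\cap\var(S_j)$, it reduces from $\rsp(q_\triangle)$ (your fallback covers this). If such atoms exist for a \emph{unique} pair, it reduces from $\rsp(\rats)$ via the same witness-preserving embedding you describe, and there the embedding is sound: every extra endogenous atom is keyed on $a$, on $(a,b)$, $(b,c)$, $(c,a)$, or on a whole witness, none of which exceeds the deleting power already present in $\rats$. But when two pairs are covered, the paper abandons the $\rats$ gadget entirely: since $S_0$ is not fully dominated there is a non-solitary variable $w\in\var(S_0)$ outside $\var(A)\cup\var(B)$ reaching a further endogenous atom $C$, which exhibits a tripod $\set{A,B,C}$ (with $S_0$ and the connecting path acting as the exogenous $W$), and one reduces from $\rsp(q_\Tri)$ instead (\specificref{Prop.}{prop:tripodQuery}). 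Your proposal never invokes full domination beyond the degenerate no-dominating-atom case, and this third case --- where your construction provably collapses --- is the missing idea.
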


\begin{proof}
Depending on which of the following cases the query falls into,
we build a reduction to $\rsp(q)$ from $\rsp(q_\triangle),\rsp(\rats)$ or $\rsp(q_\Tri)$.
Let $\mathcal{T} = \set{S_0,S_1,S_2}$ be a triad in query $q$.

\emph{Case 1:}  There is no endogenous atom $A$ such that $\var(A) \subseteq\var(S_i) \cap \var(S_j)$, 
for some $i\ne j$.
We will show that $\rsp(q_\triangle) \leq \rsp(q)$.

Given $D,\vec t,k$ we must produce $D',\vec{t'},k'$ such that 
\myequation{case1}{
\begin{array}{rcl}
(D,\vec t,k)\in \rsp(q_\triangle) &\leftrightarrow& (D',\vec{t'},k') \in \rsp(q)\; .
\end{array}}

Note that we may assume that $\vec t = R(a_0,b_0)$ for some values $a_0,b_0$, i.e., that $\vec t$ is a
tuple from $R$, because we know that $\rsp(q_\triangle)$ is hard no matter which relation we choose
the tuple from (\specificref{Prop.}{lem:res_rsp}).

In this case, we construct $D'$ exactly as we did in
\specificref{Lemma}{hard part dichotomy} (Cases 1 or 2), and as we did there, we let $k'=k$.
The only difference is that we must define $\vec{t'}$ from $\vec t$.
This is easy: recall that $\vec t = R(a_0,b_0)$.  We let
$\vec{t'}=S_0(\angle{a_0b_0},a_0,b_0)$, i.e., the corresponding tuple of $S_0$.  Thus, we have
  exactly simulated $q_\triangle$ in $q$, so \specificref{Eq.}{case1} holds. 

\emph{Case 2:}  There is an endogenous atom $A$ and some $i\ne j$, such that $\var(A) \subseteq\var(S_i)
\cap \var(S_j)$, but only for a unique pair $i\ne j$.
We show that $\rsp(\rats) \leq \rsp(q)$.  Let the pair be $0,2$, i.e., 
 $\var(A) \subseteq\var(S_0) \cap \var(S_2)$. 

Again, we are given $D,\vec t,k$, where $\vec t = R(a_0,b_0)$.  We  produce $D',\vec{t'}$, but
now such that,

\myequation{case2}{
\begin{array}{rcl}
(D,\vec t,k)\in \rsp(\rats) &\Leftrightarrow& (D',\vec{t'},k) \in \rsp(q)\;.
\end{array}}

We produce $D'$ and $\vec{t'}$ exactly as in Case 1, and we again have that all the witnesses and
minimum contingency sets for $\rats$ wrt $D,\vec t$ are preserved for $q$ wrt $D',\vec{t'}$.
Thus \specificref{Eq.}{case2} holds.

Finally, we are left with,

\emph{Case 3:}  There are endogenous atoms $A,B$ such that WLOG 
$\var(A) \subseteq\var(S_0) \cap \var(S_2)$, and 
$\var(B) \subseteq\var(S_0) \cap \var(S_1)$.

We know that $S_0$ is not fully dominated.  Thus, there must exist a non-solitary variable $w\in
\var(S_0)$ such that $w\not\in \var(A) \cup \var(B)$.  
Since $w$ is not fully dominated, there must be an endogenous atom  $C\ne S_0$ such that 
$C$ is reachable from $S_0$ without using edges from $\var(A) \cup \var(B)$.  Thus we have located a
tripod sitting in the hypergraph of $q$ 
(see \autoref{case3 fig}).  
It thus follow from \specificref{Prop.}{prop:tripodQuery}, that $\rsp(q)$ is \NP-complete as well.
\end{proof}

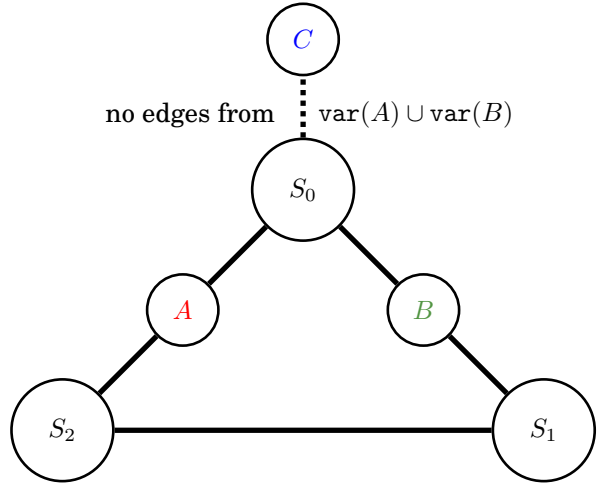
\begin{figure}
\begin{center}
\begin{tikzpicture}[ scale=.4]
\node[circle,draw=black, inner sep=0pt, minimum size=3cc, line width=1pt] (S0) at(0,0)  { $S_0$};
\node[circle,draw=black, inner sep=0pt, minimum size=3cc, line width=1pt] (S1) at(8,-8)  { $S_1$};
\node[circle,draw=black, inner sep=0pt, minimum size=3cc, line width=1pt] (S2) at(-8,-8)  { $S_2$};
\node[circle,draw=black, inner sep=0pt, minimum size=2.1cc, line width=1pt] (A) at(-4,-4)
     {\color{red} $A$};
\node[circle,draw=black, inner sep=0pt, minimum size=2.1cc, line width=1pt] (B) at(4,-4)
     {\color{dg} $B$};
\node[circle,draw=black, inner sep=0pt, minimum size=2.1cc, line width=1pt] (C) at(0,5)
     {\color{blue} $C$};
\foreach \from/\to in {S0/A,S0/B,A/S2,S2/S1,S1/B}
\draw[line width=2pt,color=black] (\from) -- (\to);
\foreach \from/\to in {S0/C}
\draw[dotted,line width=2pt,color=black] (\from) -- (\to);
\draw (-3.75,2.5) node { no edges from};
\draw (3.75,2.5) node { $\var(A)\cup\var(B)$};	
\end{tikzpicture}
\end{center}

\caption{Case 3 of the proof of \autoref{triads are hard for rsp}.  There is a tripod sitting in the
  hypergraph of $q$.}
\label{case3 fig}
\end{figure}

\subsection{The polynomial case}

As we saw in the previous section, the presence of triads in a query makes its responsibility
problem \NP-complete.  In the responsibility setting we require
full domination to make an atom exogenous.  This means that more atoms may remain endogenous, so
there can be more triads.  The query $\rats$ is an example: for resilience we use domination and
after applying domination, $\rats$ has no triads and thus $\res(\rats)\in\PTIME$.
However, if we may only apply full domination, then $\rats$ keeps the triad $R,S,T$ and thus
$\rsp(\rats)$ is \NP-complete.  

We now want to prove the polynomial case for responsibility.
Recall that in the proof of 
\specificref{Lemma}{easy part dichotomy}, we showed the following:

\begin{corollary}\label{ptime cor}
Let $q$ be a CQ that has no triad.  Then we can transform $q$, via a series of
dissociations, to a linear query $q'$ .
\end{corollary}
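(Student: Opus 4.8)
The plan is to observe that this corollary is precisely the content of the induction carried out inside the proof of \specificref{Lemma}{easy part dichotomy}: there we argued, for any triad-free sj-free CQ, that one can reach a linear query by a sequence of dissociations, and it is this sequence of dissociations --- rather than merely the resulting complexity bound --- that we extract here. So I would re-run that induction on the number of endogenous atoms, this time recording the dissociation steps explicitly and discarding the appeal to complexity.

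For the base case, when $q$ has at most two endogenous atoms $S_1,S_2$, I would dissociate every variable of $q$ into every exogenous atom (legitimate by \specificref{Definition}{def: dissociation}); afterwards all exogenous atoms carry the same variable set, so under the standing convention that atoms with identical variable sets may be merged, they collapse into a single exogenous atom $E^{\exSymb}$, giving the linear order $S_1, E^{\exSymb}, S_2$ (or a shorter order in the degenerate subcases). For the inductive step on $n+1$ endogenous atoms, I would first invoke the key structural observation from \specificref{Lemma}{easy part dichotomy} --- that for any three endogenous atoms, triad-freeness together with connectivity forces exactly one of the three cuts $c_i$ (removing all edges incident to $S_i$) to separate the other two --- to build a linear order $S_1,\ldots,S_{n+1}$ on the endogenous atoms in which every variable occurring in endogenous atoms occupies a contiguous block (\autoref{easyWalkFig}). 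I would then let $q_n$ be the query obtained by deleting $S_{n+1}$, every variable in $\var(S_{n+1})-\var(S_n)$, and every atom containing such a variable; the inductive hypothesis supplies dissociations taking $q_n$ to a linear $q_n'$ whose endogenous order is still $S_1,\ldots,S_n$. Finally I would re-add $S_{n+1}$ together with the deleted exogenous atoms, dissociate each re-added exogenous atom by all variables of $\var(S_n)\cup\var(S_{n+1})$ (and by its own variables) so that these atoms merge into one exogenous atom $E_n^{\exSymb}$ which can be appended, producing the linear order of $q_n'$ followed by $E_n^{\exSymb}$ and then $S_{n+1}$.

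The step I expect to be the main obstacle is exactly the delicate one in \specificref{Lemma}{easy part dichotomy}: verifying that the induced linear order on the endogenous atoms is well-defined and \emph{stable} under passing from $q_n$ to $q_{n+1}$ --- i.e.\ that each newly inserted endogenous atom has a unique admissible position --- and, after re-adding $S_{n+1}$ and its exogenous atoms with the stated dissociations, that the cut $c_n$ still separates $\set{E_n^{\exSymb}, S_{n+1}}$ from the rest of the hypergraph, so that every variable still occupies a contiguous interval of atoms and $q_{n+1}'$ is genuinely linear. Degenerate cases (no endogenous atoms, or a disconnected $q$) are handled trivially or by the standing convention that queries are connected. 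No appeal to \specificref{Lemma}{fact: dissociation do not decrease complexity} is needed for this purely structural statement; that lemma is only what later makes the resulting $q'$ useful for bounding the complexity of responsibility.
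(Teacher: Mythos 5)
Your proposal is correct and follows essentially the same route as the paper: the corollary is indeed just the structural content of the induction in the proof of \specificref{Lemma}{easy part dichotomy} (base case with at most two endogenous atoms, the cut argument giving a unique insertion point for each endogenous atom, removal and re-addition of $S_{n+1}$ with its exogenous atoms dissociated on $\var(S_n)\cup\var(S_{n+1})$ and merged into $E^{\exSymb}_n$), with the complexity conclusion stripped away. The paper proves the corollary by exactly this extraction, so nothing further is needed.
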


Then, since dissociations cannot make the resilience problem of an sj-free CQ easier 
(\specificref{Lemma}{fact: dissociation do not decrease complexity}), it followed that $\res(q)\in\PTIME$ for any such
triad-free query, $q$.

To prove that for any triad-free, sj-free CQ, $q$, $\rsp(q)\in\PTIME$, it suffices to prove that dissociations
cannot make the responsibility problem of such queries easier.
As we see next, there is a surprising complication to this proof,
which gives us an unexpected bonus result.

\subsection{A generalization of responsibility}\label{rsp star}

We want to prove that if $q'$ is obtained from $q$ through dissociation, then 
$\rsp(q) \leq \rsp(q')$.  
In the proof of the similar result for resilience we did the following.  We let $R^\exSymb(\vec
z)$ be the atom that was changed to ${R^\exSymb}'(\vec z,v)$.  We then reduced $\res(q)$ to $\res(q')$ by
mapping $(D,k)$ to $(D',k)$ where $D'$ is the same as $D$ with the exception that we let ${R}' =
\bigset{\vec t,d}{R(\vec t)\in D; d\in \dom(D)}$.  This transformation does not change the witness
set nor the contingency sets, because, by the way we formed ${R}'$ from $R$, the conjunct $R'(\vec
z,v)$ places the same restriction on $D'$ that $R(\vec z)$ places on $D$.

This proof goes through fine for responsibility except in one case, namely if the tuple $\vec t$
that we are computing the responsibility of belongs to $R$, the exogenous relation to which we have
added the new variable, $v$\footnote{The reader may wonder why we might need to compute the
  responsibility of an exogenous tuple.  The answer is that the tuple originally might have come from
  an endogenous relation which we transformed to an exogenous one using full domination.}.

When $\vec t \in R$, we would like to transform it to $\vec {t'}\in R'$ by appending a value, $a_i$,
corresponding to the new variable, $v$.  However, this will change responsibility in an unclear
way. In particular, the responsibility of $\vec t$ does not correspond to the responsibility of $\vec
t,a$ for any particular $a$.  It rather corresponds to the responsibility of $\vec t,a$ for \emph{all
  possible} $a$'s.  

To solve our problem, we need to generalize the notion of responsibility to include wildcards.

\begin{definition}[tuples with wildcards]\label{def wildcards}
Let $D$ be a database containing a relation, $R(x_1, \ldots, x_c)$.
Let $\tau =(s_1, \ldots, s_c)$ be a tuple such that each $s_i\in \dom(D) \cup \set{*}$, i.e.,
$\tau$ may have elements in the domain in some coordinates and the wildcard, $*$, in others.  We
call $\tau$ a \emph{tuple with wildcards}.  We say that a tuple $(a_1, \ldots, a_c)\in R$ \emph{matches}
$\tau$ iff for all $i$, $a_i = s_i$ or $s_i = *$.
When $D$ and $R$ are understood, $\tau$ represents a set
of tuples from $R$, 
$\angle{\tau} =  \bigset{\vec a \in R}{\vec a \mbox{ matches } \tau}$.
\end{definition}

For example, the tuple with wildcard, $(a,*)$, matches all pairs from $R$ whose first coordinate
is $a$.  We generalize responsibility to allow us to compute the responsibility of a set of
tuples denoted by a tuple with wildcards:

\begin{definition}[$\rsp^*$]\label{def: responsibility generalized}
Let $D$ be a database containing a relation, $R$, $q$ a query for $D$ and $\tau$ a tuple with
wildcards. Then $(D,\tau,k) \in \rsp^*(q)$ iff there exists a contingency set $\Gamma$ of size $k$
such that $(D-\Gamma) \models q$ and $(D -(\Gamma \cup \angle{\tau})) \not\models q$.
\end{definition}

Since $\rsp^*(q)$ is just a generalization of $\rsp(q)$ it is immediate that  $\rsp(q) \leq
\rsp^*(q)$.  Thus, $\rsp^*(q)$ is \NP-complete whenever $\rsp(q)$ is:

\begin{corollary}\label{hard rsp*}
Let $q$ be an sj-free CQ all of whose fully dominated atoms are exogenous.  It $q$ has a triad then
$\rsp^*(q)$ is \NP-complete. 
\end{corollary}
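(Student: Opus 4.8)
The plan is to derive this corollary directly from \specificref{Lemma}{triads are hard for rsp} together with the trivial reduction $\rsp(q)\le\rsp^*(q)$ observed just above the statement. First I would establish membership in \NP: on input $(D,\tau,k)$, a nondeterministic algorithm guesses a set $\Gamma\subseteq D^\enSymb$ with $|\Gamma|\le k$, computes $\angle{\tau}$ by a single scan of the relation named in $\tau$, and then verifies in polynomial time---using that $q$ is a fixed sj-free CQ, so query evaluation is polynomial in data complexity---that $D-\Gamma\models q$ while $D-(\Gamma\cup\angle{\tau})\not\models q$. Hence $\rsp^*(q)\in\NP$.

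For hardness, I would note that any tuple $\vec t$ occurring in $D$ is in particular a tuple with wildcards containing no wildcard symbol, and then $\angle{\vec t}=\set{\vec t}$, so the identity map $(D,\vec t,k)\mapsto(D,\vec t,k)$ witnesses $\rsp(q)\le\rsp^*(q)$. By \specificref{Lemma}{triads are hard for rsp}, under the present hypotheses---$q$ has a triad and all fully dominated atoms are exogenous---$\rsp(q)$ is \NP-complete, hence \NP-hard; composing with this reduction makes $\rsp^*(q)$ \NP-hard, and together with the upper bound we obtain \NP-completeness.

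The only point that needs a moment's care is keeping the reduction within the syntax of $\rsp^*$: one must take $\vec t$ to be a tuple actually present in $D$, since otherwise $\angle{\vec t}=\emptyset$; but the reductions underlying \specificref{Lemma}{triads are hard for rsp} always designate a genuine tuple of $D$ as the target, so this is automatic. Beyond that there is no real obstacle---all the substantive work, namely the three-case reduction into $\rsp(q_\triangle)$, $\rsp(\rats)$, or $\rsp(q_\Tri)$, has already been carried out inside \specificref{Lemma}{triads are hard for rsp}, and this corollary amounts to packaging that fact with the observation $\rsp(q)\le\rsp^*(q)$.
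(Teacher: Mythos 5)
Your proposal is correct and follows essentially the same route as the paper: the paper derives this corollary directly from the observation that a concrete tuple is a wildcard tuple with no wildcards, so $\rsp(q) \leq \rsp^*(q)$, and combines this with the hardness of $\rsp(q)$ from the triad lemma (membership in \NP being immediate by guess-and-verify). Your extra remark that the designated tuple must actually occur in $D$ is a harmless refinement already implicit in the definition of the responsibility problem.
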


From our previous discussion, it now follows that dissociation does not make $\rsp^*(q)$ easier:

\begin{lemma}\label{lem: dissociation rspStar}
If $q'$ is obtained from $q$ through dissociation, then $\rsp^*(q) \leq \rsp^*(q')$.
\end{lemma}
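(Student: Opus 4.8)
\emph{Overview.} The plan is to reuse, almost verbatim, the reduction from the resilience version of this statement (\specificref{Lemma}{fact: dissociation do not decrease complexity}), and to spend the new work only on the single point where that argument fails: when the tuple-with-wildcards whose responsibility we are computing lies in the very atom that is being dissociated. This is precisely the phenomenon that forced the introduction of $\rsp^*$, and a wildcard in the freshly added column is exactly the slack needed to absorb it.

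\emph{Set-up.} Let $R^\exSymb(\vec z)$ be the atom of $q$ that dissociation turns into ${R^\exSymb}'(\vec z, v)$, and let $\sigma$ be the operation on databases over a fixed domain that leaves every relation alone except that it replaces $R^\exSymb$ by the product $\bigset{(\vec t,d)}{\vec t\in R^\exSymb,\ d\in\dom(D)}$ and renames it ${R^\exSymb}'$. As shown in the proof of \specificref{Lemma}{fact: dissociation do not decrease complexity}, $\sigma$ preserves the domain, changes only the exogenous relation $R^\exSymb$ (so $E$ and $\sigma(E)$ have the same endogenous relations, hence the same candidate contingency sets), and satisfies $E\models q \Leftrightarrow \sigma(E)\models q'$ for every such $E$ --- indeed ${R^\exSymb}'(\vec z,v)$ imposes on assignments of the existential variables exactly the same constraint as $R^\exSymb(\vec z)$, because every domain value is available in the new $v$-column. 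Given an instance $(D,\tau,k)$ of $\rsp^*(q)$, we output $(D',\tau',k)$ with $D' := \sigma(D)$, and $\tau':=\tau$ if $\tau$ is a tuple-with-wildcards over a relation other than $R^\exSymb$, while $\tau':=(\tau,*)$, i.e.\ $\tau$ with a fresh wildcard appended in the $v$-column, if $\tau$ is over $R^\exSymb$. The map $(D,\tau,k)\mapsto(D',\tau',k)$ is first-order computable.

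\emph{Correctness.} Fix any $\Gamma\subseteq D^\enSymb=(D')^\enSymb$. By the properties of $\sigma$, $(D-\Gamma)\models q \Leftrightarrow (D'-\Gamma)\models q'$. For the counterfactual clause we distinguish two cases. If $\tau$ is over a relation other than $R^\exSymb$, then $\angle{\tau'}=\angle{\tau}$ sits in an untouched relation, so $D'-(\Gamma\cup\angle{\tau'})=\sigma(D-(\Gamma\cup\angle{\tau}))$. If $\tau$ is over $R^\exSymb$, then in $D'$ we have $\angle{\tau'}=\bigset{(\vec a,d)}{\vec a\in\angle{\tau},\ d\in\dom(D)}$, i.e.\ the product of $\angle{\tau}$ with the domain; removing it from ${R^\exSymb}'=R^\exSymb\times\dom(D)$ leaves $(R^\exSymb\setminus\angle{\tau})\times\dom(D)$, and since $\Gamma$ is disjoint from $R^\exSymb$ this again gives $D'-(\Gamma\cup\angle{\tau'})=\sigma(D-(\Gamma\cup\angle{\tau}))$. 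In both cases the properties of $\sigma$ yield $(D-(\Gamma\cup\angle{\tau}))\not\models q \Leftrightarrow (D'-(\Gamma\cup\angle{\tau'}))\not\models q'$. Hence a contingency set $\Gamma$ of size $k$ witnesses $(D,\tau,k)\in\rsp^*(q)$ iff the same $\Gamma$ witnesses $(D',\tau',k)\in\rsp^*(q')$, so $(D,\tau,k)\in\rsp^*(q)\Leftrightarrow(D',\tau',k)\in\rsp^*(q')$ and therefore $\rsp^*(q)\le\rsp^*(q')$.

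\emph{Main obstacle.} The only genuinely new step beyond the resilience argument is the second case of the correctness check: recognizing that the correct replacement of $\tau$ is $(\tau,*)$ and verifying that the deleted set $\angle{(\tau,*)}$ in ${R^\exSymb}'$ is exactly $\angle{\tau}\times\dom(D)$, so that the transformed, partially-deleted instance is still of the form $\sigma(\cdot)$ and the witness correspondence carried by $\sigma$ applies unchanged. Everything else is bookkeeping inherited from \specificref{Lemma}{fact: dissociation do not decrease complexity}.
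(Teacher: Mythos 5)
Your proof is correct and takes exactly the route the paper intends: the paper states this lemma with no separate proof (``from our previous discussion, it now follows''), and that discussion is precisely your construction --- reuse the product reduction of \specificref{Lemma}{fact: dissociation do not decrease complexity} unchanged, and when $\tau$ lies in the dissociated atom replace it by $(\tau,*)$, so that $\angle{\tau'}=\angle{\tau}\times\dom(D)$ and the witness/contingency-set correspondence carries over. Your two-case verification simply makes explicit the check the paper leaves to the reader.
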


Furthermore, linear queries are still easy for responsibility:

\begin{lemma}\label{linear easy rspStar}
 For any linear sj-free CQ $q$, $\rsp^*(q)$ is in \ptime.
\end{lemma}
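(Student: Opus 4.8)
**Proof proposal for Lemma \ref{linear easy rspStar}: For any linear sj-free CQ $q$, $\rsp^*(q)$ is in \ptime.**

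The plan is to extend the network-flow construction from the proof of Fact~\ref{linear query is easy} to handle a tuple with wildcards $\tau$ in place of a single tuple $\vec d$. Recall that for a linear query $q \datarule A_1(\vec{z_1}),\ldots,A_r(\vec{z_r})$ arranged in its linear order, we built the network $N(q,D)$ whose min cut equals the minimum contingency set, and that for ordinary responsibility of $\vec d$ we iterated over all witnesses $\vec w$ extending $\vec d$, forming $N_{\vec w}(q,D)$ by zeroing the weight of the edge for $\vec d$ and setting the weights of the other edges of $\vec w$ to $\infty$. For $\rsp^*(q)$ with target $\tau \in \dom(D)\cup\set{*}$ in relation $R = A_i$, I first observe that the semantics we need is: find a minimum $\Gamma$ with $D-\Gamma \models q$ but $D-(\Gamma \cup \angle{\tau})\not\models q$. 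So $\Gamma$ must leave at least one witness $\vec w$ that survives in $D-\Gamma$, and \emph{every} surviving witness must route through some tuple of $\angle{\tau}$ on its $A_i$-edge, so that deleting all of $\angle{\tau}$ destroys it.

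The key step is to reduce this, just as before, to a polynomial family of min-cut computations indexed by a choice of a single ``protected'' witness $\vec w$ extending some tuple $\vec a \in \angle{\tau}$. Given such $\vec w$, form $N_{\vec w}^{*}(q,D)$ as follows: set the weight of every edge of $\vec w$ other than its $A_i$-edge to $\infty$ (these must survive), set the weight of the $A_i$-edge of $\vec w$ to $0$ (it gets deleted for free as part of $\angle{\tau}$), and additionally set to $0$ the weight of \emph{every} $A_i$-edge whose underlying tuple lies in $\angle{\tau}$ — because deleting all of $\angle{\tau}$ is free and is forced. The min cut of $N_{\vec w}^{*}(q,D)$ then equals the minimum size of a $\Gamma$ disjoint from $\vec w\setminus\angle{\tau}$ such that after removing $\Gamma$ and $\angle{\tau}$ the query is false while $\vec w$ alone witnesses $D-\Gamma \models q$. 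I would then argue, exactly as in Fact~\ref{linear query is easy}, that $\rsp^*(q)$ is the minimum of these cut values over all $\vec w$ extending a tuple of $\angle{\tau}$; there are at most $O(n^r)$ such $\vec w$ for fixed $q$, so the whole computation is polynomial. The correctness of the reduction rests on the same observation used before: in a linear query, a contingency set that is disjoint from a fixed witness $\vec w$ is precisely an $(s,t)$-cut of $N(q,D)$ that leaves the $\vec w$-path intact, because each variable spans a contiguous block of atoms so the network is genuinely layered and a surviving path corresponds to a surviving witness.

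The main obstacle I anticipate is verifying that zeroing out the weights of \emph{all} edges in $\angle{\tau}$ (not merely the $A_i$-edge of the chosen $\vec w$) correctly captures the ``$\not\models$ after removing $\Gamma\cup\angle{\tau}$'' condition without accidentally making it too easy — i.e., one must check that a min cut of $N_{\vec w}^{*}$ together with $\angle{\tau}$ really does kill every witness, including witnesses that do not use $\vec w$'s $A_i$-edge but use a different edge of $\angle{\tau}$; those are automatically killed since their $A_i$-edge is in $\angle{\tau}$, and all remaining witnesses use an $A_i$-edge outside $\angle{\tau}$, hence are cut by $\Gamma$ (since $\Gamma$, being a cut of $N_{\vec w}^{*}$, separates $s$ from $t$ along every path whose zero-weight edges are only those in $\angle{\tau} \cup (\vec w\cap A_i)$). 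Conversely one must check no valid $\Gamma$ is missed: if $\Gamma$ works, pick any witness $\vec w$ surviving $D-\Gamma$; it must route through $\angle{\tau}$ at $A_i$, and $\Gamma$ restricted to the non-$\angle{\tau}$ edges is a cut of $N_{\vec w}^{*}$ of size $\le |\Gamma|$. Assembling these two directions carefully is the crux; the flow machinery and the $O(n^r)$ enumeration are then routine given Fact~\ref{linear query is easy}.
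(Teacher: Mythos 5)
Your construction is exactly the paper's: for each witness $\vec w$ extending a tuple of $\angle{\tau}$, the paper forms the network $N_{\vec w,\tau}(q,D)$ with weight $\infty$ on every edge of $\vec w - \angle{\tau}$ and weight $0$ on \emph{every} edge in $\angle{\tau}$, and minimizes the min cut over all such $\vec w$, which is precisely your $N^{*}_{\vec w}(q,D)$. The correctness concerns you raise (all surviving witnesses being killed by deleting $\angle{\tau}$, and no valid $\Gamma$ being missed) are resolved the way you sketch, so the proposal is correct and essentially identical in approach to the paper's proof.
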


\begin{proof}
The proof is a small modification of the proof for \specificref{Fact}{linear query is easy}.  
As before, we use network flow to compute the min cut over all $w$ extending any 
element of $\angle{\tau}$ of the network, $N_{\vec w,\tau}(q,D)$.
This new network has weight $\infty$ for every edge in $\vec w - \angle{\tau}$ and 0 for every edge
in $\angle{\tau}$.  See \autoref{fig:wildFlow}.
\end{proof}

\begin{figure}
\begin{center}
\begin{tikzpicture}[ scale=.25,
	every circle node/.style={fill=white, minimum size = 4mm, draw},
	]
\node (s)  at 	(0,0) [circle] {$s$};
\node (a1) at 	(8,4) [circle] {$a_1$};
\node (a2) at 	(8,0) [circle] {$a_2$};
\node (a3) at 	(8,-4) [circle] {$a_3$};
\node (b1) at 	(16,4) [circle] {$b_1$};
\node (b2) at 	(16,0) [circle] {$b_2$};
\node (b3) at 	(16,-4) [circle] {$b_3$};
\node (t)  at 	(24,0) [circle] {$t$};
\foreach \from/\to/\l in {s/a1/$\infty$,s/a2/1,s/a3/1,a1/b1/0,a1/b2/0,a2/b2/1,a3/b3/1,b1/t/1,b2/t/$\infty$,b3/t/1}
\draw[->,line width=2pt,color=black] (\from) -- (\to) node[pos=.5,above]{\l};
\draw[->,line width=2pt,color=black] (b1) to [out=0,in=90] (t);
\node at (20,5) {1};
\end{tikzpicture}
\end{center}
\caption{$N_{\vec w,\tau}(q,D)$; $\;\vec w = A(a_1),R(a_1,b_2),S(b_2,c_2)$; $\; \tau = R(a_1,*)$.
This is an example of Network Flow in the proof of \specificref{Lemma}{linear easy rspStar} for query  $q
  \datarule A(x),R(x,y),S(y,z)$ and database $D=(A,R,S)$, where $A=\set{a_1,a_2,a_3}$,
$R=\set{(a_1,b_1),(a_1,b_2),(a_2,b_2),(a_3,b_3)}$,
  $S=\set{(b_1,c_1),(b_1,c_2),(b_2,c_2),(b_3,c_3)}$.  
\label{fig:wildFlow}}
\end{figure}

\begin{corollary}\label{easy rsp*}
If $q$ has no triad, then $\rsp^*(q)$ can be made linear by using dissociations, and is thus in \ptime. Therefore so is $\rsp(q)$.
\end{corollary}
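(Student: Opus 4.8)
The plan is to assemble Corollary~\ref{easy rsp*} directly from the three lemmas that precede it, exactly mirroring the argument used to establish $\res(q)\in\PTIME$ for triad-free queries. First I would invoke Corollary~\ref{ptime cor}: since $q$ has no triad, there is a finite sequence of dissociations transforming $q$ into a linear query $q'$. (Recall that the construction in the proof of Lemma~\ref{easy part dichotomy} adds variables only to exogenous atoms, so each step really is a dissociation in the sense of Definition~\ref{def: dissociation}, and the output is linear in the sense of Definition~\ref{def: linearHypergraph}.)

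Next I would chain the complexity comparisons. By Lemma~\ref{lem: dissociation rspStar}, each individual dissociation step can only increase the complexity of $\rsp^*$, i.e.\ $\rsp^*(q)\leq\rsp^*(q')$ after composing the reductions along the sequence (first-order reductions compose, so the composite is still a valid reduction). By Lemma~\ref{linear easy rspStar}, $\rsp^*(q')\in\PTIME$ because $q'$ is linear. Combining, $\rsp^*(q)\in\PTIME$. Finally, since $\rsp^*$ generalizes $\rsp$ (the wildcard-free tuples are a special case), we have $\rsp(q)\leq\rsp^*(q)$, so $\rsp(q)\in\PTIME$ as well; this last step is essentially the observation recorded just before Corollary~\ref{hard rsp*}.

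There is no serious obstacle here: this corollary is purely a bookkeeping combination of Corollary~\ref{ptime cor}, Lemma~\ref{lem: dissociation rspStar}, and Lemma~\ref{linear easy rspStar}. The only point that deserves a sentence of care is that the reductions from the individual dissociation steps must be composed into a single reduction $\rsp^*(q)\leq\rsp^*(q')$, and that composition of first-order (hence polynomial-time) reductions is itself such a reduction — which is standard. If I wanted to be maximally explicit I would also note that the number of dissociation steps is bounded by a function of the fixed query $q$ (it is at most the number of (atom, variable) pairs), so the composite reduction is uniform and the $\PTIME$ bound is genuine data complexity; but for the corollary statement this is already implicit in Lemma~\ref{linear easy rspStar}, whose proof (a small modification of Fact~\ref{linear query is easy}) carries the same "fixed $q$" caveat about the exponent.
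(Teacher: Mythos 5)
Your proposal is correct and follows exactly the route the paper intends for this corollary: Corollary~\ref{ptime cor} to obtain a linear $q'$ via dissociations, Lemma~\ref{lem: dissociation rspStar} chained over the steps to get $\rsp^*(q)\leq\rsp^*(q')$, Lemma~\ref{linear easy rspStar} for $\rsp^*(q')\in\PTIME$, and finally $\rsp(q)\leq\rsp^*(q)$. The extra remarks about composing first-order reductions and the bounded number of dissociation steps are accurate but not needed beyond what the paper already leaves implicit.
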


We have thus proved our desired dichotomy for responsibility, and as a bonus, we have proved it for
responsibility with wildcards as well:

\begin{theorem}[Responsibility Dichotomy]\label{main rsp dichotomy thm}
Let $q$ be an sj-free CQ, and let $q'$ be the result of making all fully dominated atoms exogenous.
If $\mathcal{H}(q')$ contains a triad then $\rsp(q)$ and $\rsp^*(q)$ are \NP-complete.  Otherwise,
$\rsp(q)$ and $\rsp^*(q)$ are $\PTIME$.
\end{theorem}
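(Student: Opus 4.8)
The plan is to assemble the dichotomy by combining the two halves already developed in this section, exactly mirroring the structure of the proof of the resilience dichotomy (\autoref{resilience dichotomy thm}). Let $q$ be an sj-free CQ and let $q'$ be the result of making every fully dominated atom exogenous; by \autoref{full domination vs exogenous} (applied repeatedly, in the order dictated by the footnote so that the tuple's own relation $S$ is made exogenous last if at all) we have $\rsp(q) \equiv \rsp(q')$, and the same is true for $\rsp^*$ since all the reductions in that lemma carry the wildcard tuple $\tau$ through unchanged. So it suffices to prove the dichotomy for $q'$, a query all of whose fully dominated atoms are already exogenous.

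For the hard direction, suppose $\mathcal{H}(q')$ contains a triad. Then \autoref{triads are hard for rsp} gives that $\rsp(q')$ is \NP-complete, and \autoref{hard rsp*} immediately upgrades this to $\rsp^*(q')$ being \NP-complete (since $\rsp(q') \leq \rsp^*(q')$ trivially, as $\rsp^*$ generalizes $\rsp$). Hence both $\rsp(q)$ and $\rsp^*(q)$ are \NP-complete. For the easy direction, suppose $\mathcal{H}(q')$ contains no triad. By \autoref{ptime cor} (i.e., the linearization argument extracted from the proof of \autoref{easy part dichotomy}) we can transform $q'$ via a series of dissociations into a linear query $q''$. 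By \autoref{lem: dissociation rspStar}, dissociation does not decrease the complexity of $\rsp^*$, so $\rsp^*(q') \leq \rsp^*(q'')$; and by \autoref{linear easy rspStar}, $\rsp^*(q'') \in \PTIME$. Therefore $\rsp^*(q') \in \PTIME$, and since $\rsp(q') \leq \rsp^*(q')$ we also get $\rsp(q') \in \PTIME$. Combining with the equivalences $\rsp(q) \equiv \rsp(q')$ and $\rsp^*(q)\equiv\rsp^*(q')$ finishes this case. This is formally \autoref{easy rsp*}.

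Since every sj-free CQ falls into exactly one of these two cases, the dichotomy follows, and it holds simultaneously for ordinary responsibility and for responsibility with wildcards. The only subtlety — and the one place worth spelling out rather than citing — is checking that the chain of full-domination reductions in the first paragraph behaves well for $\rsp^*$, i.e., that making a fully dominated atom $F \ne S$ exogenous never touches the relation of the wildcard tuple and leaves $\tau$ meaningful; this is exactly what the careful bookkeeping in the proof of \autoref{full domination vs exogenous} (Cases 1 and 2, and the footnote about ordering) was set up to guarantee, so no new work is needed. The main conceptual obstacle — that dissociation can increase $\rsp^*$-complexity in the exogenous-tuple case, forcing the generalization to wildcards — has already been absorbed into \autoref{lem: dissociation rspStar} and the $\rsp^*$ machinery, so here we are merely collecting the pieces.
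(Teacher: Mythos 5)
Your proposal is correct and follows essentially the same route as the paper, which likewise obtains the theorem by assembling \autoref{full domination vs exogenous}, \autoref{triads are hard for rsp}, \autoref{hard rsp*}, \autoref{ptime cor}, \autoref{lem: dissociation rspStar}, \autoref{linear easy rspStar}, and \autoref{easy rsp*}. If anything, you are slightly more explicit than the paper in flagging that the full-domination reductions must be checked to carry the wildcard tuple $\tau$ through for $\rsp^*$, a point the paper leaves implicit.
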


It follows from \specificref{Lemma}{easy rsp*} and \specificref{Cor.}{hard rsp*} that $\rsp^*(q) \equiv \rsp(q)$ for all
sj-free CQ, $q$.  Note that it is not at all clear how one would build a reduction from $\rsp^*(q)$
to $\rsp(q)$.  However, our characterization of the complexity of $\rsp(q)$ and $\rsp^*(q)$ gives us
this result: 
After all fully dominated atoms are made exogenous, if there is a triad, then
$\rsp(q)$ is \NP-complete, thus so is $\rsp^*(q)$.  If there is no triad, then $\rsp^*(q) \in\PTIME$,
thus so is $\rsp(q)$:

\begin{corollary}\label{rsp* result}
For all sj-free CQ $q$, we have $\rsp(q) \equiv \rsp^*(q)$.
\end{corollary}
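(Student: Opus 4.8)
The plan is to obtain this equivalence not by constructing a direct reduction between the two problems — it is genuinely unclear how to reduce $\rsp^*(q)$ to $\rsp(q)$ on a per-instance basis, since a tuple with wildcards stands for many tuples whose responsibilities need not coincide with that of any single one — but by routing through the complexity classification in \specificref{Theorem}{main rsp dichotomy thm}. Two preliminary observations make this work. First, $\rsp(q) \leq \rsp^*(q)$ for free: $\rsp(q)$ is exactly the restriction of $\rsp^*(q)$ to tuples $\tau$ with no wildcards, so the map sending a plain tuple $\vec t$ to itself (viewed as a wildcard-free $\tau$) is a first-order reduction. Second, $\rsp^*(q) \in \NP$: a contingency set $\Gamma$ is a polynomial-size certificate, and verifying it requires only computing $\angle{\tau}$ from $D$ and $\tau$ (polynomial) and then evaluating $q$ on $D-\Gamma$ and on $D-(\Gamma\cup\angle{\tau})$ (polynomial, since $q$ is fixed).

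With these in hand I would split on the triad test of the normalized query. Let $q'$ be the query obtained from $q$ by making all fully dominated atoms exogenous. If $\mathcal{H}(q')$ contains a triad, then by \specificref{Theorem}{main rsp dichotomy thm} both $\rsp(q)$ and $\rsp^*(q)$ are \NP-complete; in particular $\rsp(q)$ is \NP-hard, and since $\rsp^*(q)\in\NP$ this yields a reduction $\rsp^*(q)\leq\rsp(q)$, which combined with the trivial direction gives $\rsp(q)\equiv\rsp^*(q)$. If $\mathcal{H}(q')$ contains no triad, then \specificref{Theorem}{main rsp dichotomy thm} (equivalently \specificref{Cor.}{easy rsp*}) places $\rsp^*(q)$ in \PTIME, and hence $\rsp(q)\in\PTIME$ as well (directly, or via $\rsp(q)\leq\rsp^*(q)$); so both problems lie in \PTIME and are equivalent. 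Either way $\rsp(q)\equiv\rsp^*(q)$, proving the corollary.

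The content of the argument is therefore entirely inherited from \specificref{Theorem}{main rsp dichotomy thm}, and the ``main obstacle'' was dispatched earlier rather than here: namely, checking that each structural manipulation used to prove that dichotomy behaves the same for $\rsp^*$ as for $\rsp$ — making fully dominated atoms exogenous (the $\rsp^*$-analogue of \specificref{Lemma}{full domination vs exogenous}, whose Case~2 bookkeeping of the set $W$ goes through verbatim with $\angle{\tau}$ in place of $\vec t$), performing dissociations (\specificref{Lemma}{lem: dissociation rspStar}), and solving linear queries by network flow (\specificref{Lemma}{linear easy rspStar}). Given all of that, the corollary is a two-line case analysis and I would present it as such.
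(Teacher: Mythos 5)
Your proposal is correct and follows essentially the same route as the paper: the paper likewise observes that no direct reduction from $\rsp^*(q)$ to $\rsp(q)$ is apparent and instead derives the equivalence from the dichotomy (\specificref{Theorem}{main rsp dichotomy thm}, via \specificref{Cor.}{hard rsp*} and \specificref{Cor.}{easy rsp*}), splitting on whether the normalized query has a triad. Your added remarks about the trivial direction $\rsp(q)\leq\rsp^*(q)$ and membership of $\rsp^*(q)$ in \NP only make explicit what the paper leaves implicit.
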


\subsection{Dichotomy for responsibility with FDs}\label{rspFD sec}

Our final theorem is that the dichotomy for responsibility continues to hold in the presence of FDs:

\begin{theorem}[FD Responsibility Dichotomy]\label{last thm}
Let $(q;\Phi)$ be an sf-free CQ with functional dependencies.  Let $(q^*,\Phi)$ be its
closure under induced rewrites, and such that all fully dominated atoms of $q^*$ are exogenous. 
If $q^*$ has a triad then $\rsp(q;\Phi)$ is \NP-complete.  Otherwise,
$\rsp(q;\Phi)\in\PTIME$.
\end{theorem}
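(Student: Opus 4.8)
The plan is to follow the structure of the resilience FD dichotomy (\specificref{Theorem}{fd dichotomy thm}), replacing each resilience ingredient by its responsibility analog, and then to reduce everything to classifying $\rsp(q^*;\Phi)$. First I would record the responsibility version of \specificref{Lemma}{lm: induced rewrite preserves complexity}: if $(q;\Phi)\rewrite(q';\Phi)$ then $\rsp(q';\Phi)\equiv\rsp(q;\Phi)$. The proof is the verbatim argument of that lemma --- the induced rewrite appends to each tuple of the rewritten atom the unique attribute value forced by the triggering FD, which changes neither the witness set nor the size of any minimum contingency set. Crucially, since this appended value is \emph{determined}, the wildcard subtlety of \specificref{Section}{rsp star} does not arise, so ordinary $\rsp$ (not $\rsp^*$) suffices. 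Iterating over a maximal rewrite sequence (as in \specificref{Corollary}{coro: closed query preserves complexity}) gives $\rsp(q;\Phi)\equiv\rsp(\hat q;\Phi)$, where $\hat q$ is the closure of $q$ under induced rewrites. Next, the local replacement argument of \specificref{Lemma}{full domination vs exogenous} goes through unchanged on databases constrained by $\Phi$ (swapping a tuple of a fully dominated atom for its projection onto a covering atom, or deleting tuples of that atom, never violates an FD), so $\rsp(\hat q;\Phi)\equiv\rsp(q^*;\Phi)$, where $q^*$ is $\hat q$ with all fully dominated atoms made exogenous --- exactly the query named in the statement. Thus it remains to show that $\rsp(q^*;\Phi)$ is \NP-complete when $q^*$ has a triad, and in $\PTIME$ otherwise.

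For the easy case, if $q^*$ has no triad then \specificref{Corollary}{easy rsp*} (whose proof never mentions FDs) already gives $\rsp(q^*)\in\PTIME$; since restricting to databases satisfying $\Phi$ can only shrink the problem, the identity map witnesses $\rsp(q^*;\Phi)\le\rsp(q^*)$ (the responsibility analog of \specificref{Proposition}{fd dont add complexity}), so $\rsp(q;\Phi)\equiv\rsp(q^*;\Phi)\in\PTIME$. For the hard case, suppose $q^*$ has a triad; since all fully dominated atoms of $q^*$ are exogenous, \specificref{Lemma}{triads are hard for rsp} supplies a reduction $f$ to $\rsp(q^*)$ from one of $\rsp(q_\triangle)$, $\rsp(\rats)$, or $\rsp(q_\Tri)$ (according to which of its three cases applies). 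Because $\rsp(q^*;\Phi)\in\NP$ (a contingency set of bounded size is a polynomial-time-checkable certificate, and $D\models\Phi$ is checkable in $\PTIME$), it is enough to verify that $f$ already outputs only databases satisfying $\Phi$; then $f$ is simultaneously a reduction into $\rsp(q^*;\Phi)$. This is the verification carried out for resilience in the proof of \specificref{Lemma}{hard part dichotomy with fd}, and it transfers: in each case the output database $D'$ is built from a partition of $\var(q^*)$ into boundedly many blocks (the seven blocks of \autoref{variable partition eq} in the triad-embedding cases, with an analogous partition when an embedded tripod is used), and every variable in a block is assigned a fixed value among $\angle{ab},\angle{bc},\angle{ca},\angle{abc},a,b,c$ depending only on its block. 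For any FD $\vec u\to v$ in $\Phi$, closure of $(q^*;\Phi)$ forces $v$ to occur in an atom together with all of $\vec u$, and the same short case analysis as in \specificref{Lemma}{hard part dichotomy with fd} (split on whether the variables of $\vec u$ lie in a single $V_i$, in $V_3$, in two of $V_0,V_1,V_2$, or entirely within one $\var(S_i)$) shows the block value of $v$ is determined by the block values of $\vec u$; hence $D'\models\Phi$, and $\rsp(q;\Phi)\equiv\rsp(q^*;\Phi)$ is \NP-complete.

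The step I expect to be the main obstacle is the hard case in the subcases where \specificref{Lemma}{triads are hard for rsp} reduces from $\rsp(\rats)$ or from $\rsp(q_\Tri)$ rather than from $\rsp(q_\triangle)$: there $f$ is the composition of the $3\sat$ reduction of \specificref{Proposition}{responsibility of rats is hard} (resp.\ \specificref{Proposition}{prop:tripodQuery}) with a triad/tripod embedding, so FDs must be tracked through both stages. The key point, exactly as in the proof of \specificref{Lemma}{hard part dichotomy with fd} where one shows a closed $(q_\triangle;\Phi_0)$ carries no nontrivial FD, is that a closed query cannot possess any nontrivial FD linking the atoms that form the embedded hard structure --- such an FD would have fired an induced rewrite and ``disarmed'' the triad/tripod --- so the gadget databases produced by the inner $3\sat$ reduction are not further constrained by $\Phi$, and the outer embedding preserves satisfaction of $\Phi$ by the block argument above. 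I expect this bookkeeping, rather than any genuinely new idea, to be the bulk of the work; the remainder is a direct transcription of the corresponding resilience arguments, combined with the FD-free responsibility dichotomy (\specificref{Theorem}{main rsp dichotomy thm}).
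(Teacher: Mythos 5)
Your proposal is correct and follows essentially the same route as the paper: the easy case via the FD-free responsibility dichotomy plus the fact that FDs only restrict the problem, the hard case by re-running the block-value verification of the resilience lemma for closed queries with triads on the reduction from \specificref{Lemma}{triads are hard for rsp}, and the bridge between $q$ and $q^*$ via the responsibility versions of induced-rewrite preservation and full domination. The one simplification the paper makes that you could adopt is noting that the same database construction is used in all three cases of \specificref{Lemma}{triads are hard for rsp}, so a single verification that $D'\models\Phi$ suffices and there is no need to track FDs through the inner $3\sat$ gadgets, since those are instances of the FD-free base problems $\rsp(q_\triangle)$, $\rsp(\rats)$, $\rsp(q_\Tri)$ whose hardness is already established.
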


\begin{proof}
Since FDs only make $\rsp(q)$ easier, we know that if $q^*$ has no triad then $\rsp(q^*)$ is easy,
thus so is $\rsp(q^*;\Phi)$ and thus also $\rsp(q;\Phi)$. For the converse, we show
that the reduction, $f$, from 
one of $\rsp(q_\triangle),\rsp(\rats),\rsp(q_\Tri)$ to $\rsp(q)$ which we built in 
\specificref{Lemma}{triads are hard for rsp} always produces databases, $D'$, that satisfy $\Phi$.  
The proof is almost exactly as in \specificref{Lemma}{hard part dichotomy with fd}.  Note that in the proof of
\specificref{Lemma}{triads are hard for rsp},
we use the same
reduction in all three cases, i.e., no matter if we are reducing from
$\rsp(q_\triangle),\rsp(\rats)$, or $\rsp(q_\Tri)$.
\end{proof}

\subsection{Using resilience to compute responsibility more efficiently} \label{sec: res applied to rsp}

We now show that in applications where we wish to find those tuples of highest responsibility, we can
find them more efficiently by computing resilience instead of responsibility.

Responsibility provides a measure of the causal contribution of an input tuple
to a query output. In prior work~\cite{MeliouGNS2011,MeliouGMS11}, in order to identify likely causes, we ranked
input tuples based on their responsibilities: tuples
at the top of the ranking are the most likely causes, whereas tuples low in
the ranking are less likely. Producing this ranking entails computing the
responsibility of every tuple in the database that is a cause for the query. This is computationally expensive,
and, ultimately, unnecessary: Since most applications only care about the
top-ranked causes, we only need to find the set $S_\rho$ 
consisting of the tuples of highest responsibility. 
Computing the responsibility of other tuples is unnecessary.
Using this insight, we can employ resilience to compute 
$S_\rho$ more efficiently than by calculating
the responsibility of every tuple in the database.

Even though resilience is strictly easier to compute than responsibility, we 
can compute  $S_\rho$, the set of tuples of highest responsibility,
by repeatedly computing resilience. The first
observation is that any minimum contingency set for resilience is contained in $S_\rho$.

\begin{proposition}\label{prop: rsp from res}
  As above, let  $S_\rho$ be the set of tuples of highest responsibility for a database $D$
  satisfying a binary query $q$.  Let $\Gamma$ be a minimum contingency set for $(q,D)$.
Then all members of $\Gamma$ have maximum responsibility for $D\models q$, i.e., $\Gamma\subseteq S_\rho$.
\end{proposition}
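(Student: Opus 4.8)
The plan is to show that every tuple $t$ in a minimum contingency set $\Gamma$ for resilience achieves the maximum responsibility value, i.e., its minimum contingency set (as a responsibility problem) has size at most $|\Gamma|-1$, and that no tuple can do better. First I would fix a minimum contingency set $\Gamma$ for $(q,D)$, so $D-\Gamma\not\models q$ and $|\Gamma|=r$ is minimal. Take any $t\in\Gamma$. I claim that $\Gamma-\set{t}$ is a valid contingency set witnessing that $t$ has responsibility corresponding to a contingency set of size $r-1$: indeed $D-(\Gamma-\set{t})\models q$ must hold, because otherwise $\Gamma-\set{t}$ would be a smaller contingency set for resilience, contradicting minimality of $\Gamma$; and $D-((\Gamma-\set{t})\cup\set{t}) = D-\Gamma\not\models q$. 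Hence $(D,t,r-1)\in\rsp(q)$, so the responsibility of $t$ is at least $\frac{1}{1+(r-1)}=\frac{1}{r}$.

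Next I would argue that no tuple in $D$ can have responsibility strictly greater than $\frac1r$, which pins down $S_\rho$ and shows $\Gamma\subseteq S_\rho$. Suppose for contradiction that some tuple $t'$ has a responsibility contingency set $\Gamma'$ with $|\Gamma'| = k' < r-1$, meaning $D-\Gamma'\models q$ but $D-(\Gamma'\cup\set{t'})\not\models q$. Then $\Gamma'\cup\set{t'}$ is a contingency set for resilience of size $k'+1 < r$, contradicting the minimality of $\Gamma$. Therefore every tuple's minimum responsibility contingency set has size at least $r-1$, so the maximum possible responsibility value is exactly $\frac1r$, attained precisely by the tuples admitting a responsibility contingency set of size $r-1$. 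Combining with the previous paragraph, every $t\in\Gamma$ attains this maximum, so $\Gamma\subseteq S_\rho$.

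The only subtlety — and the step I would be most careful about — is the interaction with exogenous tuples and with the definition of responsibility: the responsibility problem is defined over $D^\enSymb$, and a minimum contingency set $\Gamma$ for resilience also consists of endogenous tuples, so there is no mismatch there; but I should check that "highest responsibility" is well-defined as the single maximum value $\frac1r$ and that tuples not in any witness (non-causes) are simply excluded, consistent with the convention that $S_\rho$ ranges over causes. I also want to note explicitly that the hypothesis "$q$ binary" is not actually needed for this argument — it is carried along only because the surrounding application (computing $S_\rho$ via repeated resilience) is phrased for binary queries — so the proof goes through for any sj-free CQ. No deep machinery is invoked; the whole argument is a two-line manipulation of the definitions of $\res$ and $\rsp$ together with minimality of $\Gamma$.
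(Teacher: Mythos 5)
Your proof is correct and follows essentially the same route as the paper's: $\Gamma-\set{\vec t}$ serves as a responsibility contingency set of size $|\Gamma|-1$ for each $\vec t\in\Gamma$, and any tuple with a strictly smaller responsibility contingency set $\Gamma'$ would yield $\Gamma'\cup\set{\vec t\,'}$ as a resilience contingency set smaller than $\Gamma$, a contradiction. Your extra remarks (explicitly justifying $D-(\Gamma-\set{\vec t})\models q$ via minimality, and noting that the ``binary query'' hypothesis is not actually used) are sound but do not change the argument.
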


\begin{proof}
Let $q,D,S_\rho,\Gamma$ be as in the statement of the proposition.  Let $k= \abs{\Gamma}$.
Let $\vec t$ be any element of $\Gamma$.  Note that  $\Gamma - \set{\vec t}$ is a contingency set of
size $k-1$ for the responsibility of $(q,D,\vec t)$.  Suppose for the sake of contradiction that
some tuple $\vec {t'}$ had strictly greater responsibility than $\vec t$.  Then there must be a
contingency set $\Gamma'$ for the responsibility of $(q,D,\vec {t'})$ such that $\abs{\Gamma'} <
k-1$.  However, this means that $\Gamma'\cup \set{\vec {t'}}$ is a contingency set for the resilience
of $(q,D)$ of size less than $k$, contradicting the fact that $\Gamma$ is a minimum
contingency set.
\end{proof}

Therefore, all tuples in a minimum contingency set for resilience have maximum responsibility.
However, there may be additional tuples with maximum
responsibility that are not part of the selected resilience set $\Gamma$.
These can also be derived by a simple algorithm based on the 
following observation.

\begin{observation}
  Let $q,D,S_\rho,\Gamma,k$ be as in the proof of Prop. \ref{prop: rsp from res} and let $\vec{t'}$
  be any tuple in $D$.  Let $\Gamma'$ be a minimum contingency set for the resilience of
  $(q,D-\set{\vec{t'}})$.
Then $\vec{t'}\in S_\rho$ iff $\abs{\Gamma'}=k-1$.  Furthermore, if $\abs{\Gamma'}=k-1$ then 
$\Gamma' \subseteq S_\rho$.
\end{observation}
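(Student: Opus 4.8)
The plan is to translate the membership test ``$\vec{t'}\in S_\rho$'' into a statement about the size of a minimum responsibility contingency set for $\vec{t'}$, and then to relate that size to the resilience of $D-\set{\vec{t'}}$. Throughout write $k=\abs{\Gamma}$ (so $k\ge 1$ since $D\models q$), and assume, as elsewhere in this section, that the tuples under consideration are endogenous.

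\textbf{Step 1: pin down $S_\rho$.} Using \specificref{Prop.}{prop: rsp from res} I would first argue that the maximum responsibility is exactly $1/k$: for $\vec t\in\Gamma$ the set $\Gamma-\set{\vec t}$ is a size-$(k-1)$ responsibility contingency set for $\vec t$ (it leaves $\vec t$ counterfactual), so every member of $\Gamma$ has a responsibility contingency set of size $k-1$; and no tuple $\vec{t'}$ can have a responsibility contingency set $\Delta$ of size $<k-1$, since then $\Delta\cup\set{\vec{t'}}$ would be a resilience contingency set for $D$ of size $<k$, contradicting minimality of $\Gamma$. Hence $S_\rho$ equals exactly the set of tuples whose minimum responsibility contingency set has size $k-1$.

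\textbf{Step 2: the equivalence.} The key observation is that a set $\Delta$ not containing $\vec{t'}$ is a responsibility contingency set for $\vec{t'}$ iff $D-\Delta\models q$ \emph{and} $(D-\set{\vec{t'}})-\Delta\not\models q$, whereas $\Gamma'$ being a resilience contingency set for $(q,D-\set{\vec{t'}})$ only requires $(D-\set{\vec{t'}})-\Gamma'\not\models q$. So every responsibility contingency set for $\vec{t'}$ is a resilience contingency set for $D-\set{\vec{t'}}$, giving $\abs{\Gamma'}\le$ (min.\ responsibility c.s.\ of $\vec{t'}$). The converse direction needs a size count: $\Gamma'$ kills all witnesses of $D-\set{\vec{t'}}$, so if moreover $D-\Gamma'\not\models q$ then $\Gamma'$ is itself a resilience contingency set of $D$ and hence $\abs{\Gamma'}\ge k$; therefore whenever $\abs{\Gamma'}\le k-1$ we must have $D-\Gamma'\models q$, i.e.\ $\Gamma'$ is already a valid responsibility contingency set for $\vec{t'}$. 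Combining with Step 1: if $\abs{\Gamma'}=k-1$ then the minimum responsibility c.s.\ of $\vec{t'}$ is $\le k-1$, hence equal to $k-1$, so $\vec{t'}\in S_\rho$; and if $\vec{t'}\in S_\rho$ then a size-$(k-1)$ responsibility c.s.\ shows $\abs{\Gamma'}\le k-1$, while $\abs{\Gamma'}\le k-2$ would make $\Gamma'$ either a too-small responsibility c.s.\ for $\vec{t'}$ or (if $D-\Gamma'\not\models q$) a too-small resilience c.s.\ for $D$ --- both impossible --- so $\abs{\Gamma'}=k-1$.

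\textbf{Step 3: the ``furthermore''.} Assume $\abs{\Gamma'}=k-1$; by Step 2, $D-\Gamma'\models q$ and $\Gamma'$ is a \emph{minimum}, hence minimal, resilience contingency set of $D-\set{\vec{t'}}$. For each $\vec u\in\Gamma'$ I would exhibit the swapped set $\Gamma''=(\Gamma'\setminus\set{\vec u})\cup\set{\vec{t'}}$, of size $k-1$: minimality of $\Gamma'$ gives $D-\Gamma''=(D-\set{\vec{t'}})-(\Gamma'\setminus\set{\vec u})\models q$, and $D-(\Gamma''\cup\set{\vec u})=(D-\set{\vec{t'}})-\Gamma'\not\models q$, so $\Gamma''$ is a size-$(k-1)$ responsibility contingency set for $\vec u$ and thus $\vec u\in S_\rho$ by Step 1; hence $\Gamma'\subseteq S_\rho$. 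The step I expect to be the main obstacle --- really the only subtle point --- is that a minimum resilience contingency set $\Gamma'$ of $D-\set{\vec{t'}}$ does not obviously leave any surviving witness through $\vec{t'}$, so $D-\Gamma'\models q$ is not automatic; the whole argument turns on the bookkeeping in Step 2 that forces this exactly when $\abs{\Gamma'}\le k-1$, together with the swap trick of Step 3. I would also briefly dispose of the degenerate case $D-\set{\vec{t'}}\not\models q$ (here $\vec{t'}$ is counterfactual): taking $\abs{\Gamma'}=0$ by convention, the statement reduces to ``$\vec{t'}\in S_\rho\iff k=1$'', which holds because in that case every member of $\Gamma$ is itself counterfactual.
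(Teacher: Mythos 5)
The paper states this Observation without proof (it is left as an immediate consequence of the reasoning around Prop.~\ref{prop: rsp from res}), so there is no official argument to compare against; your proof is correct and is exactly the intended one: Step~1 is the counting argument already contained in the proof of Prop.~\ref{prop: rsp from res}, and Steps~2--3 supply the two genuinely needed points, namely that $D-\Gamma'\models q$ is not automatic but is forced whenever $\abs{\Gamma'}\le k-1$ (else $\Gamma'$ would beat the minimum resilience set $\Gamma$), and the swap $\Gamma''=(\Gamma'\setminus\set{\vec u})\cup\set{\vec{t'}}$ together with minimality of $\Gamma'$ to certify $\Gamma'\subseteq S_\rho$. Your handling of the degenerate counterfactual case and the (reasonable) restriction to endogenous tuples are also consistent with the paper's setting, where Algorithm~\ref{alg: rsp from res} only iterates over causes.
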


Thus, even though responsibility is 
harder to compute than resilience (\specificref{Lemma}{lem:res_rsp}), the following algorithm
computes the set of tuples of maximum responsibility by repeatedly computing resilience.

\begin{myalgorithm}[Computing max responsibility set, $S_\rho$, using resilience]\label{alg: rsp from res}
\begin{enumerate}\itemsep=0pt
\item Let $C$ be the set of causes of $D\models q$
\item Let $\Gamma$ be a minimum contingency set for $(q,D)$
\item $k := \abs{\Gamma}; \; S := \Gamma$
\item \textbf{for each} $\vec c \in C - S$:
\item \hspace*{.25in} Let $\Gamma'$ be a minimum contingency set for $(q,D-\set{\vec c})$
\item \hspace*{.25in} \textbf{if} $\;\abs{\Gamma'}=k-1$: \quad $S := S \cup \Gamma' \cup \set{\vec c}$
\item \textbf{return}$(S)$
\end{enumerate}
\end{myalgorithm}

\section{Related Work}\label{sec:relatedWork}
\specificref{Sections}{sec:intro} and~\ref{sec:background} have extensively discussed prior work and the connections between resilience, 
deletion propagation and responsibility~\cite{Buneman:2002,Cong12,Kimelfeld12,KimelfeldVW12}.
In this section, we discuss additional related work.

\textbf{Data provenance.}
Data provenance studies formalisms that can characterize the 
relation between the input and the output of a given query~\cite{DBLP:conf/icdt/BunemanKT01,DBLP:journals/ftdb/CheneyCT09,DBLP:journals/tods/CuiWW00,GKT07-semirings}.
Among the kinds of provenance, ``Why-provenance'' is the most closely related to resilience in databases.
The motivation behind Why-provenance is to find the ``witnesses'' for the query answer, i.e., the tuples or group of tuples in the input 
that can produce the answer. Resilience, searches to find a \emph{minimum} set of input tuples that can make a query false.

\textbf{View updates.}
The view update problem is a classical problem studied in the database literature~\cite{Bancilhon81,Cong12,Cosmadakis84,Dayal82,Fagin83,Keller85}.
In its general form, the problem consists of finding the set of operations that should be applied to the database in order to
obtain a certain modification in the view. Resilience and deletion propagation are a special cases of view updates.

\textbf{Causality.}
The study of causality is important in many areas other than databases, for
example in Artificial Intelligence and philosophy. Although an intuitive
concept, it is difficult to formally define causality and many authors have
presented possible definitions of causality. In our prior work, the notions of
causality and responsibility were strongly inspired by the work of Halpern and
Pearl~\cite{ChocklerH04,HalpernPearl:Cause2005}. Causal reasoning is based on
the idea of \emph{interventions}: understand how changes of input variables
affect an outcome, and thus relates in spirit to resilience. In the case of
resilience, the intervention is the deletion of input tuples. In
\Autoref{sec:outlook} we provide some additional discussion on how resilience
can address some applications of causality, and it has the benefit that it is
easier to compute than responsibility.

\textbf{Explanations in Databases.}
Providing explanations to query answers is important because it can help identify inconsistencies and errors
in the data, as well as understand the data and queries that operate on it.
Causality can provide a framework for explanations of 
query results~\cite{MeliouGMS11,MeliouGNS2011}, but it relies on the computation of responsibility, which is a harder problem than resilience.
Other work on explanations also applies interventions, but on the queries instead of the data~\cite{SudeepaSuciu14,DBLP:journals/pvldb/0002M13}. These approaches, try to understand how the deletion, addition, or modification of predicates may affect the result of a query.
There are also other approaches on deriving explanations that focus on specific database applications 
\cite{Agarwal+2007,Barman+2007,sigmod14-bender,Fabbri+2011,Khoussainova+2012,Thirumuruganathan+2012}. Finally, the problem of explaining \emph{missing} query results \cite{DBLP:conf/sigmod/ChapmanJ09,DBLP:journals/pvldb/HerschelH10,DBLP:journals/pvldb/HuangCDN08,DBLP:journals/pvldb/HerschelHT09,DBLP:conf/sigmod/TranC10} is a problem analogous to deletion propagation, but
in this case, we want to add, rather than remove tuples from the view. In this
paper, we focused the definition of resilience with respect to tuple
deletions; extending it to handle other kinds of updates is the topic of
future work.

\section{Discussion and outlook}\label{sec:outlook}
\introparagraph{Summary} This paper presents dichotomy results for the resilience and responsibility of sj-free
conjunctive queries. Our results extend and generalize previous complexity results
on the problem of \emph{deletion propagation with source side-effects} and
causal responsibility.

\introparagraph{Approximation for resilience of sj-free conjunctive queries} The dichotomy results we establish in this work define sets of queries for which we can solve resilience
in polynomial time, and sets of queries for which the problem is \NP-complete. We cannot hope to
find an efficient algorithm for the latter, unless $\mathsf{P}=\NP$, but we can look for an
approximation for the optimal solution. In particular, a constant factor approximation might be also
useful for finding a good approximation for the responsibility problem (see \specificref{Section}{sec: res applied to rsp}). 

\introparagraph{Conjunctive queries with self-joins} In order to complete the study of the complexity of resilience for conjunctive queries, we need to investigate the complexity 
of queries with self-joins. It is known that the problem is \NP-complete for a query as simple as $q
\datarule S(x),R(x,y),S(y)$ \cite{MeliouGMS11}. We suspect that the insights using triads to
characterize the complexity of resilience in the absence of self-joins may still be useful 
in the presence of self-joins.

\introparagraph{Unions of conjunctive queries} It would also be quite interesting to understand the
complexity of computing the resilience for queries that are unions of conjunctive queries, i.e., disjunctions of
conjunctions.  This is a natural extension which we started to explore when trying to generalize our
results about resilience to responsibility.  In particular, there is a natural way to view the
responsibility of a query as the resilience of a union of related queries.

\bibliography{bib/causation}

\begin{thebibliography}{}

\bibitem[\protect\citeauthoryear{Abiteboul, Hull, and Vianu}{Abiteboul
  et~al\mbox{.}}{1995}]{abiteboul-hull-vianu}
{\sc Abiteboul, S.}, {\sc Hull, R.}, {\sc and} {\sc Vianu, V.} 1995.
\newblock {\em Foundations of Databases}.
\newblock Addison-Wesley.

\bibitem[\protect\citeauthoryear{Agarwal, Barman, Gunopulos, Young, Korn, and
  Srivastava}{Agarwal et~al\mbox{.}}{2007}]{Agarwal+2007}
{\sc Agarwal, D.}, {\sc Barman, D.}, {\sc Gunopulos, D.}, {\sc Young, N.~E.},
  {\sc Korn, F.}, {\sc and} {\sc Srivastava, D.} 2007.
\newblock Efficient and effective explanation of change in hierarchical
  summaries.
\newblock In {\em KDD}. 6--15.

\bibitem[\protect\citeauthoryear{Bancilhon and Spyratos}{Bancilhon and
  Spyratos}{1981}]{Bancilhon81}
{\sc Bancilhon, F.} {\sc and} {\sc Spyratos, N.} 1981.
\newblock Update semantics of relational views.
\newblock {\em ACM TODS\/}~{\em 6,\/}~4, 557--575.

\bibitem[\protect\citeauthoryear{Barman, Korn, Srivastava, Gunopulos, Young,
  and Agarwal}{Barman et~al\mbox{.}}{2007}]{Barman+2007}
{\sc Barman, D.}, {\sc Korn, F.}, {\sc Srivastava, D.}, {\sc Gunopulos, D.},
  {\sc Young, N.~E.}, {\sc and} {\sc Agarwal, D.} 2007.
\newblock Parsimonious explanations of change in hierarchical data.
\newblock In {\em ICDE}. 1273--1275.

\bibitem[\protect\citeauthoryear{Bender, Kot, and Gehrke}{Bender
  et~al\mbox{.}}{2014}]{sigmod14-bender}
{\sc Bender, G.}, {\sc Kot, L.}, {\sc and} {\sc Gehrke, J.} 2014.
\newblock Explainable security for relational databases.
\newblock {\em SIGMOD\/}, 1411--1422.

\bibitem[\protect\citeauthoryear{Buneman, Khanna, and Tan}{Buneman
  et~al\mbox{.}}{2001}]{DBLP:conf/icdt/BunemanKT01}
{\sc Buneman, P.}, {\sc Khanna, S.}, {\sc and} {\sc Tan, W.~C.} 2001.
\newblock Why and where: A characterization of data provenance.
\newblock In {\em ICDT}. 316--330.

\bibitem[\protect\citeauthoryear{Buneman, Khanna, and Tan}{Buneman
  et~al\mbox{.}}{2002}]{Buneman:2002}
{\sc Buneman, P.}, {\sc Khanna, S.}, {\sc and} {\sc Tan, W.-C.} 2002.
\newblock On propagation of deletions and annotations through views.
\newblock In {\em PODS}. 150--158.

\bibitem[\protect\citeauthoryear{Chapman and Jagadish}{Chapman and
  Jagadish}{2009}]{DBLP:conf/sigmod/ChapmanJ09}
{\sc Chapman, A.} {\sc and} {\sc Jagadish, H.~V.} 2009.
\newblock Why not?
\newblock In {\em SIGMOD}. 523--534.

\bibitem[\protect\citeauthoryear{Cheney, Chiticariu, and Tan}{Cheney
  et~al\mbox{.}}{2009}]{DBLP:journals/ftdb/CheneyCT09}
{\sc Cheney, J.}, {\sc Chiticariu, L.}, {\sc and} {\sc Tan, W.~C.} 2009.
\newblock Provenance in databases: Why, how, and where.
\newblock {\em Foundations and Trends in Databases\/}~{\em 1,\/}~4, 379--474.

\bibitem[\protect\citeauthoryear{Chockler and Halpern}{Chockler and
  Halpern}{2004}]{ChocklerH04}
{\sc Chockler, H.} {\sc and} {\sc Halpern, J.~Y.} 2004.
\newblock Responsibility and blame: A structural-model approach.
\newblock {\em J. Artif. Intell. Res. (JAIR)\/}~{\em 22}, 93--115.

\bibitem[\protect\citeauthoryear{Cong, Fan, Geerts, Li, and Luo}{Cong
  et~al\mbox{.}}{2012}]{Cong12}
{\sc Cong, G.}, {\sc Fan, W.}, {\sc Geerts, F.}, {\sc Li, J.}, {\sc and} {\sc
  Luo, J.} 2012.
\newblock On the complexity of view update analysis and its application to
  annotation propagation.
\newblock {\em IEEE TKDE\/}~{\em 24,\/}~3, 506--519.

\bibitem[\protect\citeauthoryear{Cosmadakis and Papadimitriou}{Cosmadakis and
  Papadimitriou}{1984}]{Cosmadakis84}
{\sc Cosmadakis, S.~S.} {\sc and} {\sc Papadimitriou, C.~H.} 1984.
\newblock Updates of relational views.
\newblock {\em J. ACM\/}~{\em 31,\/}~4, 742--760.

\bibitem[\protect\citeauthoryear{Cui, Widom, and Wiener}{Cui
  et~al\mbox{.}}{2000}]{DBLP:journals/tods/CuiWW00}
{\sc Cui, Y.}, {\sc Widom, J.}, {\sc and} {\sc Wiener, J.~L.} 2000.
\newblock Tracing the lineage of view data in a warehousing environment.
\newblock {\em ACM TODS\/}~{\em 25,\/}~2, 179--227.

\bibitem[\protect\citeauthoryear{Dayal and Bernstein}{Dayal and
  Bernstein}{1982}]{Dayal82}
{\sc Dayal, U.} {\sc and} {\sc Bernstein, P.~A.} 1982.
\newblock On the correct translation of update operations on relational views.
\newblock {\em ACM TODS\/}~{\em 7,\/}~3, 381--416.

\bibitem[\protect\citeauthoryear{Eiter and Lukasiewicz}{Eiter and
  Lukasiewicz}{2002}]{DBLP:journals/ai/EiterL02}
{\sc Eiter, T.} {\sc and} {\sc Lukasiewicz, T.} 2002.
\newblock Complexity results for structure-based causality.
\newblock {\em Artif. Intell.\/}~{\em 142,\/}~1, 53--89.
\newblock (Conference version in \emph{IJCAI}, 2002).

\bibitem[\protect\citeauthoryear{Eiter and Lukasiewicz}{Eiter and
  Lukasiewicz}{2006}]{EiterL06}
{\sc Eiter, T.} {\sc and} {\sc Lukasiewicz, T.} 2006.
\newblock Causes and explanations in the structural-model approach: Tractable
  cases.
\newblock {\em Artif. Intell.\/}~{\em 170,\/}~6-7, 542--580.

\bibitem[\protect\citeauthoryear{Fabbri and LeFevre}{Fabbri and
  LeFevre}{2011}]{Fabbri+2011}
{\sc Fabbri, D.} {\sc and} {\sc LeFevre, K.} 2011.
\newblock Explanation-based auditing.
\newblock {\em PVLDB\/}~{\em 5,\/}~1, 1--12.

\bibitem[\protect\citeauthoryear{Fagin, Ullman, and Vardi}{Fagin
  et~al\mbox{.}}{1983}]{Fagin83}
{\sc Fagin, R.}, {\sc Ullman, J.~D.}, {\sc and} {\sc Vardi, M.~Y.} 1983.
\newblock On the semantics of updates in databases.
\newblock In {\em PODS}. 352--365.

\bibitem[\protect\citeauthoryear{Green, Karvounarakis, and Tannen}{Green
  et~al\mbox{.}}{2007}]{GKT07-semirings}
{\sc Green, T.~J.}, {\sc Karvounarakis, G.}, {\sc and} {\sc Tannen, V.} 2007.
\newblock Provenance semirings.
\newblock In {\em PODS}. 31--40.

\bibitem[\protect\citeauthoryear{Halpern and Pearl}{Halpern and
  Pearl}{2005}]{HalpernPearl:Cause2005}
{\sc Halpern, J.~Y.} {\sc and} {\sc Pearl, J.} 2005.
\newblock Causes and explanations: A structural-model approach. {P}art {I}:
  Causes.
\newblock {\em Brit.\ J.\ Phil.\ Sci.\/}~{\em 56}, 843--887.
\newblock (Conference version in \emph{UAI}, 2001).

\bibitem[\protect\citeauthoryear{Herschel and Hern{\'a}ndez}{Herschel and
  Hern{\'a}ndez}{2010}]{DBLP:journals/pvldb/HerschelH10}
{\sc Herschel, M.} {\sc and} {\sc Hern{\'a}ndez, M.~A.} 2010.
\newblock Explaining missing answers to {SPJUA} queries.
\newblock {\em PVLDB\/}~{\em 3,\/}~1, 185--196.

\bibitem[\protect\citeauthoryear{Herschel, Hern{\'a}ndez, and Tan}{Herschel
  et~al\mbox{.}}{2009}]{DBLP:journals/pvldb/HerschelHT09}
{\sc Herschel, M.}, {\sc Hern{\'a}ndez, M.~A.}, {\sc and} {\sc Tan, W.~C.}
  2009.
\newblock Artemis: A system for analyzing missing answers.
\newblock {\em PVLDB\/}~{\em 2,\/}~2, 1550--1553.

\bibitem[\protect\citeauthoryear{Huang, Chen, Doan, and Naughton}{Huang
  et~al\mbox{.}}{2008}]{DBLP:journals/pvldb/HuangCDN08}
{\sc Huang, J.}, {\sc Chen, T.}, {\sc Doan, A.}, {\sc and} {\sc Naughton,
  J.~F.} 2008.
\newblock On the provenance of non-answers to queries over extracted data.
\newblock {\em PVLDB\/}~{\em 1,\/}~1, 736--747.

\bibitem[\protect\citeauthoryear{Immerman}{Immerman}{1999}]{Neil-book}
{\sc Immerman, N.} 1999.
\newblock {\em Descriptive Complexity}.
\newblock Springer, New York.

\bibitem[\protect\citeauthoryear{Keller}{Keller}{1985}]{Keller85}
{\sc Keller, A.~M.} 1985.
\newblock Algorithms for translating view updates to database updates for views
  involving selections, projections, and joins.
\newblock In {\em PODS}. 154--163.

\bibitem[\protect\citeauthoryear{Khoussainova, Balazinska, and
  Suciu}{Khoussainova et~al\mbox{.}}{2012}]{Khoussainova+2012}
{\sc Khoussainova, N.}, {\sc Balazinska, M.}, {\sc and} {\sc Suciu, D.} 2012.
\newblock Perfxplain: debugging mapreduce job performance.
\newblock {\em PVLDB\/}~{\em 5,\/}~7, 598--609.

\bibitem[\protect\citeauthoryear{Kimelfeld}{Kimelfeld}{2012}]{Kimelfeld12}
{\sc Kimelfeld, B.} 2012.
\newblock A dichotomy in the complexity of deletion propagation with functional
  dependencies.
\newblock In {\em PODS}. 191--202.

\bibitem[\protect\citeauthoryear{Kimelfeld, Vondr\'{a}k, and
  Williams}{Kimelfeld et~al\mbox{.}}{2012}]{KimelfeldVW12}
{\sc Kimelfeld, B.}, {\sc Vondr\'{a}k, J.}, {\sc and} {\sc Williams, R.} 2012.
\newblock Maximizing conjunctive views in deletion propagation.
\newblock {\em ACM TODS\/}~{\em 37,\/}~4, 24:1--24:37.

\bibitem[\protect\citeauthoryear{Kimelfeld, Vondr\'{a}k, and
  Woodruff}{Kimelfeld et~al\mbox{.}}{2013}]{Kimelfeld:2013}
{\sc Kimelfeld, B.}, {\sc Vondr\'{a}k, J.}, {\sc and} {\sc Woodruff, D.~P.}
  2013.
\newblock Multi-tuple deletion propagation: Approximations and complexity.
\newblock {\em PVLDB\/}~{\em 6,\/}~13, 1558--1569.

\bibitem[\protect\citeauthoryear{Meliou, Gatterbauer, Moore, and Suciu}{Meliou
  et~al\mbox{.}}{2010}]{MeliouGMS11}
{\sc Meliou, A.}, {\sc Gatterbauer, W.}, {\sc Moore, K.~F.}, {\sc and} {\sc
  Suciu, D.} 2010.
\newblock The complexity of causality and responsibility for query answers and
  non-answers.
\newblock {\em PVLDB\/}~{\em 4,\/}~1, 34--45.

\bibitem[\protect\citeauthoryear{Meliou, Gatterbauer, Nath, and Suciu}{Meliou
  et~al\mbox{.}}{2011}]{MeliouGNS2011}
{\sc Meliou, A.}, {\sc Gatterbauer, W.}, {\sc Nath, S.}, {\sc and} {\sc Suciu,
  D.} 2011.
\newblock Tracing data errors with view-conditioned causality.
\newblock In {\em SIGMOD}. 505--516.

\bibitem[\protect\citeauthoryear{Roy and Suciu}{Roy and
  Suciu}{2014}]{SudeepaSuciu14}
{\sc Roy, S.} {\sc and} {\sc Suciu, D.} 2014.
\newblock A formal approach to finding explanations for database queries.
\newblock In {\em SIGMOD}. 1579--1590.

\bibitem[\protect\citeauthoryear{Thirumuruganathan, Das, Desai, Amer-Yahia,
  Das, and Yu}{Thirumuruganathan et~al\mbox{.}}{2012}]{Thirumuruganathan+2012}
{\sc Thirumuruganathan, S.}, {\sc Das, M.}, {\sc Desai, S.}, {\sc Amer-Yahia,
  S.}, {\sc Das, G.}, {\sc and} {\sc Yu, C.} 2012.
\newblock Maprat: meaningful explanation, interactive exploration and
  geo-visualization of collaborative ratings.
\newblock {\em PVLDB\/}~{\em 5,\/}~12, 1986--1989.

\bibitem[\protect\citeauthoryear{Tran and Chan}{Tran and
  Chan}{2010}]{DBLP:conf/sigmod/TranC10}
{\sc Tran, Q.~T.} {\sc and} {\sc Chan, C.-Y.} 2010.
\newblock How to conquer why-not questions.
\newblock In {\em SIGMOD}. 15--26.

\bibitem[\protect\citeauthoryear{Wu and Madden}{Wu and
  Madden}{2013}]{DBLP:journals/pvldb/0002M13}
{\sc Wu, E.} {\sc and} {\sc Madden, S.} 2013.
\newblock Scorpion: Explaining away outliers in aggregate queries.
\newblock {\em PVLDB\/}~{\em 6,\/}~8, 553--564.

\end{thebibliography}
\newpage
\appendix

\section{Nomenclature}\label{sec:nomenclature}

\begin{table}[h!]
\centering
\small
\begin{tabularx}{\linewidth}{  @{\hspace{0pt}} >{$}l<{$}  @{\hspace{2mm}} X @{}} 	
    \toprule
    \multicolumn{2}{l}{\textbf{Notation table}}\\
    \midrule
    D                   & database instance, union of all tuples in the relations, i.e., $D = \bigcup_i R_i$ \\ 
    A_1,\dots, A_m  	& atoms \\  
    A_i^\enSymb, A_i^\exSymb  & endogenous or exogenous atom \\
    \en{D}             & set of endogenous tuples: $\en{D} \subseteq D$ \\
    \ex{D}              & set of exogenous tuples: $\ex{D} = D \setminus \en{D}$    \\  
    D\models q          & $q$ is \texttt{true} in $D$\\
    D\not\models q      & $q$ is \texttt{false} in $D$\\
    \Gamma              & contingency set: subset of endogenous input tuples. $\Gamma \subseteq \en{D}$ \\
    \vec t				& tuple	\\
    \res(q)             & the resilience problem of query $q$\\
    \rsp(q)             & the problem of causal responsibility for query $q$\\
    \source(q)          & deletion propagation with source side-effects\\
    \view(q)            & deletion propagation with view side-effects\\
    q_\triangle         & triangle query $q_\triangle\datarule R(x,y),S(y,z),T(z,x)$\\
    q_\Tri              & tripod query $q_\Tri\datarule A(x),B(y),C(z), W(x,y,z)$\\
	\rats				& rats query $\rats \datarule A(x),R(x,y),S(y,z),T(z,x)$ \\
	\brats				& brats query $\brats \datarule A(x),R(x,y),B(y),S(y,z),T(z,x)$ \\	
    \varphi, \Phi       & a functional dependency (FD), or a set of FDs\\
    \mathcal{H}         & dual hypergraph (or simply hypergraph, in short)   \\
    q^*                 & closure of $q$ under induced rewrites\\
    	\var(A_i)	& set of all variables occurring in atom $A_i$ \\
	\var(q) 	& set of all variables occurring in query $q$\\
    \mathcal{T}         & triad  \\
    \dom(D)				& set of domain elements of $D$ \\
	\langle ab \rangle	& concatenated new domain values \\
	\tau					& tuple with wildcards \\
	\rsp^*(q)			& generalization of $\rsp(q)$ that computes responsibility of tuples with wildcards $\tau$  \\
    \bottomrule
    \end{tabularx}
\end{table}

\end{document}